\theoremstyle{plain}
\newtheorem{theorem}{Theorem}
\newtheorem{lemma}[theorem]{Lemma}
\newtheorem{corollary}[theorem]{Corollary}
\theoremstyle{definition}
\newtheorem{remark}[theorem]{Remark}
\newtheorem{definition}[theorem]{Definition}
 \numberwithin{equation}{section}
 \numberwithin{theorem}{section}
\DeclareMathOperator{\cok}{cok}
\DeclareMathOperator{\Hom}{Hom}
\DeclareMathOperator{\Rep}{Rep}
\DeclareMathOperator{\ev}{ev}
\DeclareMathOperator{\coev}{coev}
\DeclareMathOperator{\Cent}{Cent}
\newcommand\be            {\begin{equation}}
\newcommand\ee            {\end{equation}}
\newcommand\nxt{\noindent\raisebox{.08em}{\rule{.44em}{.44em}}\hspace{.4em}}
\newcommand\vect{\mathcal{V}\hspace{-.5pt}ect}
\newcommand\Fun{\mathcal{F}\hspace{-.5pt}un}
\newcommand\Vir{\mathrm{Vir}}
\newcommand{\CxC}{\Cc \times \Cc^{\mathrm{rev}}}
\newcommand{\CoC}{\Cc \boxtimes \Cc^{\mathrm{rev}}}
\newcommand\eps           {\varepsilon}
\newcommand\id            {id}
\newcommand\Id            {I\hspace{-1pt}d}
\newcommand\one           {{\bf1}}
\newcommand\Cb            {\mathbb{C}}
\newcommand\Hb            {\mathbb{H}}
\newcommand\Nb            {\mathbb{N}}
\newcommand\Rb            {\mathbb{R}}
\newcommand\Zb            {\mathbb{Z}}
\newcommand\Ac            {\mathcal{A}}
\newcommand\Bc            {\mathcal{B}}
\newcommand\Cc            {\mathcal{C}}
\renewcommand\Dc            {\mathcal{D}}
\newcommand\Ec            {\mathcal{E}}
\newcommand\Hc            {\mathcal{H}}
\newcommand\Ic            {\mathcal{I}}
\newcommand\Mc            {\mathcal{M}}
\newcommand\Oc            {\mathcal{O}}
\newcommand\Pc            {\mathcal{P}}
\newcommand\Qc            {\mathcal{Q}}
\newcommand\Tc            {\mathcal{T}}
\newcommand\Vc            {\mathcal{V}}
\newcommand\Wc            {\mathcal{W}}
\newcommand\Zc            {\mathcal{Z}}
\newcommand\void[1]{}
\begin{document}

\thispagestyle{empty}
\def\thefootnote{\fnsymbol{footnote}}
\begin{flushright}
ZMP-HH/12-1\\
Hamburger Beitr\"age zur Mathematik 430
\end{flushright}
\vskip 3em
\begin{center}\LARGE
Logarithmic bulk and boundary conformal field theory\\
and the full centre construction
\end{center}

\vskip 2em
\begin{center}\large
  Ingo Runkel%
  $^{a}$\footnote{Email: {\tt ingo.runkel@uni-hamburg.de}},
  Matthias R. Gaberdiel%
  $^{b}$\footnote{Email: {\tt gaberdiel@itp.phys.ethz.ch}}\,%
  and
  Simon Wood%
  $^{c}$\footnote{Email: {\tt simon.wood@ipmu.jp}}
\end{center}
\begin{center}\it$^a$
Fachbereich Mathematik, Universit\"at Hamburg \\
Bundesstra\ss e 55, 20146 Hamburg, Germany
\end{center}
\begin{center}\it$^b$
Institute for Theoretical Physics, ETH Z\"urich \\
8093 Z\"urich, Switzerland
\end{center}
\begin{center}\it$^c$
IPMU, The University of Tokyo\\
5-1-5 Kashiwanoha, Kashiwa, 277-8583, Japan
\end{center}

\vskip 2em
\begin{center}
  January 2012
\end{center}

\vskip 2em

\begin{abstract}
We review the definition of bulk and boundary conformal field theory in a way suited for logarithmic conformal field theory. The notion of a maximal bulk theory which can be non-degenerately joined to a boundary theory is defined. The purpose of this construction is to obtain the more complicated bulk theories from simpler boundary theories. We then describe the algebraic counterpart of the maximal bulk theory, namely the so-called full centre of an algebra in an abelian braided monoidal category. Finally, we illustrate the previous discussion in the example of the $W_{2,3}$-model with central charge 0.
\end{abstract}

\setcounter{footnote}{0}
\def\thefootnote{\arabic{footnote}}

\newpage

\tableofcontents

\newpage

\section{Introduction}

In two-dimensional conformal field theory, one usually considers correlation functions where the fields have power law singularities as they approach each other, for example $\langle \sigma(z) \sigma(w) \rangle = |z-w|^{1/4}$ for the correlator of two spin fields in the critical Ising model. Power law behaviour occurs if the two fields approaching each other are eigenvectors of the generator of infinitesimal scale transformations $\Delta = L_0+\overline L_0$. In unitary theories one has $\Delta^\dagger = \Delta$, so that $\Delta$ can be diagonalised. In non-unitary theories, however, there is no a priori reason to impose diagonalisability of $\Delta$, and in this case additional logarithmic singularities can occur. For example, in the symplectic fermion model of \cite{Gaberdiel:1998ps} the two-point correlator of the partner of the vacuum state reads $\langle \omega(z) \omega(w) \rangle = 4 \log|z-w|$. 

From the point of view of representation theory, in unitary theories the state spaces are direct sums of irreducible representations of (two copies of) the Virasoro algebra, while in non-unitary theories the indecomposable summands may or may not be irreducible. In fact, if $\Delta$ is not diagonalisable one necessarily finds such non-semi-simple behaviour.

\medskip

A general recipe for constructing examples of non-logarithmic rational conformal field theories, known as the `Cardy case', is as follows. Take all irreducible representations $R_i$ of the chiral symmetry, that is, of the algebra formed by all modes of all holomorphic fields, or, more formally, of a vertex operator algebra $\Vc$. The state space relevant to describe the theory on a cylinder, i.e.\ the space of states on the circle, is $\Hc_\mathrm{bulk} = \bigoplus_i R_i \otimes_{\Cb} R_i^*$, where the sum runs over all irreducibles and $R_i^*$ is the conjugate representation. This theory can be placed on a strip, in which case we have to fix the state space $\Hc_\mathrm{bnd}$
on an interval with prescribed boundary conditions. If the two boundary conditions coincide one may take $\Hc_\mathrm{bnd} = R \otimes_f R^*$, where $R$ is an arbitrary representation of the chiral symmetry $\Vc$ (not necessarily irreducible) and `$\otimes_f$' denotes the fusion tensor product. In particular, if we take $R = \Vc$ then $\Hc_\mathrm{bnd} = \Vc \otimes_f \Vc^* \cong \Vc$. That is, the space of boundary states consists of a single irreducible representation, namely the vacuum representation itself. This leads us to the first theme to keep in mind:
\begin{quote} 
  \textsl{For an appropriate choice of boundary condition, the boundary theory is much simpler than the bulk theory.}
\end{quote}
It turns out that in all rational conformal field theories which can be defined on surfaces with or without boundary and which have a unique bulk vacuum, the boundary theory determines the bulk theory uniquely \cite{Fuchs:2001am,Fjelstad:2005,Kong2006b}. The bulk theory is characterised as the `largest possible one' which can be matched to the given boundary theory. This principle has also been checked for some logarithmic models \cite{Gaberdiel:2006pp,Gaberdiel:2007jv,Gaberdiel:2010rg}. The second theme to keep in mind can be phrased as:
\begin{quote} 
  \textsl{For a given boundary theory, one may find a largest possible bulk theory that can be consistently and non-degenerately joined to the boundary theory. This bulk theory, if it exists, is unique.}
\end{quote}
This principle has also been established in the operator algebraic approach to unitary conformal field theory on the half-plane with Minkowski signature \cite{Longo:2004fu}. (Logarithmic models are not accessible in this setting as the formalism requires unitarity.)

On the representation theoretic side, the construction of the bulk theory as the largest one which fits to a specific boundary theory corresponds to starting from an algebra in an abelian braided monoidal category $\Cc$ and finding its `full centre', a commutative algebra in the product category $\CoC$.

\medskip

This paper consists of three parts. In part one, which is section \ref{sec:bb-corr}, the definition of bulk and boundary conformal field theory in terms of its correlation functions is reviewed. Using this definition, the characterisation of the bulk theory as the `largest one' fitting to a given boundary theory is made precise. 

Part two (section \ref{sec:alg-reform}) provides the algebraic counterparts of the conformal field theory notions in section \ref{sec:bb-corr} in the setting of abelian braided monoidal categories. This part contains a fairly detailed review of the Deligne product of abelian categories, as this will play an important role. The main notion in part two is that of the full centre of an algebra. We will recall its definition, derive some of its properties, and link its definition to the maximality condition of the bulk theory associated to a boundary theory from section \ref{sec:bb-corr}.

Part three (section \ref{sec:W23-model}) investigates one specific example of a logarithmic conformal field theory, namely the $W_{2,3}$-model of central charge zero. We chose this model because on the one hand it is still relatively simple, for example it only involves 13 distinct irreducible representations, but on the other hand each `nice' property from non-logarithmic rational theories which is currently known to be violated in logarithmic models with a finite number of irreducibles is already violated in the $W_{2,3}$-model. We discuss properties of a tentative bulk theory for the $W_{2,3}$-model which can be interpreted as a `logarithmic extension' of the underlying unitary minimal model at $c=0$, i.e.\ the trivial theory with a one-dimensional state space. We consider the Virasoro action on states of generalised weight $(0,0)$ and $(2,0)$ and discuss the operator product expansion of some of these fields. We also find that an analogue of the indecomposability parameter $b$ is equal to $-5$. This value has recently appeared in the discussion of bulk theories with $c=0$ \cite{Vasseur:2011ud}.

\medskip

This paper grew out of two talks given by the first author which were based on the joint works \cite{Gaberdiel:2006pp,Gaberdiel:2007jv,Gaberdiel:2009ug,Gaberdiel:2010rg}. We have tried to make this paper to some extent self-contained. In consequence it became slightly lengthy and contains a large amount of review material. Nonetheless, there are also some new results which we briefly list: the discussion of ideals for homomorphisms of conformal field theories in section \ref{sec:genout-ideal}; the reformulation of the computation of the maximal bulk theory in purely categorical language in section \ref{sec:fullcentre} and table \ref{tab:alg-cft-table2}; the treatment in section \ref{sec:alg-reform} of a class of abelian monoidal categories more general than finite tensor categories (as defined in \cite{Etingof:2003}); the existence proof of the full centre in this setting in theorem \ref{thm:full-centre-via-left-centre}; the calculation of the analogue of the indecomposability parameter $b=-5$ in the $W_{2,3}$-bulk theory $R(\one^*)$ and the operator product expansions in this model in section \ref{sec:R1-OPE}.

\newpage

\section{Bulk and boundary correlators} \label{sec:bb-corr}

In this section we give a definition of conformal field theory on the complex plane and on the upper half plane in terms of correlation functions. The presentation is tailored to be self-contained and to make the relation to the algebraic concepts in section \ref{sec:alg-reform} apparent.

\medskip\noindent
{\bf Bibliographical note:} This section is mostly a review. The characterisation of CFT on the complex plane in terms of correlators and operator product expansion is used in \cite{Belavin:1984vu}. Axiomatic formulations close in spirit to the one presented below are \cite{Gaberdiel:1998fs,Huang:2005}  (other approaches can be found in \cite{Friedan:1986ua,Vafa:1987ea,Segal1988,Kapustin:2000aa,HuKriz2004}). The point that the requirement of modular invariance poses severe constraints on a CFT was stressed in \cite{Cardy:1986ie,Cappelli:1987xt}. CFT on the upper half plane as presented below was developed in \cite{Cardy:1984bb,Cardy:1991tv,Lewellen:1991tb}; an axiomatic formulation can be found in \cite{Kong2006b}. The idea to obtain the CFT on the complex plane from correlators of boundary fields was first implemented in \cite{Runkel:1998pm,Runkel:1999dz} and further developed in the context of non-logarithmic rational CFT in \cite{Fuchs:2001am,Fjelstad:2005,Kong:2008ci}. The first application of this principle to logarithmic models can be found in \cite{Gaberdiel:2006pp,Gaberdiel:2007jv}.

\subsection{Consistency conditions for CFT on the complex plane} \label{sec:CFT-C}

We will take the point of view that a two-dimensional conformal field theory (or any statistical or quantum field theory in any dimension, for that matter) is defined in terms of its correlation functions. That is, we are given a \textsl{space of fields} $F$, which is a $\Cb$-vector space whose elements we call \textsl{fields}. The space $F$ is typically infinite dimensional, because with each field it contains all its derivatives (see remark \ref{rem:CFT-plane}\,(iii) below for a precise statement). In addition we have a \textsl{collection of correlators} $(C_n)_{n \in \Zb_{>0}}$. We call  $C_n$ an $n$-\textsl{point correlator}. It assigns a complex number to $n$ fields and $n$ mutually distinct complex numbers, i.e.\
\be
  C_n : (\Cb^n{\setminus}\text{diag}) \times F^n \longrightarrow \Cb \ ,
\ee
where $\Cb^n{\setminus}\text{diag}$ stands for points $(z_1,\dots,z_n) \in \Cb^n$ such that $z_i \neq z_j$ for $i \neq j$. The collection $(C_n)_{n \in \Zb_{>0}}$ must obey the following conditions:
\begin{itemize}
\item[(C1)] Each $C_n$ is smooth in each argument from $\Cb$ and linear in each argument from  $F$.
\item[(C2)] Each $C_n$ is invariant under joint permutation of the arguments in $\Cb^n$ and $F^n$, i.e.\ for each permutation 
$\sigma \in S_n$, 
$$
C_n(z_1,\dots,z_n,\phi_1,\dots,\phi_n) = C_n(z_{\sigma(1)},\dots,z_{\sigma(n)},\phi_{\sigma(1)},\dots,\phi_{\sigma(n)}) \ .
$$
\end{itemize}
In addition, the $C_n$ must allow for an operator product expansion and satisfy invariance conditions, cf.\ (C3)--(C5) below. The customary notation for a correlator is
\be
  C_n(z_1,\dots,z_n,\phi_1,\dots,\phi_n)  \equiv \big\langle\,\phi_1(z_1) \cdots \phi_n(z_n)\,\big\rangle \ ,
\ee
where for us the right hand side is just a notational device. In particular, we do not assign an independent meaning to $\phi(z)$ as an operator. Still, we will say `the field $\phi$ is inserted at position $z$' if the pair $\phi,z$ is an argument of a correlator.

We now turn to the notion of a `short distance expansion' or `operator product expansion'.\footnote{In our setting only the first term makes sense literally, but it is customary to use the second term and abbreviate it as OPE, so we will do the same} The OPE links the $n{+}1$-point and $n$-point correlators. Namely, if two fields, say $\phi_1$ and $\phi_2$, are `close together' in the sense that $z_1$ is closer to $z_2$ than any other insertion point, then
\be \label{eq:ope-v1}
  \big\langle\,\phi_1(z_1) \phi_2(z_2) \phi_3(z_3) \cdots\big\rangle 
  =
  \sum_\alpha f_{\phi_1,\phi_2}^\alpha(z_1-z_2) \cdot \big\langle\,\varphi_\alpha(z_2)  \phi_3(z_3) \cdots\,\big\rangle  \ .
\ee
Here $\varphi_\alpha$ is some basis of $F$ and $f_{\phi_1,\phi_2}^\alpha(x)$ are functions which do not depend on how many or which fields are part of the correlator, apart from $\phi_1$ and $\phi_2$. 

\medskip

More formally, we demand that $F$ is a direct sum $F = \bigoplus_{\Delta \in \Rb} F^{(\Delta)}$ where $F^{(\Delta)}$ are the fields of `generalised scaling dimension' $\Delta$, and we demand that $F$ is bounded below in the sense that $F^{(\Delta)}=0$ for $\Delta \ll 0$. We define $\overline F$ to be the algebraic completion, i.e.\ the direct product $\overline F = \prod_{\Delta \in \Rb} F^{(\Delta)}$. 
The OPE is a map
\be \label{eq:Mz-def}
  M : \Cb^\times \times F \otimes_{\Cb} F \longrightarrow \overline F  
  \quad , \quad (z,v) \mapsto M_z(v) \ ,
\ee
which is linear in $F \otimes_{\Cb} F$. In the notation of \eqref{eq:ope-v1}, this amounts to writing $M_x(\phi_1 \otimes \phi_2) =  \sum_\alpha f_{\phi_1,\phi_2}^\alpha(x) \cdot \varphi_\alpha$, where the sum is typically infinite, hence the need for a completion. 
Note that $\overline F$ comes with canonical projections to `states with scaling dimension $\Delta$ or less', $P_\Delta : \overline F \to \bigoplus_{d \le \Delta} F^{(d)}$. With the help of these, we formulate the OPE condition:
\begin{itemize}
\item[(C3)] 
For $n \ge 1$, $\phi_1,\dots,\phi_{n+1} \in F$, and $(z_1,\dots,z_{n+1}) \in \Cb^{n+1}{\setminus}\text{diag}$ such that $|z_1-z_2| < |z_k-z_2|$ for $k>2$, we have
\be\label{eq:OPE-condition}
\begin{array}{l}
\displaystyle
  C_{n+1}\big(z_1,z_2,z_3,\dots,z_{n+1},\phi_1,\phi_2,\phi_3,\dots,\phi_{n+1}\big)
  \\[.5em]
\displaystyle
  = \lim_{\Delta \to \infty}
  C_{n}\big(z_2,z_3,\dots,z_{n+1},P_\Delta \circ M_{z_1-z_2}(\phi_1 \otimes \phi_2),\phi_3,\dots,\phi_{n+1}\big) \ .
\end{array}
\ee
\end{itemize}
The limiting procedure in \eqref{eq:OPE-condition} is necessary because $C_n$ is defined only on $F$, not on $\overline F$. The existence of the limit is a non-trivial requirement. In fact, if $|z_1-z_2| \ge |z_k-z_2|$ for some $k>2$, the expression on the right will typically diverge for $\Delta \to \infty$. That (C3) is only formulated for the first two arguments of $C_{n+1}$ is not a restriction due to the permutation invariance imposed in (C2). 

\begin{remark}
In addition to (C3), one often requires the existence of a translation invariant vacuum vector, that is, a vector $\Omega \in F^{(0)}$ such that $\langle \Omega(\zeta) \phi_1(z_1) \cdots \phi_n(z_n) \rangle = \langle \phi_1(z_1) \cdots \phi_n(z_n) \rangle$ for $n \ge 1$. We prefer not to include this as an axiom because our example in section \ref{sec:W23-model} below (conjecturally) satisfies (C1)--(C3), as well as (C4) and (C5$'$) to be discussed below,
while not having a vacuum vector.
\end{remark}

Finally, let us describe the coinvariance conditions. Denote by $\Vir$ the Virasoro algebra. We demand the following properties of $F$:
\begin{itemize}
\item $F$ is equipped with the structure of a $\Vir \oplus \Vir$-module. The generators of the first copy of $\Vir$ are denoted by $L_n$ and $C$, and those of the second copy by $\overline L_n$ and $\overline C$. 
\item $F$ has a direct sum decomposition into spaces $F^{(\Delta)}$ of generalised ($L_0+\overline L_0)$-eigenvalue $\Delta$; this decomposition satisfies $F^{(\Delta)}=0$ for $\Delta \ll 0$. (This was already imposed above.)
\item $F$ is locally finite as a $\Cb L_0 \oplus \Cb \overline L_0$ module. This means that acting with $L_0$ and $\overline L_0$ on any vector $v \in F$ generates a finite-dimensional subspace.
\end{itemize}
The last condition guarantees in particular that the exponentials $\exp(\lambda L_0)$ and $\exp(\lambda \overline L_0)$, for $\lambda \in \Cb$, are well defined operators on $F$. The condition holds automatically if all $F^{(\Delta)}$ are finite-dimensional.

There are two types of coinvariance conditions. The first one is easy to formulate and allows one to move insertion points:
\begin{itemize}
\item[(C4)] 
For $n \ge 1$, $\phi_1,\dots,\phi_n \in F$, and $z_1,\dots,z_n \in \Cb^n{\setminus}\text{diag}$,
\be \begin{array}{l}
\frac{d}{dz_1} C_n\big(z_1,\dots,z_n,\phi_1,\dots,\phi_n\big)
\,=\, C_n\big(z_1,\dots,z_n,\,L_{-1} \phi_1\,,\dots,\phi_n\big) \ ,
\\[.5em]
\frac{d}{d\bar z_1} C_n\big(z_1,\dots,z_n,\phi_1,\dots,\phi_n\big)
\,=\, C_n\big(z_1,\dots,z_n,\,\overline L_{-1} \phi_1\,,\dots,\phi_n\big) \ .
\end{array}
\ee
\end{itemize}
By permutation invariance, the fact that (C4) is formulated only for the first argument only is not a restriction.

The second type of coinvariance condition is a bit more involved.
Let $f$ be a meromorphic function on $\Cb \cup \{ \infty \}$ (i.e.\ a rational function)
which has poles at most at the points $z_1,\dots,z_n$ and $\infty$, and which satisfies the growth condition $\lim_{\zeta \to \infty} \zeta^{-3} f(\zeta) = 0$. Denote the expansion parameters around each of the $z_k$ as $f(\zeta) = \sum_{m=-\infty}^\infty f^k_m \cdot (\zeta - z_k)^{m+1}$.
\begin{itemize} 
\item[(C5)] 
For $n \ge 1$, $\phi_1,\dots,\phi_n \in F$, and $z_1,\dots,z_n \in \Cb^n{\setminus}\text{diag}$, and for all $f$ as above,
\be \label{eq:mode-moving}
  \sum_{k=1}^n
  \sum_{m=-\infty}^\infty f^k_m \cdot
  C_n(z_1,\dots,z_n,\phi_1,\dots,\,L_m \phi_k\,,\dots,\phi_n) = 0 \ ,
\ee
and the same condition with $\overline L_m$ in place of $L_m$.
\end{itemize}
The sum over $m$ in \eqref{eq:mode-moving} is actually finite: Since $f$ is meromorphic, $f^k_m = 0$ for $m \ll 0$, and since the grading by generalised scaling dimensions on $F$ is bounded from below, $L_m \phi_k = 0$ for $m \gg 0$.

\begin{remark} \label{rem:bulk-stress}
In place of (C5) one could put the stronger requirement of the existence of a stress tensor. This would be a pair of fields $T,\overline T \in F^{(2)}$ (called the holomorphic and anti-holomorphic component of the stress tensor) such that $\overline L_{-1} T=0$ and $L_{-1} \overline T=0$, and
\be \label{eq:T-OPE}
  M_z(T \otimes \phi) = \sum_{m=-\infty}^\infty z^{-m-2} \, L_m \phi \quad , \quad
  M_z(\overline T \otimes \phi) = \sum_{m=-\infty}^\infty \bar z^{-m-2} \, \overline L_m \phi \ .
\ee
Note that $M_z(T \otimes \phi)$ and $M_z(\overline T \otimes \phi)$ 
are elements of $\overline F$, as they should be. Furthermore, one requires that the limit $\lim_{\zeta \to\infty} |\zeta|^4 \langle T(\zeta) \phi_1(z_1) \cdots \phi_n(z_n) \rangle$ exists for all $n$ and all $z_i$, $\phi_i$, and similar for $\overline T$ (this is $sl(2,\Cb)$-invariance of the out vacuum). The conditions \eqref{eq:mode-moving} arise from the contour integral
\be
  \frac{1}{2 \pi i} \oint  f(\zeta) \cdot \big\langle T(\zeta) \phi_1(z_1) \cdots \phi_n(z_n) \big\rangle\, d\zeta = 0
\ee
where the contour is a big circle enclosing $z_1,\dots,z_n$. Deforming the contour to a union of small circles, one around each $z_i$, and applying the OPE \eqref{eq:T-OPE} results in \eqref{eq:mode-moving}.
\end{remark}

\begin{remark}\label{rem:sphere1}
(i) One consequence of (C5) is that all correlators are translation invariant,
\be
\big\langle \phi_1(z_1{+}s) \cdots \phi_n(z_n{+}s) \big\rangle ~=~
\big\langle \phi_1(z_1) \cdots \phi_n(z_n) \big\rangle 
\qquad \text{for all} ~~ s \in \Cb \ .
\ee
To see this apply (C5) to the constant function $f=1$, in which case $f^k_m = \delta_{m,-1}$ for $k=1,\dots,n$ and so $\sum_{k=1}^n C_n(z_1,\dots,z_n,\phi_1,\dots,L_{-1} \phi_k,\dots,\phi_n) = 0$. Combining the corresponding relation for $\overline L_{-1}$ with (C4) yields translation invariance. Along the same lines one shows covariance (not invariance) under M\"obius transformations which map none of the points $z_1,\dots,z_n$ to infinity.
\\[.3em]
(ii) A CFT on $\Cb$ can be used to define $n$-point correlators on the Riemann sphere. This is done by choosing an isomorphism from the Riemann sphere to $\Cb \cup \{ \infty \}$ such that no field insertion gets mapped to infinity and by then evaluating $C_n$ on the resulting configuration. (One also needs to include local coordinates around the insertions, we skip the details).
\end{remark}

Since the OPE allows one to reduce\footnote{
  Of course, the OPE can only be applied if the condition on the distances of insertion points in (C3) is met. But one can always choose a pair $z_i,z_j$ of distinct points such that $|z_i-z_j|$ is minimal among all distances between pairs of insertion points. If necessary, one can then pick a point $z_i'$ arbitrarily close to $z_i$ such that $|z_i'-z_j|$ is strictly smaller than all other distances. The OPE (C3) applies to the pair $z_i',z_j$ and the value of the correlator at $z_i$ is determined by continuity.}
$C_{n+1}$ to $C_{n}$, all correlators are uniquely determined by the OPE and $C_1$. By translation invariance, $C_1(z,\phi)$ is independent of $z$ and thus yields a function $\Omega^* : F \to \Cb$. It follows from (C5) with $f(\zeta) = (\zeta-z)^{m+1}$ that 
\be
  C_1(z,L_m \phi) = 0 = C_1(z,\overline L_m \phi) \quad \text{for all} \quad  m \le 1 \ . 
\ee
If $\phi \in F^{(\Delta)}$, then by definition $(L_0+\overline L_0 - \Delta)^N \phi = 0$ for some $N>0$. This gives $0 = C_1(z,(L_0+\overline L_0 - \Delta)^N \phi) = (-\Delta)^N C_1(z,\phi)$, 
because, as we just saw, $C_1(z, (L_0+\overline L_0)^k \phi) = 0$ for all $k>0$. It follows that $C_1(z,\phi)$ can be non-zero only if $\Delta=0$.

Let us collect the discussion so far into a definition.

\begin{definition} \label{def:CFT-plane}
A \textsl{conformal field theory on the complex plane} is a triple $(F,M,\Omega^*)$, where
\begin{itemize}
\item $F$ (the \textsl{space of fields}) is a $\Vir \oplus \Vir$-module which is a direct sum of generalised $(L_0{+}\overline L_0)$-eigenspaces $F^{(\Delta)}$ whose generalised eigenvalues are bounded from below, and which is locally finite as a 
$\Cb L_0 \oplus \Cb \overline L_0$ module,
\item $M$ (the \textsl{operator product expansion}) is a function $\Cb^\times \times F \otimes_{\Cb} F \to \overline F$ which is linear in $F \otimes_{\Cb} F$,
\item $\Omega^*$ (the \textsl{out-vacuum}) is a linear map $F^{(0)} \to \Cb$,
\end{itemize}
such that there exists a collection of correlators $(C_n)_{n \in \Zb_{>0}}$ which satisfy (C1)--(C5) and the normalisation condition $C_1(z,\phi) = \langle \Omega^* , \phi \rangle$.
\end{definition}

\begin{remark} \label{rem:CFT-plane}
(i) The definition shows that a CFT contains only a relatively small amount of data which has to satisfy an infinite number of intricate linear and differential equations. It is in fact very hard to prove that a triple $(F,M,\Omega^*)$ gives a CFT. To some extent, the formalism of vertex operator algebras, its representations and intertwining operators was developed with this aim. The VOA formalism allows one to prove that non-logarithmic rational CFTs provide examples of definition \ref{def:CFT-plane}, see \cite{Huang:2005}. We are not aware of a full proof of the existence of a logarithmic CFT in the above sense, e.g.\ using the formalism \cite{Huang:2010}. (This is merely to indicate that logarithmic CFTs are more difficult, not that we doubt their existence).
\\[.3em]
(ii) We have deliberately not included non-degeneracy of the 2-point correlator $\langle \phi(z) \psi(w) \rangle$ into definition \ref{def:CFT-plane}; this will be discussed in the next subsection.
\\[.3em]
(iii) If the space $F$ is finite-dimensional, then the $\Vir \oplus \Vir$-action on $F$ has to be trivial,\footnote{
  All finite dimensional $\Vir$-modules $M$ are trivial. The proof is easy. The Jordan normal form of $L_0$ splits $M$ into generalised $L_0$-eigenspaces. Let $\Lambda\ge0$ be such that all generalised $L_0$-eigenvalues have real parts of absolute value less or equal to $\Lambda$. Since $L_m$ changes the generalised $L_0$-eigenvalue by $-m$, all $L_m$ with $|m|>2 \Lambda$ must act as zero. For $m\neq 0$ and $2N+m \neq 0$ we can write $L_m = [L_{N+m},L_{-N}] / (2N+m)$. For $N$ large enough, both $L_N$ and $L_{m-N}$ act trivially on $M$, and so all $L_m$ with $m \neq 0$ must act trivially. Therefore, also $L_0 = \frac12 [L_1,L_{-1}]$ and $C = 2[L_{2},L_{-2}] - 4 [L_1,L_{-1}]$ act trivially.}
and so in particular $L_{-1}$ and $\overline L_{-1}$ would act trivially on $F$. By (C5) this implies that all correlators are independent of the insertion points. Such a conformal field theory is called a \textsl{topological field theory}.
\end{remark}

Suppose $(F,M,\Omega^*)$ is a conformal field theory. By assumption there exists a collection of correlators $(C_n)_{n \in \Zb_{>0}}$ satisfying (C1)--(C5) and we have seen above that this determines the $C_n$ uniquely. As a small example computation with the above axioms, let us look at $\langle \phi(z) \psi(w) \rangle$. By translation invariance, we may assume $w=0$. By (C3), 
\be
  \langle \phi(z) \psi(0) \rangle = \lim_{\Delta \to \infty} C_1(0,P_\Delta \circ M_{z}(\phi \otimes \psi)) = \langle \Omega^*, M_{z}(\phi \otimes \psi) \rangle \ .
\ee
The limit can be dropped because $\Omega^*$ is non-vanishing only on $F^{(0)}$. Next, by (C5) with $f(\zeta) = \zeta$ we know that $C_2(z,0,(L_0 + z L_{-1})\phi,\psi) + C_2(z,0,\phi,L_0 \psi) = 0$, together with (C4) we find 
\be
  -z\tfrac{d}{dz} \langle \Omega^*, M_{z}(\phi \otimes \psi) \rangle = \langle \Omega^*, M_{z}(L_0 \phi \otimes \psi) \rangle + \langle \Omega^*, M_{z}(\phi \otimes L_0 \psi) \rangle
\ee   
and a corresponding equation with $\tfrac{d}{d\bar z}$ and $\overline L_0$. The solution to these first order differential equations reads
\be \label{eq:explicit-2-pt}
\begin{array}{ll}
\big\langle \phi(z) \psi(0) \big\rangle
&\!\!= \big\langle \Omega^*, M_{1} \circ \exp\!\big\{- \ln(z) (L_0 \otimes \id_F + \id_F \otimes L_0) 
\\[.5em]
& \hspace{9em} 
- \ln(\bar z) (\overline L_0 \otimes \id_F + \id_F \otimes \overline L_0) \big\} \, \phi \otimes \psi \big\rangle \ .
\end{array}
\ee
From this we see two things: when evaluated on $\Omega^*$, $M_z$ is uniquely fixed by $M_1$; and if the action of $L_0$ or $\overline L_0$ has a nilpotent part, the two-point correlators may contain logarithms. In sub-representations of $F$ which are irreducible, $L_0$ acts diagonalisably since $\exp(2 \pi i L_0)$ commutes with all Virasoro modes, and hence by Schur's Lemma has to be a multiple of the identity. In this sense, the appearance of logarithms is linked to (but not equivalent to) the presence of non-semi-simple $\Vir$-modules.

\subsection{Background states, non-degeneracy, and ideals}\label{sec:genout-ideal}

We would like to allow more general out-states -- or background states -- than the out-vacuum $\Omega^*$, namely, we would like to be able to place an arbitrary state from the graded dual of $F$ ``at infinity''. The graded dual of $F$ is by definition the space of linear maps $\overline F \to \Cb$. A more explicit description is\footnote{
  Clearly, every map satisfying the boundedness condition $u(F^{(\Delta)}) = 0$ for $\Delta > \Delta_\mathrm{max}(u)$ is a linear functional on $\overline F$. Suppose conversely that there were a sequence $\{ v_n \}_{n \in \Nb}$ with $v_n \in F^{(\Delta_n)}$ and $\Delta_n \to \infty$ for $n \to \infty$, such that $u(v_n) \neq 0$ for all $n$. Then $u$ would be ill-defined when acting on the element $\sum_n (u(v_n))^{-1} \cdot v_n \in \overline F$.}
\be
  \overline F^* = \big\{ \, u : F \to \Cb \text{ linear } \big| \, \exists \Delta_\mathrm{max}(u) :  u(F^{(\Delta)}) = 0 \text{ for } \Delta > \Delta_\mathrm{max}(u) \big\} \ .
\ee
The graded dual is again a $\Vir \oplus \Vir$-module via $(L_m u)(v) := u(L_{-m}v)$ and $(\overline L_m u)(v) := u(\overline L_{-m}v)$. With this definition, the generalised $(L_0+\overline L_0)$-eigenvalues of $\overline F^*$ are the same as those of $F$ and each element $u \in \overline F^*$ is annihilated by $L_m$ and $\overline L_m$ for large enough $m>0$.

\medskip

Define a \textsl{CFT on $\Cb$ with background states} as a pair $(F,M)$, where $F$ and $M$ are as in definition \ref{def:CFT-plane}. However, for each $u \in \overline F^*$ we now demand the existence of functions $C_n(u|z_1,\dots,z_n,\phi_1,\dots,\phi_n)$, which we will also write as
\be
  {}^u\langle \phi_1(z_1) \cdots \phi_n(z_n) \rangle  \ ,
\ee
and which have to satisfy the following conditions:
\begin{itemize}
\item ${}^u\langle \phi(0) \rangle = u(\phi)$ for all $u \in \overline F^*$, $\phi \in F$.
\item (C1)--(C4) from before, but with $C_n(u|\cdots)$ in place of $C_n(\cdots)$.
\item (C5$'$), which is a modified version of (C5) to be described now.
\end{itemize}
Let $f$ be a rational function on $\Cb \cup \{ \infty \}$ as for (C5), but without imposing the growth condition at infinity. Define the $f^k_m$ as for (C5) and define $f^\infty_m$ via the expansion around infinity: $f(\zeta) = \sum_{m=-\infty}^\infty f^\infty_{m} \zeta^{m+1}$ for $|\zeta|$ larger than all of the $|z_i|$.
\begin{itemize} 
\item[(C5$'$)] 
For $n \ge 1$, $u \in \overline F^*$, $\phi_1,\dots,\phi_n \in F$, and $z_1,\dots,z_n \in \Cb^n{\setminus}\text{diag}$, and for all $f$ as above,
\be \label{eq:mode-moving-generalout}
\begin{array}{l}
\displaystyle
  \sum_{k=1}^n
  \sum_{m=-\infty}^\infty f^k_m \cdot
  C_n(u|z_1,\dots,z_n,\phi_1,\dots,\,L_m \phi_k\,,\dots,\phi_n) 
\\[-.2em]
\displaystyle
  \hspace{10em} = 
  \sum_{m=-\infty}^\infty f^\infty_{-m} \cdot
  C_n(\,L_mu\,|z_1,\dots,z_n,\phi_1,\dots,\phi_n) 
\end{array}
\ee
and the same condition with $\overline L_m$ in place of $L_m$.
\end{itemize}
As for (C5), the sums in \eqref{eq:mode-moving-generalout} only involve a finite number of non-zero terms.

\medskip

Let $(F,M)$ be a CFT on $\Cb$ with background states. Let $\Omega^* \in \overline F^*$ be a primary $sl(2,\Cb)$-invariant state, that is, $L_m \Omega^* = 0 = \overline L_m \Omega^*$ for all $m \ge -1$. Then $(F,M,\Omega^*)$ is a CFT on $\Cb$ in the sense of definition \ref{def:CFT-plane}, with correlators $C_n(\Omega^*|\cdots)$. Indeed, (C5$'$) reduces to (C5) if we fix $u$ to be $\Omega^*$ and impose the growth condition $\lim_{\zeta \to \infty} \zeta^{-3} f(\zeta) = 0$.

\begin{remark} \label{rem:F-trivial}
As was noted in remark \ref{rem:CFT-plane}\,(iii), when $F$ is a trivial $\Vir \oplus \Vir$-module, $(F,M)$ is a topological field theory. One can easily convince oneself that then the pair $(F,M)$ is just a commutative, associative algebra with multiplication $M : F \otimes F \to F$ ($F$ is concentrated in grade $0$, so $\overline F = F$, and $M$ is position independent). Indeed, a useful way to think about a conformal field theory on the complex plane is as a generalisation of a commutative, associative algebra where the product depends on a non-zero complex parameter.
\end{remark}

Continuing the analogy with algebra, let us define a \textsl{homomorphism of CFTs} $(F,M)$ \textsl{and} $(F',M')$ to be a $\Vir \oplus \Vir$-intertwiner $f : F \to F'$ such that $f \circ M_x = M'_x \circ (f \otimes f)$. Since $f(F^{\Delta}) \subset F'{}^{(\Delta)}$, the map $f$ is well-defined as a map $\overline F \to \overline F'$. By an \textsl{ideal} in $F$ we mean a $\Vir \oplus \Vir$-submodule $I$ of $F$ such that for all $\iota \in I$, $\phi \in F$ and $x \in \Cb^\times$
we have $M_x(\iota \otimes \phi) \in \overline I$ and $M_x(\phi \otimes \iota) \in \overline I$ (actually one of the two conditions implies the other). The kernel of a homomorphism is an ideal. Given an ideal $I \subset F$, we obtain a CFT on the quotient $F/I$ such that the canonical projection $\pi : F \to F/I$ is a homomorphism of CFTs.

Another class of examples of ideals is the following. Let $(F,M,\Omega^*)$ be a CFT on $\Cb$. Let $F_0$ be the kernel of the 2-point correlator, i.e.\ fix $z \neq w$ and define
\be \label{eq:F0-def}
  F_0 = \big\{ \eta \in F \,\big|\, \langle \phi(z) \eta(w) \rangle = 0 \text{ for all } \phi \in F \big \} \ .
\ee
From \eqref{eq:explicit-2-pt} one concludes that $F_0$ is independent of $z,w$. It follows from (C5) that $F_0$ is a  $\Vir \oplus \Vir$-submodule of $F$. Let $\eta \in F_0$ and $\phi,\psi \in F$. By expressing  the 3-point correlator $\langle \eta(x) \phi(y) \psi(z) \rangle$ as a limit of two-point correlators via (C3) in two ways, one involving $M_{x-y}(\eta \otimes \phi)$ and one $M_{y-z}(\phi \otimes \psi)$, one sees that $F_0$ is an ideal in $F$. Again because of (C3), a correlator $\langle \phi_1(z_1) \cdots \phi_n(z_n) \rangle$ is zero if at least one of the $\phi_i$ is from $F_0$.

\begin{definition} \label{def:non-deg-CFT}
A conformal field theory on the complex plane $(F,M,\Omega^*)$ is \textsl{non-degenerate} if $F_0$ as defined in \eqref{eq:F0-def} is $\{0\}$.
\end{definition}

\begin{remark} \label{rem:sphere2}
(i) If $\Omega^*(F_0) = \{0\}$, the CFT on the quotient $F/F_0$ has an out-vacuum induced by $\Omega^*$ and is non-degenerate. On the level of correlators, one cannot tell the difference between $F$ and $F/F_0$ and hence it is common to restrict one's attention to non-degenerate CFTs on $\Cb$. However, the device of background states allows one to obtain interesting correlators also for degenerate CFTs.
\\[.3em]
(ii) Let $f : F \to G$ be a homomorphism of the CFTs $(F,M)$ and $(G,N)$. Let $\Gamma^* \in \overline G^*$ be a primary $sl(2,\Cb)$-invariant state. Then $\Omega^* := \Gamma^* \circ f$ is a primary $sl(2,\Cb)$-invariant state in $\overline F^*$. Because of $f \circ M_x = N_x \circ (f \otimes f)$, the correlators of $(F,M,\Omega^*)$ and $(G,N,\Gamma^*)$ are related by
\be \label{eq:corr-after-hom}
  \langle \phi_1(z_1) \cdots \phi_n(z_n) \rangle_{F}
  =
  \langle \phi'_1(z_1) \cdots \phi'_n(z_n) \rangle_{G} ~~,
  \quad \text{where} \quad \phi_i' = f(\phi_i) \ .
\ee
(iii) With the notation of (ii), if $\ker(f) \neq \{0\}$, it follows from \eqref{eq:corr-after-hom} that the CFT $(F,M,\Omega^*)$ is necessarily degenerate. Explicitly, $\langle \phi(z) \psi(w) \rangle_F 
= \langle \Omega^* , M_{z-w}(\phi \otimes \psi) \rangle 
= \langle \Gamma^* , f \circ M_{z-w}(\phi \otimes \psi) \rangle
= \langle \Gamma^* , N_{z-w}(f(\phi) \otimes f(\psi)) \rangle = \langle \phi'(z) \psi'(w) \rangle_G$, so that $\ker(f) \subset F_0$.
\\[.3em]
(iv) If there is an isomorphism $f : F \to \overline F^*$ of $\Vir \oplus \Vir$-modules, one can define the non-degenerate pairing $(u,v) = {}^{f(u)}\langle v(0) \rangle = \langle f(u) ,  v\rangle$ on $F \times F$. This pairing is invariant in the sense that $(L_m u,v) = (u, L_{-m} v)$ and $(\overline L_m u,v) = (u, \overline L_{-m} v)$ for all $u,v \in F$ and $m \in \Zb$. In this situation one can also ask if the inversion $z \mapsto 1/z$ is a symmetry of the theory, i.e.\ if, for all $\phi_i,\psi_i \in F$,
\be
  {}^{f(\psi_1)} \big\langle \phi_1(z_1) \cdots \phi_n(z_n) \, \psi_2(0) \big\rangle
  = {}^{f(\psi_2)} \big\langle \phi_1'(1/z_1) \cdots \phi_n'(1/z_n)\, \psi_1(0) \big\rangle \ ,
\ee
where (see, e.g., \cite[Sect.\,3.2]{Gaberdiel:1999mc})
\be
  \phi_i' = \exp\!\big( \ln(-z_i^{-2}) L_0 + \ln(-\bar z_i^{-2}) \overline L_0 \big) \, \exp\!\big( -z_i^{-1} L_1 - \bar z_i^{-1} \overline L_1 \big) \, \phi_i \ .
\ee
An inversion-covariant CFT with background states provides us with an alternative way to define correlators on the Riemann sphere as compared to remark \ref{rem:sphere1}\,(ii). 
For a Riemann sphere with two or more insertions, choose an isomorphism with $\Cb \cup \{ \infty \}$ which maps one of the insertion points to infinity and evaluate the resulting configuration with $C_n(f(\,\cdot\,)|\cdots)$. Different such choices are related by a M\"obius transformation which maps one of the insertion points (including infinity) to infinity. (As in remark \ref{rem:sphere1}\,(ii) one needs to choose local coordinates around the insertions, we skip the details.)
\end{remark}

In section \ref{sec:R1-OPE}, we will encounter the special situation of a (conjectural) CFT on $\Cb$ with background states $(F,M)$ which has a surjective homomorphism $\pi : F \to \Cb$ to the trivial CFT $(\Cb, \cdot)$ with one-dimensional state space, where `$\cdot$' stands for the product on $\Cb$. Because $\pi$ is a $\Vir \oplus \Vir$-intertwiner, this situation can only occur for $c=0$. If we take $\Omega^* = \pi$ (which is a primary $sl(2,\Cb)$-invariant state in $\overline F^*$) as out-vacuum, the correlators of $(F,M,\Omega^*)$ satisfy
\be \label{eq:project-to-c=0-minmod}
  \langle \phi_1(z_1) \phi_2(z_2) \cdots \phi_n(z_n) \rangle_{F} = \pi(\phi_1) \cdot \pi(\phi_2) \cdots \pi(\phi_n) \ .
\ee
Thus, if we want to tell the theory $(F,M)$ apart from the trivial theory we must consider correlators with background states other than $\Omega^*$. This small observation is the reason for including this subsection.

\subsection{Modular invariant partition functions}\label{sec:mod-inv-pf}

Given a $\Vir \oplus \Vir$-module $F$ as in the definition \ref{def:CFT-plane}, the \textsl{graded trace} of $F$ is
\be
  Z(F;\tau) = \mathrm{tr}_F \Big( \, q^{L_0 - C/24} \, \bar q{}^{\overline L_0 - \overline C/24} \Big) \ ,
\ee
where $q = e^{2 \pi i \tau}$ and $\tau$ is a complex number with $\mathrm{Im}(\tau)>0$. Suppose for the moment that $C$ and $\overline C$ both act on $F$ by multiplication with a number $c$. Then we can rewrite $Z$ as
\be
Z(F;\tau) = \sum_{\Delta} e^{-2 \pi \mathrm{Im}(\tau) \cdot(\Delta-c/12)} \, \, \mathrm{tr}_{F^{(\Delta)}}\! \Big( e^{2 \pi i \mathrm{Re}(\tau)(L_0 - \overline L_0)}\Big) \ .
\ee
The graded trace may be ill-defined, for example $L_0$ and $\overline L_0$ might have infinite common eigenspaces or the sum over $\Delta$ may not converge. If $Z(F;\tau)$ is well-defined, it is a generating function sorting states in $F$  by their scaling dimension (or energy) -- with dual parameter $\mathrm{Im}(\tau)$ -- and by their spin with dual parameter $\mathrm{Re}(\tau)$.

\medskip

Given a conformal field theory on the complex plane $(F,M,\Omega^*)$, one may ask if the set of correlators $C_n$ determined by it is part of a larger family of correlators which allow Riemann surfaces other than the complex plane. The simplest additional surface would be a torus of complex modulus $\tau$,
\be
  T_\tau = \Cb \, / \,(\Zb + \tau \Zb) \ .
\ee
A correlator of $n$ fields on $T_\tau$ is then required to be related to a sum of correlators of $n{+}2$ fields on the Riemann sphere by ``inserting a sum over intermediate states''. Schematically, this is shown in figure \ref{fig:torus-from-plane}. We will not go into any detail, but we point out that the sum is over a basis $\{ \varphi_\alpha \}$ of $F$ and a basis $\{ \varphi'_\alpha \}$ dual to the first basis with respect to the 2-point correlator on the Riemann sphere. For this procedure to make sense, the 2-point correlator has to be non-degenerate. Such correlators on the Riemann sphere can be obtained from a non-degenerate CFT on $\Cb$ via remark \ref{rem:sphere1}\,(ii) or from a CFT with background states and an isomorphism $F \to \overline F^*$ as in remark \ref{rem:sphere2}\,(iv).

\begin{figure}[tb] 
$$
   \raisebox{-25pt}{
  \begin{picture}(100,80)
   \put(0,0){\scalebox{0.50}{\includegraphics{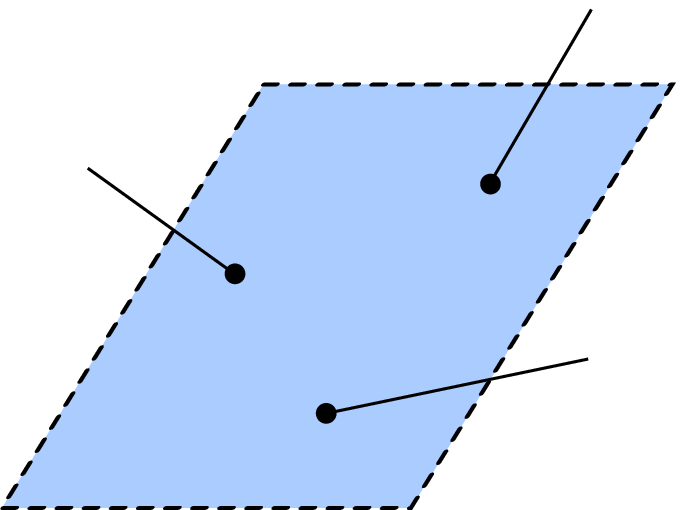}}}
   \put(0,0){
     \setlength{\unitlength}{.50pt}\put(-11,-13){
     \put( 11,110)  {$ \phi_1 $}
     \put(186,155)  {$ \phi_2 $}
      \put(185, 55)  {$ \phi_3 $}
     }\setlength{\unitlength}{1pt}}
  \end{picture}}  
  ~~~\sim~~~ \sum_\alpha~
   \raisebox{-50pt}{
  \begin{picture}(100,120)
   \put(0,8){\scalebox{0.50}{\includegraphics{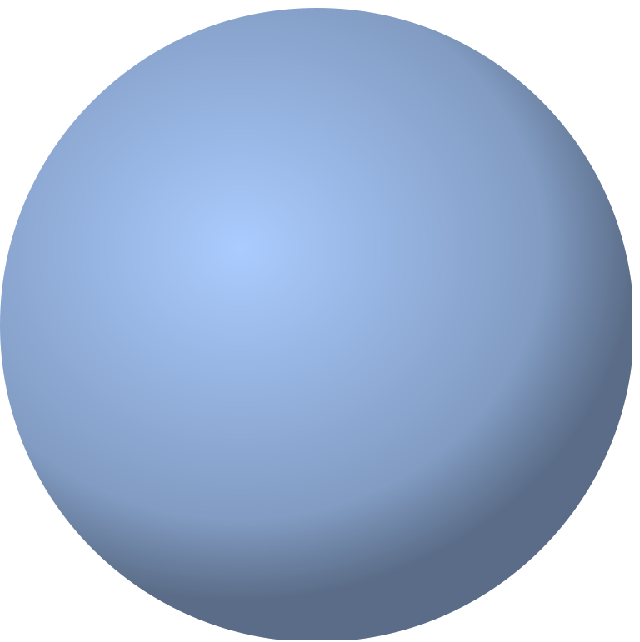}}}
   \put(21,0){\scalebox{0.50}{\includegraphics{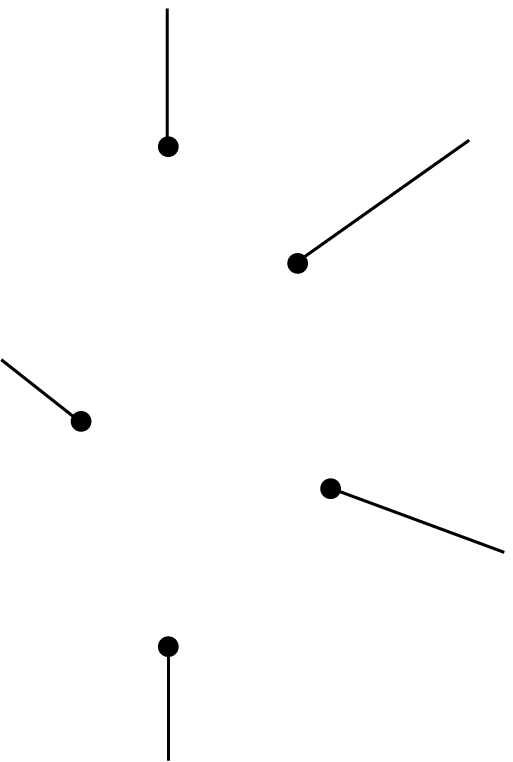}}}
   \put(0,0){
     \setlength{\unitlength}{.50pt}\put(-8,-7){
     \put(93,-8)  {$ \varphi_\alpha $}
     \put(93,238)  {$ \varphi'_\alpha $}
     \put(25,124)  {$ \phi_1 $}
     \put(189,187)  {$ \phi_2 $}
     \put(198,62)  {$ \phi_3 $}
     }\setlength{\unitlength}{1pt}}
  \end{picture}}  
$$
\caption{A correlator of $n$ bulk fields $\phi_i \in F$ on the torus can be expressed as a sum of correlators of $n{+}2$ fields on the Riemann sphere, where the two additional bulk fields are taken from a basis $\{ \varphi_\alpha \}$ of $F$ and its dual basis $\{ \varphi_\alpha' \}$.}
\label{fig:torus-from-plane}
\end{figure}

If the system of correlators on the Riemann sphere form part of a larger collection defined on other Riemann surfaces including the torus, then the amplitude for the torus $T_\tau$ is described by the function $Z(F;\tau)$.
It must therefore only depend on the conformal equivalence class of $T_\tau$, that is, it must be \textsl{modular invariant},
\be
  Z(F;-1/\tau) ~=~ Z(F;\tau{+}1) ~=~  Z(F;\tau) \ .
\ee
The function $Z(F;\tau)$ is called the \textsl{partition function} of the CFT.

\begin{remark} \label{rem:mod-inv-big}
(i) Physically, if the CFT arises as a continuum limit of a two-dimensional critical lattice model, one would expect its partition function to be modular invariant since the lattice model could equally be evaluated in a finite geometry with periodic boundary conditions.
\\[.3em]
(ii) For non-logarithmic rational conformal field theories, modular invariance of the partition function proved to be very constraining. Understanding which  $\Vir \oplus \Vir$-modules $F$ (or $\Vc \otimes_{\Cb} \Vc$-modules for a vertex operator algebra $\Vc$) give rise to a modular invariant graded trace is an important step in attacking classification questions. A typical behaviour in non-logarithmic rational examples is that if $\Vc$ has order $N$ distinct irreducible representations, then a modular invariant $F$ splits into order $N^2$ irreducible direct summands. In this sense, modular invariant CFTs for a fixed $\Vc$ are all ``equally complicated''.\footnote{
  This statement can be made more precise: a non-logarithmic rational CFT with symmetry $\Vc \otimes_{\Cb} \Vc$ which has a unique vacuum and is modular invariant has the property that the categorical dimension of $F$ is equal to the global dimension of $\Rep(\Vc)$, see \cite[Thm.\,3.4]{Kong:2008ci} for details. In particular all such $F$ have the same categorical dimension.}
\end{remark}

\subsection{Consistency conditions for CFT on the upper half plane} \label{sec:CFT-UHP}

\begin{figure}[tb] 
$$
  \raisebox{-35pt}{
  \begin{picture}(165,80)
   \put(0,0){\scalebox{0.50}{\includegraphics{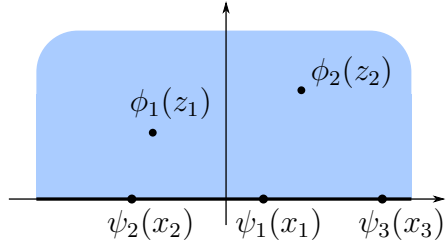}}}
   \put(0,0){
     \setlength{\unitlength}{.50pt}\put(-5,-6){
     \put(97,93)  {$ \phi_1(z_1) $}
     \put(234,116)  {$ \phi_2(z_2) $}
     \put(177,1)  {$ \psi_1(x_1) $}
     \put(80,1)  {$ \psi_2(x_2) $}
     \put(270,1)  {$ \psi_3(x_3) $}
     }\setlength{\unitlength}{1pt}}
  \end{picture}}  
$$
\caption{Bulk and boundary field insertions on the upper half plane. Here $\phi_i \in F$, $\psi_i \in B$ and $\mathrm{Im}(z_i)>0$, $x_i \in \Rb$. The figure describes the correlator $\big\langle \psi_1(x_1)\psi_2(x_2)\psi_3(x_3)\phi_1(z_1)\phi_2(z_2)\big\rangle = U_{3,2}(x_1,x_2,x_3,z_1,z_2,\psi_1,\psi_2,\psi_3,\phi_1,\phi_2)$.}
\label{fig:UHP-correlator}
\end{figure}

The description of conformal field theory on the upper half plane is very similar to that on the complex plane. The main difference is that there are now two spaces of fields: \textsl{bulk fields}, which are the ones already discussed in section \ref{sec:CFT-C} and are inserted in the interior of the upper half plane, and \textsl{boundary fields}, which must be inserted on the real axis, cf.\ figure \ref{fig:UHP-correlator}. Correspondingly, the collection of correlators $(U_{m,n})_{m,n}$ now depends on two integers, $m$ counting the number of boundary fields and $n$ counting the number of bulk fields. Let $\Hb := \{ z \in \Cb \,|\, \mathrm{Im}(z)>0 \}$ be the open upper half plane. Then
\be
  U_{m,n} : (\Rb^m{\setminus}\text{diag}) \times (\Hb^n{\setminus}\text{diag}) \times B^m \times F^n \longrightarrow \Cb \ ,
\ee
where $\Hb^n{\setminus}\text{diag}$ and $\Rb^n{\setminus}\text{diag}$ refer the set of $n$ mutually distinct points. The customary notation is, for $\phi_i \in F$, $\psi_i \in B$, $z_i \in \Hb$ and $x_i \in \Rb$,
\be
\begin{array}{l}
  U_{m,n}(x_1,\dots,x_m,z_1,\dots,z_n,\psi_1,\dots,\psi_m,\phi_1,\dots,\phi_n) 
  \\[.5em]
  \hspace{2em} = \big\langle \psi_1(x_1) \cdots \psi_m(x_m) \phi_1(z_1) \cdots \phi_n(z_n) \big
  \rangle \ .
\end{array}
\ee

\begin{remark}
In the discussion of correlators on the upper half plane above we are implicitly assuming that the entire real axis carries the same boundary condition. In more generality one would allow different intervals to carry different boundary conditions. We will not treat this case explicitly, but we note that it is included in the present formalism: One can always think of the real line with several boundary conditions as a real line with a single boundary condition given by their superposition, together with appropriate boundary field insertions that project to the individual constituents.
\end{remark}

\begin{definition} \label{def:CFT-uhp}
A \textsl{conformal field theory on the upper half plane} is a tuple
$$
(F,M,\Omega^*;B,m,\omega^*;b) \ ,
$$
where
\begin{itemize}
\item $(F,M,\Omega^*)$ is a CFT on the complex plane,
\item $B$ (the \textsl{space of boundary fields}) is a $\Vir$-module which is a direct sum of generalised $L_0$-eigenspaces $B^{(h)}$, whose generalised $L_0$-eigenvalues $h$ are bounded from below.\footnote{
  The space $B$ is automatically locally finite as a $\Cb L_0$ module (cf.\  section \ref{sec:CFT-C} for the definition of `locally finite'). This is so because any vector $v$ in $B$ can be written as a finite sum of vectors $v_h \in B^{(h)}$, and on each $v_h$ we have $(L_0-h)^N v_h = 0$ for some large enough $N$. For $F$, the same argument only gives local finiteness as a $\Cb(L_0 + \overline L_0)$ module, which is why local finiteness as a $\Cb L_0 \oplus \Cb \overline L_0$ module was included as a separate condition.}
\item $m$ (the \textsl{boundary OPE}) is a map $\Rb_{>0} \times B \otimes_{\Cb} B \to \overline B$, linear in $B \otimes_{\Cb} B$,
\item $\omega^*$ (the \textsl{out-vacuum on the upper half plane}) is a linear function $B^{(0)} \to \Cb$,
\item $b$ (the \textsl{bulk-boundary map}) is a map $\Rb_{>0} \times F \to \overline B$, linear in $F$,
\end{itemize}
such that there exists a collection of correlators $(U_{m,n})_{m,n}$, with $m,n \in \Zb_{\ge 0}$ and $(m,n) \neq (0,0)$, which satisfy (B1)--(B5) in appendix \ref{app:B1-B5}, as well as the normalisation condition $U_{1,0}(0,\psi) = \langle \omega^* , \psi \rangle$ for all $\psi \in B$.
\end{definition}

Conditions (B1)--(B5) are the same in spirit as (C1)--(C5), just more tiresome to write down, and they have been moved to an appendix for this reason. Here we merely note that there are now three different types of short distance expansions. The OPE of two bulk fields as in (C3), the expansion of a bulk field $\phi$ close to the boundary in terms of boundary fields via $b_y(\phi) \in \overline B$, and the OPE of two boundary fields $(\psi ,\psi') \mapsto m_x(\psi \otimes \psi') \in \overline B$. 

\medskip

The basic class of examples is provided by the Virasoro minimal models with $A$-series modular invariant. In this case the central charge is $c = 1 - 6 (p{-}q)^2 / pq$ with $p,q \ge 2$ and coprime. Denote by $i$ a Kac-label for that central charge and by $R_i$ the corresponding irreducible representation\footnote{
  Of course there are many more irreducible representations of the Virasoro algebra with this value of the central charge, but only those corresponding to entries in the Kac table are also representations of the simple Virasoro vertex operator algebra with this central charge.
}
of $\Vir$. Then $F = \bigoplus_{i} R_i \otimes_{\Cb} R_i$, where $i$ runs over all Kac-labels (modulo their $\Zb/2$-identification) and for $B$ we can take the vacuum representation $B = R_{(1,1)}$ of $L_0$-weight $0$. There are many more possible spaces of boundary fields for this bulk theory, namely $B = U \otimes_f U^*$, where $U$ is any direct sum of the $R_i$ and $\otimes_f$ denotes the fusion product (resulting again in a direct sum of the $R_i$ according to the fusion rules). 

Let us stress again the point made in the introduction and in remark \ref{rem:mod-inv-big}\,(ii). The space of bulk fields in a modular invariant CFT tends to be `big' in the sense that it involves many different irreducible representations (in logarithmic CFT this should be taken as a statement about the composition series or about the character). On the other hand, there often exists a CFT on the upper half plane with bulk fields $F$ and a much simpler set of boundary fields $B$ involving only very few irreducible representations. One may thus attempt to first gain control over the boundary theory and then try to construct a fitting bulk theory. This is the topic of the next subsection.

\subsection{From boundary to bulk} \label{sec:boundary-to-bulk}

In this subsection we make precise the following idea: Given a boundary theory, i.e.\ a space of boundary fields and their correlators on the upper half plane, try to build the `biggest bulk theory' that can be made to fit to this boundary theory. We will find that this bulk theory, if it exists, is unique. The algebraic version of the question of existence and the description of the data $F$, $M$, and $b$ will be addressed in section \ref{sec:alg-reform}.

\begin{remark}
In sections \ref{sec:CFT-C}--\ref{sec:CFT-UHP} we have discussed CFTs with Virasoro symmetry. This can be generalised to other vertex operator algebras $\Vc$ as underlying symmetry of the CFT. The space of bulk fields $F$ is then a representation of $\Vc \otimes_{\Cb} \Vc$ and the space of boundary fields $B$ a representation of $\Vc$. The coinvariance conditions become those of $\Vc$ and will contain the Virasoro conditions in (C5) and (B5) as a subset. It is important for us to allow this generalisation, even if we avoided spelling out the formalism for general $\Vc$. The reason is that in the examples we study, we want the category of representations $\Rep(\Vc)$ to have certain finiteness properties (in particular a finite number of irreducibles, more details will follow in condition (PF) in section \ref{sec:deligne-prod} below). If we were only to allow $\Vc$ to be the simple Virasoro vertex operator algebra at a given central charge, the finiteness conditions would limit us to (non-logarithmic) minimal models. Therefore, we allow for more general $\Vc$, in particular the vertex operator algebra $\Wc$ at $c=0$ for the $W_{2,3}$-model discussed in section \ref{sec:W23-model}. It is not currently clear to us to which extent the construction below is the right ansatz if we were to drop these finiteness conditions.
\end{remark}

\begin{definition} \label{def:theo-on-bnd}
A \textsl{boundary theory} is a triple $(B,m,\omega^*)$ with $B$, $m$, $\omega^*$ as in definition \ref{def:CFT-uhp}, such that there exists a collection of correlators on the upper half plane $( U_{m,0} )_{m \in \Zb_{>0}}$ involving only boundary fields, and which satisfy (B1)--(B5) restricted to $U_{m,0}$, as well as $U_{1,0}(0,\psi) = \langle \omega^* , \psi \rangle$ for all $\psi \in B$.
\end{definition}

Thus, a CFT on the upper half plane consists of a CFT on the complex plane, a boundary theory, and a consistent interaction between them via the bulk-boundary map. In analogy with definition \ref{def:non-deg-CFT} we say

\begin{definition} \label{def:non-deg-bnd}
A boundary theory $(B,m,\omega^*)$ is called \textsl{non-degenerate} if for all $x,y \in \Rb$ with $x \neq y$ and $\psi \in B$ there is a $\psi' \in B$ such that $\langle \psi(x) \psi'(y) \rangle \neq 0$.
\end{definition}

\begin{remark} \label{rem:bnd-bgstate}
(i)
Continuing from remark \ref{rem:F-trivial}, it is again helpful to briefly consider the much simpler special case of topological field theory. One checks that a non-degenerate boundary theory $(B,m,\omega^*)$ with trivial $\Vir$-action on $B$ is the same as an associative but not necessarily commutative algebra $B$, together with a map $\omega^* : B \to \Cb$ such that $(a,b) \mapsto \langle \omega^* , a \cdot b \rangle$ is a non-degenerate pairing on $B$.
\\[.3em]
(ii)
As in section \ref{sec:genout-ideal} one can introduce boundary theories with background states $(B,m)$ which involve a modified version of condition (B5). We have chosen not to discuss boundary theories with background states in detail. The construction of the `biggest bulk theory' below is therefore formulated in terms of a non-degenerate boundary theory, but one could alternatively use a boundary theory with background states.
\end{remark}

Let us now fix a non-degenerate boundary theory $(B,m,\omega^*)$. The `biggest bulk theory' will be characterised as a terminal object in a category of pairs $\Pc$, which we proceed to define. An object of $\Pc$ is a pair $(F', b')$, where 
\begin{itemize}
\item 
$F'$ is a `candidate space of bulk fields'. Namely it is a $\Vir \oplus \Vir$-module with boundedness condition as in definition \ref{def:CFT-plane} (or more generally a $\Vc \otimes_{\Cb} \Vc$-module).
\item
$b'$ is a `candidate bulk-boundary map'. By this we mean that $b' : \Rb_{>0} \times F' \to \overline B$ as in definition \ref{def:CFT-uhp}, such that there exists a function $U'_{1,1} : \Rb \times \Hb \times B \times F' \to \Cb$ (a `candidate correlator' of one bulk field and one boundary field) which satisfies the derivative property (B4), the coinvariance condition (B5), and which for $|x| > y$ can be expressed through the candidate bulk-boundary map and the boundary 2-point correlator as
\be \label{eq:bulk-bnd-coinv-Pdef}
   U'_{1,1}(x,iy,\psi,\phi) = \lim_{h \to \infty} U_{2,0}(x,0,\psi, P_h \circ b'_{y}(\phi))
   \qquad ; ~~ \psi \in B ~,~~ \phi \in F' \ .
\ee
Here $U_{2,0}$ is a boundary correlator from definition \ref{def:theo-on-bnd} which is uniquely fixed by $(B,m,\omega^*)$, and $P_h$ is the canonical projection $\overline B \to \bigoplus_{d \le h} B^{(d)}$, analogous to $P_\Delta$ in (C3).
\item
$b'$ has to be \textsl{central}, a condition which we will detail momentarily.
\end{itemize}

\begin{figure}[tb] 
\begin{center}
\raisebox{35pt}{a)}
   \raisebox{-35pt}{
  \begin{picture}(165,80)
   \put(0,0){\scalebox{0.50}{\includegraphics{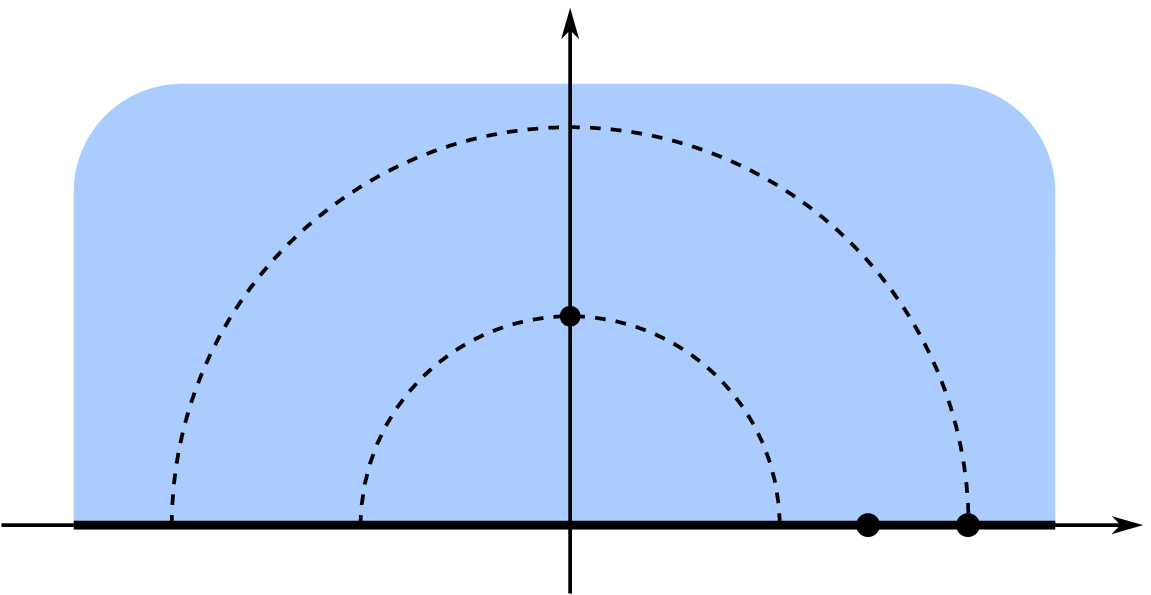}}}
   \put(0,0){
     \setlength{\unitlength}{.50pt}\put(-5,-5){
     \put(153,92)  {\scriptsize $ \phi $}
     \put(176,72)  {\scriptsize $ iy $}
     \put(289,35)  {\scriptsize $ \psi' $}
     \put(255,35)  {\scriptsize $ \psi $}
     \put(280,9)  {\scriptsize $ s $}
     \put(250,9)  {\scriptsize $ x $}
     \put(225,9)  {\scriptsize $ y $}
     \put(40,9)  {\scriptsize $ {-}s $}
     \put(93,9)  {\scriptsize $ {-}y $}
     }\setlength{\unitlength}{1pt}}
  \end{picture}}  
\hspace{5em}
\raisebox{35pt}{b)}
   \raisebox{-35pt}{
  \begin{picture}(165,80)
   \put(0,0){\scalebox{0.50}{\includegraphics{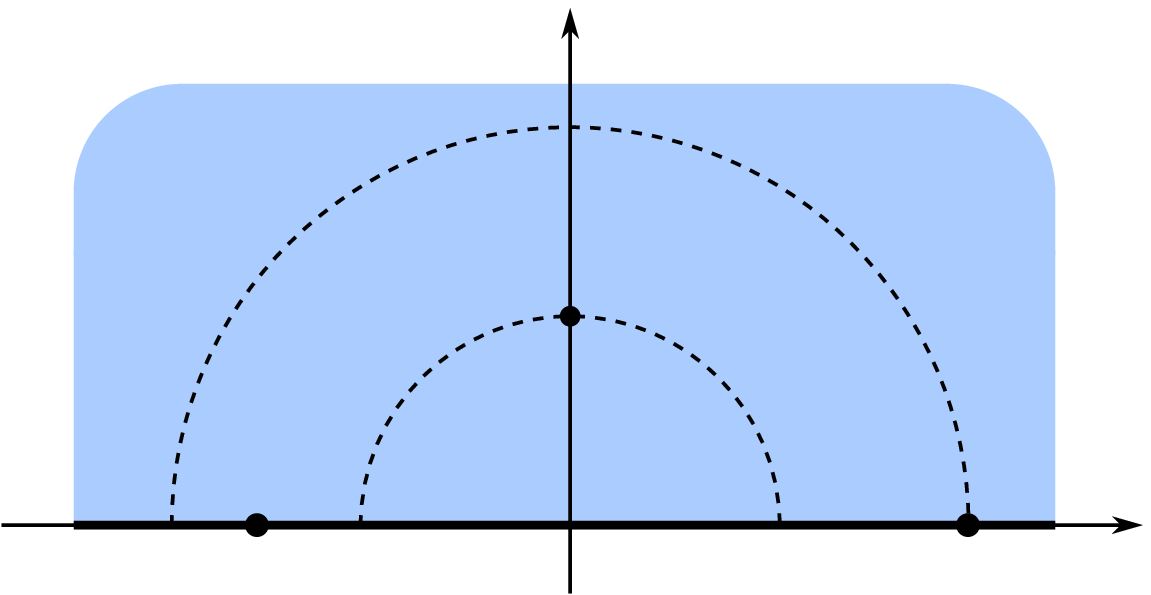}}}
   \put(0,0){
     \setlength{\unitlength}{.50pt}\put(-5,-5){
     \put(153,92)  {\scriptsize $ \phi $}
     \put(176,72)  {\scriptsize $ iy $}
     \put(289,35)  {\scriptsize $ \psi' $}
     \put(77,35)  {\scriptsize $ \psi $}
     \put(280,9)  {\scriptsize $ s $}
     \put(73,9)  {\scriptsize $ x $}
     \put(225,9)  {\scriptsize $ y $}
     \put(40,9)  {\scriptsize $ {-}s $}
     \put(93,9)  {\scriptsize $ {-}y $}
     }\setlength{\unitlength}{1pt}}
  \end{picture}}  
\end{center}
\caption{Geometric setting for the centrality condition. The limit \eqref{eq:U'21-as-lim} defining $U'_{2,1}(s,x,iy,\psi',\psi,\phi)$ is assumed to exist for $y<|x|$. This gives rise to two functions $U_{\pm}(x)$ on the open interval $({-}s,s)$: $U^+(x)$ equals $U'_{2,1}$ for $x \in (y,s)$ as shown in a), while $U^-(x)$ equals $U'_{2,1}$ for $x \in ({-}s,{-}y)$ as shown in b).}
\label{fig:centrality-condition}
\end{figure}

To formulate the centrality condition, we define a candidate correlator $U'_{2,1}$ of two boundary fields $\psi,\psi' \in B$ and one bulk field $\phi \in F'$ via 
\be \label{eq:U'21-as-lim}
   U'_{2,1}(s,x,iy,\psi',\psi,\phi) = \lim_{h \to \infty} U_{3,0}(s,x,0,\psi',\psi,P_h \circ b'_{y}(\phi)) \ ,
\ee
at least for $y<|x|<s$ (we take $s>0$); we assume (as part of the centrality condition) that the limit exists. There are then two disconnected domains for $x$: it can be in $(y,s)$ or in $({-}s,{-}y)$, see figure \ref{fig:centrality-condition} for an illustration. We now try to use the derivative property (B5) in the form 
\be
  \tfrac{d}{dx} U'_{2,1}(s,x,iy,\psi',\psi,\phi) = U'_{2,1}(s,x,iy,\psi',L_{-1} \psi,\phi)
\ee
to extend the function $U'_{2,1}$ to all of $({-s},s)$. Depending on whether we start from $(y,s)$ or in $({-}s,{-}y)$, we a priori obtain two different functions  $U^+(x)$ and $U^-(x)$ on $({-s},s)$. We call $b'$ \textsl{central} if these two extensions coincide: $U^+(x) = U^-(x)$ for all $x \in ({-}s,s)$.

The centrality condition holds automatically in a CFT on the upper half plane (because the correlator $U_{2,1}$ is a smooth function and satisfies the expansion conditions (B3)). The point here, of course, is to impose only a small subset of the conditions a CFT has to satisfy. For example, to define the pairs $(F',b')$ we are only ever looking at candidate correlators with one bulk field and one or two boundary fields.

\medskip

But back to the category of pairs $\Pc$. Now that we have defined its objects, it is easy to give the space of morphisms from $(F',b')$ to $(F'',b'')$. It consists of all $\Vir \oplus \Vir$-intertwiners $f : F' \to F''$ (or more generally $\Vc \otimes_{\Cb} \Vc$-intertwiners) such that the diagram of maps
\be
\raisebox{2em}{\xymatrix@R=1.5em@C=1.5em{
F' \ar[rr]^{f} \ar[dr]_-{b'_y} && F'' \ar[dl]^-{b''_y}
\\
& \overline B
}}
\ee
commutes for all $y>0$.

An object $T$ in a category $\Cc$ is called \textsl{terminal} if for every object $U \in \Cc$ there exists a unique morphism $U \to T$. A category $\Cc$ may or may not have a terminal object, but if one exists, it is unique up to unique isomorphism (take two terminal objects $T$ and $T'$ and play with maps between them). We have now gathered all ingredients to define:
\be \label{eq:terminal-obj-in-P}
  (F(B),b(B))
  ~\text{ is a terminal object in }~ \Pc \ .
\ee  
We want to think of $F(B)$ as the maximal space of bulk fields which can be consistently joined to our prescribed boundary theory $(B,m,\omega^*)$, and of course we take $b(B)$ as the bulk-boundary map. The next lemma makes this interpretation precise.

\begin{lemma} Let $B$ be a non-degenerate boundary theory. Let $(F(B),b(B))$ be a terminal object in $\Pc$ and let $(F',b')$ be an arbitrary object in $\Pc$.
\begin{itemize}
\item[(i)] The kernel of $b'_y : F' \to \overline B$ is independent of $y$.
\item[(ii)]  The map $b(B)_y : F(B) \to \overline B$ is injective for each $y>0$.
\item[(iii)]  If $b'_y : F' \to \overline B$ is injective for $y>0$, then there is an injective $\Vir \oplus \Vir$-intertwiner $\iota : F' \to F(B)$ such that $b'_y = b(B)_y \circ \iota$ for all $y>0$.
\end{itemize}
\end{lemma}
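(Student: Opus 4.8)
The key technical tool I would set up first is the behaviour of the candidate bulk--boundary map under the scale transformations $z\mapsto\lambda z$, $\lambda\in\Rb_{>0}$, which preserve both $\Hb$ and $\Rb$. Applying the coinvariance condition (B5) with $f(\zeta)=\zeta$ together with the derivative property (B4) to the candidate correlator $U'_{1,1}$ yields its dilation covariance, $U'_{1,1}(\lambda x,i\lambda y,\psi,\phi)=U'_{1,1}(x,iy,\lambda^{-L_0}\psi,\lambda^{-(L_0+\overline L_0)}\phi)$, where the exponentials $\lambda^{L_0}=\exp((\ln\lambda)L_0)$ are well defined and invertible by local finiteness. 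Since by \eqref{eq:bulk-bnd-coinv-Pdef} the value $U'_{1,1}(x,iy,\psi,\phi)$ is the boundary two-point pairing of $\psi$ with $b'_y(\phi)$, and this pairing is non-degenerate, the matching dilation covariance of the boundary correlator $U_{2,0}$ lets me strip off $\psi$ and obtain the operator identity $b'_{\lambda y}(\phi)=\lambda^{L_0}\,b'_y(\lambda^{-(L_0+\overline L_0)}\phi)$ on $\overline B$. This single relation is what drives all three statements.

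For (i), the displayed identity gives $\ker b'_{\lambda y}=\lambda^{L_0+\overline L_0}\ker b'_y$ for every $\lambda>0$, because $\lambda^{L_0}$ is injective on $\overline B$. Thus independence of the kernel from $y$ is equivalent to stability of $\ker b'_y$ under $\lambda^{L_0+\overline L_0}$, that is, under $L_0+\overline L_0$ including its nilpotent part. I would deduce this from the coinvariance (B5), in the same spirit as the argument identifying the kernel $F_0$ of the bulk two-point function as a submodule: if $\phi\in\ker b'_y$ then $U'_{1,1}(x,iy,\psi,\phi)$ vanishes for all $x$ and $\psi$, and (B5) moves a Virasoro mode off $\phi$ onto the boundary field $\psi$, where it produces only correlators with the same $\phi$, which again vanish. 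The point requiring care is that, for a bulk insertion on the upper half plane, (B5) couples the holomorphic modes $L_m\phi$ to the anti-holomorphic modes $\overline L_m\phi$ through the reflection across $\Rb$, so one must treat the two chiralities together. The translation identity ($f=1$) already gives stability of $\ker b'_y$ under $L_{-1}+\overline L_{-1}$; upgrading this to stability under $L_0+\overline L_0$, equivalently ruling out that the kernel rotates within a generalised $(L_0+\overline L_0)$-eigenspace as $y$ varies, is the heart of the matter and is where the logarithmic, i.e.\ non-semisimple, nature of $F'$ makes the argument nontrivial.

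Part (iii) is then immediate from the universal property. Since $(F',b')$ is an object of $\Pc$ and $(F(B),b(B))$ is terminal, there is a unique morphism $\iota\colon(F',b')\to(F(B),b(B))$; by the definition of morphisms in $\Pc$ it is a $\Vir\oplus\Vir$-intertwiner satisfying $b(B)_y\circ\iota=b'_y$ for all $y$. If $\iota(\phi)=0$ then $b'_y(\phi)=b(B)_y(\iota(\phi))=0$, whence $\phi=0$ by the assumed injectivity of $b'_y$, so $\iota$ is injective. For (ii) I would combine (i) and (iii) with the rigidity of a terminal object. By (i) the subspace $K:=\ker b(B)_y$ is independent of $y$ and, by the submodule property established for (i), is a $\Vir\oplus\Vir$-submodule of $F(B)$. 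Hence $b(B)_y$ descends to an injective $\tilde b_y\colon F(B)/K\to\overline B$, and one checks that $(F(B)/K,\tilde b)$ is again an object of $\Pc$ whose correlators agree with those of $(F(B),b(B))$ through the projection $q\colon F(B)\to F(B)/K$; in particular $q$ is a $\Pc$-morphism. Applying (iii) to the now non-degenerate object $(F(B)/K,\tilde b)$ produces an injective intertwiner $\iota\colon F(B)/K\to F(B)$ with $b(B)_y\circ\iota=\tilde b_y$. Then $\iota\circ q$ is an endomorphism of the terminal object $(F(B),b(B))$, hence equals $\id_{F(B)}$, which forces $q$ to be injective and therefore $K=\{0\}$.

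The main obstacle is the input to (i): proving that $\ker b'_y$ is a $\Vir\oplus\Vir$-submodule and is stable under the nilpotent part of $L_0+\overline L_0$, so that $\lambda^{L_0+\overline L_0}$ acts as the identity on it. Everything else --- the reduction to the scaling identity, statement (iii), and the terminal-object argument for (ii) --- is formal once this coinvariance analysis is in place. I expect the delicate step to be disentangling the holomorphic and anti-holomorphic Virasoro modes of the bulk field in (B5), since the boundary reflection mixes them and a naive mode-moving argument leaves uncancelled terms involving $\overline L_m\phi$; these must be controlled by playing the holomorphic and anti-holomorphic identities against each other together with the non-degeneracy of the boundary two-point function.
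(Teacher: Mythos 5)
Your proposal is correct, and for parts (i) and (iii) it is essentially the paper's own proof: your scaling identity $b'_{\lambda y}(\phi)=\lambda^{L_0}\,b'_y(\lambda^{-(L_0+\overline L_0)}\phi)$ is exactly the paper's step of ``integrating the coinvariance along a global conformal transformation of $\Hb$ fixing a boundary point'' (the paper allows a general such M\"obius map, you specialise to dilations centred at $0$, which suffices), the submodule property of $\ker b'_y$ is asserted in the paper with the same brevity as in your sketch, and your (iii) coincides with the paper's. One remark on what you call the heart of the matter: the difficulty is smaller than you fear. For the dilation route you only need stability of $K(y)$ under the single operator $L_0+\overline L_0$, and this drops out of (B5) with no chirality disentangling at all: $f(\zeta)=\zeta$ gives $(L_0+\overline L_0+iyL_{-1}-iy\overline L_{-1})\phi\in K(y)$, while the companion condition with $L_p$ and $\overline L_p$ exchanged gives $(L_0+\overline L_0-iyL_{-1}+iy\overline L_{-1})\phi\in K(y)$; adding the two cancels the unwanted $L_{-1}-\overline L_{-1}$ terms, and local finiteness then yields $\lambda^{L_0+\overline L_0}K(y)=K(y)$, nilpotent part included, since stability of a subspace under an operator implies stability under its exponential on a locally finite module. (The full submodule property is still needed so that $K$ is a $\Vir\oplus\Vir$-module, which your part (ii) uses.) It is in part (ii) that you genuinely diverge from the paper: there, one observes that $(K,0)$ is itself an object of $\Pc$ --- trivially, since all its candidate correlators vanish --- and that both the embedding $e:K\to F(B)$ and the zero map are morphisms $(K,0)\to(F(B),b(B))$, so uniqueness from terminality forces $e=0$, i.e.\ $K=\{0\}$, in two lines. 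Your detour through the quotient --- pass to $(F(B)/K,\tilde b)$, apply (iii) to obtain an injective section $\iota$, and use that the only endomorphism of a terminal object is the identity to get $\iota\circ q=\id_{F(B)}$, hence $q$ injective --- is also valid, but costs more: you must verify that the quotient is again an object of $\Pc$, and in particular that the induced candidate correlator is well defined, which requires $U'_{1,1}(x,iy,\psi,\phi)$ to vanish for $\phi\in K$ at \emph{all} $x$, not only in the region $|x|>y$ where \eqref{eq:bulk-bnd-coinv-Pdef} applies (one needs the extension-by-derivative-property reading of the correlator, the same device used for the centrality condition). What your version buys is the structural observation that the terminal object contains its non-degenerate quotient as a subobject; what the paper's version buys is that no quotient checks are needed at all.
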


\begin{proof} 
For part (i), let $K(y)$ be the kernel of $b'_y : F' \to \overline B$. By non-degeneracy of the boundary theory, the kernel of $b'_y$ is determined by $U'_{1,1}$.
\\[.3em]
\nxt $K(y)$ is a $\Vir \oplus \Vir$-module: Use the coinvariance property to show that $U'_{1,1}(x,iy,\psi,\phi) = 0$ for all $\psi$ implies $U'_{1,1}(x,iy,\psi,L_m\phi) = 0$ and $U'_{1,1}(x,iy,\psi,\overline L_m\phi) = 0$ for all $\psi$.
\\[.3em]
\nxt $K(y) = K(y')$ for all $y,y'>0$: There exists a global conformal transformation $\Hb \to \Hb$ which leaves a point $x \in \Rb$ invariant and maps $y$ to $y'$. The coinvariance property can be integrated to give $U'_{1,1}(x,iy,\psi,\phi) = U'_{1,1}(x,iy',\psi',\phi')$, where $\psi'$ and $\phi'$ are obtained from $\psi$ and $\phi$ by an appropriate exponential of modes $L_0$, $L_1$, $\overline L_0$, $\overline L_1$. Using the previous point we see that $\phi \in K(y')$ implies $\phi' \in K(y')$ and thus $\phi \in K(y)$. Together with the inverse transformation one finds $K(y) = K(y')$.

To see (ii),  let $K$ be the kernel of $b(B)_y$ and let $e : K \to F(B)$ be the embedding map. The triangle
\be
\raisebox{2em}{\xymatrix@R=1.5em@C=1.5em{
K \ar[rr]^-{e} \ar[dr]_{0} && F(B) \ar[dl]^{b(B)_y}
\\
& \overline B
}}
\ee
commutes for all $y>0$ (since the kernel is independent of $y$). By the terminal object property, the map $K \to F(B)$ which makes the above triangle commute is unique, and therefore $e=0$. Hence also $K = \{ 0 \}$. 

Part (iii) is now trivial. The existence of $\iota$ follows from the terminal object property. Since $b'_y = b(B)_y \circ \iota$ with $b'_y$ and $b(B)_y$ injective, also $\iota$ must be injective.
\end{proof}

\begin{remark}
That the candidate bulk-boundary map $b'$ in a pair $(F',b')$ is injective has the physical interpretation that all bulk fields can be distinguished in upper half plane correlators. If a bulk field $\phi$ from the kernel of the bulk-boundary map is inserted in a correlator on the upper half plane, this correlator vanishes, irrespective of the other field insertions. Thus by the above lemma, the space $F(B)$ is maximal in the sense that any candidate space of bulk fields $(F',b')$, for which all bulk fields can be distinguished in upper half plane correlators, can be embedded in $F(B)$. This embedding is compatible with the candidate bulk-boundary map.
\end{remark}

It remains to address the question of existence of the terminal object $(F(B), b(B))$, to see how the OPE of bulk fields in $F(B)$ is determined, to verify its associativity and commutativity, and to investigate the compatibility of bulk and boundary OPE with the bulk-boundary map $b(B)$. To do so, it is best to leave behind the infinite dimensional vector spaces underlying $F(B)$ and $B$ and the infinite set of coinvariance conditions on the correlators, and to take a fresh look at the problem from the more abstract viewpoint of algebras in braided monoidal categories.\footnote{
  We should also address the non-degeneracy of the 2-point correlator and verify modular invariance. Unfortunately, we currently do not know how to do this at the level of generality used in section \ref{sec:alg-reform}. We can only point to non-logarithmic rational CFTs, where everything works as it should \cite{Fjelstad:2005,Kong:2008ci}, and to the $W_{1,p}$-series and the $W_{2,3}$-model \cite{Gaberdiel:2007jv,Gaberdiel:2010rg}, which give modular invariant torus amplitudes and have a self-contragredient space of bulk fields, $F(B) \cong \overline{F(B)}{}^*$. The latter condition is necessary for the existence of a non-degenerate 2-point correlator.
}

\section{Algebraic reformulation} \label{sec:alg-reform}

Some aspects of the consistency conditions for a CFT are analytic in nature, such as the convergence condition (C3) for the OPE and the differential equations (C4) to be satisfied by correlation functions. Other aspects have a combinatorial counterpart which can be described using the language of algebras in braided monoidal categories. In this section we present these counterparts, and we point out the corresponding concepts from section \ref{sec:bb-corr}. 

The translation is made by fixing a vertex operator algebra $\Vc$ as chiral symmetry of the CFT and considering the category $\Rep(\Vc)$ of representations of $\Vc$. This category is by definition $\Cb$-linear and abelian. Under certain conditions on $\Vc$, one obtains in addition a tensor product and a braiding on $\Rep(\Vc)$ \cite{Huang:1994,Huang:2010}.

\medskip

In this section, $k$ denotes a field of characteristic 0. We will use the notation $\Cc(U,V)$ to denote the set of morphisms from an object $U$ to an object $V$ in a category $\Cc$. The categories $\Cc$ we will consider have the following properties:
\begin{itemize}
\item $\Cc$ is $k$-linear, abelian, and satisfies finiteness conditions detailed in section \ref{sec:deligne-prod}.
\item If $\Cc$ is monoidal, the tensor product functor is $k$-linear and right exact in both arguments.\footnote{
For monoidal $\Cc$, we do not require the tensor unit $\one$ to be simple. Neither do we require it to be absolutely simple, that is, we do not impose that the space of endomorphisms of $\one$ is $k \cdot \id_\one$.}
\end{itemize}
For the algebraic constructions presented in this section, it is irrelevant whether $\Cc$ is realised as representations of some vertex operator algebra $\Vc$ or not.

\medskip

We assume some familiarity with abelian categories, exact functors, monoidal categories, monoidal functors, and braidings; the standard reference is \cite{MacLane-book}. Other notions, such as conjugates, the Deligne product and related constructions, and algebras in monoidal categories, are reviewed in sections \ref{sec:conjugate}--\ref{sec:algebra}. The main point of this section is the notion of the ``full centre'', introduced in section \ref{sec:fullcentre}, which is the algebraic implementation of the construction of a bulk theory form a boundary theory described in section \ref{sec:boundary-to-bulk}. Some properties related to the full centre are discussed in section \ref{sec:leftcentre}.

\subsection{Conjugates} \label{sec:conjugate}

In many of the constructions below we will need that every object $U \in \Cc$ has a \textsl{conjugate object} $U^*$. The extra structure we demand to come along with this conjugation is summarised in

\begin{quote}
{\bf Condition (C)}:
The category $\Cc$ is equipped with an involutive contragredient $k$-linear functor $(-)^* : \Cc \to \Cc$, together with a natural family of isomorphisms $\delta_U : U \to U^{**}$ which satisfy $(\delta_U)^* = (\delta_{U^*})^{-1} : U^{***} \to U^{*}$ for all $U \in \Cc$.
Furthermore, $\Cc$ is equipped with a family of isomorphisms $\pi_{U,V} : \Cc(U,V^*) \to \Cc(U \otimes V, \one^*)$, natural in $U$ and $V$.
\end{quote}

We do \textsl{not} demand that there be maps $\ev_U : U^* \otimes U \to \one$ and $\coev_U : \one \to U \otimes U^*$ which satisfy the properties of a categorical dual. Indeed, this property fails in the $W_{2,3}$-example, see section \ref{sec:W23-reps} below. We do also not demand the (weaker) property that $(U\otimes V)^*$ be isomorphic to $V^* \otimes U^*$ (which also fails in the $W_{2,3}$-example).

\begin{remark}
Condition (C) was introduced in \cite[Sect.\,3.1]{Gaberdiel:2009ug} (there, the condition $(\delta_U)^* = (\delta_{U^*})^{-1}$ was not spelled out). It is motivated by the relation of Hom-spaces and spaces of conformal blocks on the sphere in the case $\Cc = \Rep(\Vc)$ for a suitable vertex operator algebra $\Vc$. Then $R^*$ is the contragredient representation to $R$ (see \cite[Notation\,I:2.36]{Huang:2010}) and $\delta_R$ is the natural isomorphism from a graded vector space with finite-dimensional homogeneous components to its graded double dual, which indeed satisfies $(\delta_R)^* = (\delta_{R^*})^{-1}$. Denote by $\otimes_f$ the fusion-tensor product in $\Rep(\Vc)$. The definition of $\pi$ is motivated by the observation that $\Hom_{\Vc}(R \otimes_f S,T)$ is isomorphic to the space of three-point conformal blocks on the Riemann sphere $\Cb \cup \{\infty\}$ with insertions of $R$ and $S$ at $x$ and $y$, say, and of the \textsl{contragredient} $T^*$ of $T$ at a point $z$. Since the position of the insertion points are arbitrary, this space of conformal blocks is also isomorphic to $\Hom_{\Vc}(R \otimes_f T^*,S^*)$. Furthermore, the space of blocks does not change by inserting the vertex operator algebra $\Vc$ itself. Thus, with $S = \Vc$, $\Hom_{\Vc}(R,T^*) \cong \Hom_{\Vc}(R \otimes_f \Vc,T^*) \cong \Hom_{\Vc}(R \otimes_f T,\Vc^*)$. In the setting of \cite{Huang:2010}, the above reasoning amounts to Proposition~II:3.46.
\end{remark}

\begin{definition} \label{def:non-deg-pair}
A pairing $p : U \otimes V \to \one^*$ is called \textsl{non-degenerate} if the map $\pi_{U,V}^{-1}(p) : U \to V^*$ is an isomorphism.
\end{definition}

An alternative characterisation of non-degeneracy of $p$ is that $p \circ (f \otimes \id_V) = 0$ implies $f = 0$ for all $f : X \to U$, and $p \circ (\id_U \otimes g) = 0$ implies $g = 0$ for all $g : Y \to V$ (see \cite[Lem.\,B.7]{Gaberdiel:2009ug} for a proof). This justifies the name `non-degenerate'. There is a canonical non-degenerate pairing
\be \label{eq:def-beta-pairing}
  \beta_U := \pi_{U,U^*}(\delta_U) ~:~ U \otimes U^* \to \one^* \ ,
\ee
which in particular has the property that $\beta_V \circ (h \otimes \id_{V^*}) = \beta_U \circ (\id_U \otimes h^*)$ for all $h : U \to V$, see lemma \ref{lem:conjugate-prop}.

\subsection{Deligne product} \label{sec:deligne-prod}

The point of this subsection is to gain some familiarity with the Deligne product of $k$-linear abelian categories which will be used extensively below. A reader who deems this too technical (or boring) could maybe have a quick glance at definition \ref{def:deligne-prod}, condition (PF) and corollary \ref{cor:proj-gen-in-prod}, and then continue with section \ref{sec:braiding}.

\medskip

Let $A,B$ be two $k$-algebras. Denote by $A\text{-mod}$ and $B\text{-mod}$ the $k$-linear abelian categories of finitely generated modules over these algebras. We can now ask if we can construct $(A\otimes_k B)\text{-mod}$ directly from the categories $A\text{-mod}$ and $B\text{-mod}$ rather than using the algebras $A$ and $B$. The problem one faces is that in general not every $A\otimes_k B$-module is a direct sum of tensor products of $A$-modules and $B$-modules. 

For example, if $A=B=k[x]/\langle x^2 \rangle$, we have $A \otimes_k B \cong k[x,y] / \langle x^2 , y^2 \rangle$. The $A\otimes_k B$-module $M= k[x,y] / \langle x^2, y^2, x{-}y \rangle$ has dimension 2 and both $x$ and $y$ act non-trivially. Since up to dimension two, the only $A$- (or $B$-) module with non-trivial action is $k[x]/\langle x^2 \rangle$, the module $M$ does not arise as a direct sum of tensor products.

\medskip

The passage from $A\text{-mod} \times B\text{-mod}$ (the category of pairs of objects and morphisms) to $(A\otimes_k B)\text{-mod}$ is a special case of the Deligne product of abelian categories \cite[\S\,5.1]{Deligne:2007}. Given two $k$-linear abelian categories $\Ac,\Bc$, denote by $\Fun_{k,\mathrm{r.ex.}}(\Ac,\Bc)$ the category of $k$-linear right exact functors from $\Ac$ to $\Bc$ and natural transformations between them.

\begin{definition} \label{def:deligne-prod}
Let $\{ \Ac_\sigma \}_{\sigma \in S}$ be a family of $k$-linear abelian categories. The \textsl{Deligne product} of the $\{ \Ac_\sigma \}_{\sigma \in S}$ is a pair $(\Ac_S, \boxtimes_S)$, such that
\begin{itemize}
\item[(i)] $\Ac_S$ is a $k$-linear abelian category, and 
 $\boxtimes_S : \prod_{\sigma \in S} \Ac_\sigma \to \Ac_S$ is a functor which is $k$-linear and right exact in each $\Ac_\sigma$,
\item[(ii)] Let $\Bc$ be a $k$-linear abelian category and denote by $\Fun_{\text{mult},\mathrm{r.ex.}}(\prod_{\sigma \in S} \Ac_\sigma,\Bc)$ the category of all functors which are $k$-linear and right exact in each $\Ac_\sigma$. Then for all $\Bc$, the functor
\be\label{eq:defining-equiv}
  (-) \circ \boxtimes_S : \Fun_{k,\mathrm{r.ex.}}(\Ac_S,\Bc) \longrightarrow \Fun_{\text{mult},\mathrm{r.ex.}}\big({\textstyle \prod_{\sigma \in S}} \Ac_\sigma,\Bc \big)
  ~~, \quad
  F \mapsto F \circ \boxtimes_S  \ ,
\ee
is an equivalence of categories.
\end{itemize}
\end{definition}

We will also write the Deligne product as $\boxtimes_{\sigma \in S} \Ac_\sigma$, or, in case there are only a finite number of factors with index set $S = \{1,2,\dots,n\}$, as $\Ac_1 \boxtimes \Ac_2 \boxtimes \cdots \boxtimes \Ac_n$. The triangle one would like to draw for the universal property in condition (ii) is
\be\label{eq:defining-equiv-diag}
\raisebox{2em}{\xymatrix@R=1.5em@C=1.5em{
{\textstyle \prod_{\sigma \in S}} \Ac_\sigma \ar[rr]^{\boxtimes_S}  \ar[dr]_{f}
&& \boxtimes_{\sigma \in S} \Ac_\sigma \ar@{-->}[dl]^{\exists! \, F}
\\
& \Bc
}} \qquad ,
\ee
and it should be read as follows: for each $f \in \Fun_{\text{mult},\mathrm{r.ex.}}(\prod_{\sigma \in S} \Ac_\sigma,\Bc)$ there exists an $F \in \Fun_{k,\mathrm{r.ex.}}(\Ac_S,\Bc)$ such that $f$ is naturally isomorphic to $F \circ \boxtimes_S$. Any other $F'$ with this property is naturally isomorphic to $F$. However, this captures the equivalence of functor categories required in condition (ii) only on the level of objects.

\begin{remark}
In the algebraic reformulation of the construction in section \ref{sec:boundary-to-bulk}, the Deligne product appears as follows. Let $\Vc$ be a suitable vertex operator algebra. The space of boundary fields will be a representation of $\Vc$, in other words, an object in $\Cc = \Rep(\Vc)$. The space of bulk fields will be a representation of $\Vc \otimes_{\Cb} \Vc$, that is, an object in $\Rep(\Vc \otimes_{\Cb} \Vc)$. In the algebraic description, we will replace\footnote{
  We are not aware of a statement in the vertex operator algebra literature that says $\Rep(\Vc \otimes_{\Cb} \Wc) = \Rep(\Vc) \boxtimes \Rep(\Wc)$, but it seems very natural to us that this property should hold, at least for `sufficiently nice' $\Vc$ and $\Wc$, e.g.\ when their representation categories satisfy condition (PF) below.
} 
$\Rep(\Vc \otimes_{\Cb} \Vc)$ by $\Cc \boxtimes \Cc$ (or rather by $\CoC$, were `rev' refers to the inverse braiding, see section \ref{sec:braiding} below).  
\end{remark}

If it exists, the Deligne product is unique up to an equivalence: Let $(\Ac'_S,\boxtimes'_S)$ be another Deligne product and set $\Bc = \Ac'_S$ and $f = \boxtimes'_S$ in the above triangle. This results in a functor $F: \Ac_S \to \Ac'_S$. The converse procedure gives $G : \Ac'_S \to \Ac$ and their compositions have to be equivalent to the identity. To make general existence statements, we will need the following finiteness condition (cf.\,\cite[\S\,2.12.1]{Deligne:2007}):
\begin{quote}
{\bf Condition (F):} The category is $k$-linear and abelian, each object is of finite length\footnote{
   An object $A$ has \textsl{finite length} if there is a chain of subobjects $0 = A_0 \subset A_1 \subset A_2 \subset \cdots \subset A_{n-1} \subset A_n = A$ such that each $S_i = A_{i}/A_{i-1}$ is non-zero and simple. The $S_i$ are called \textsl{composition factors} and $n$ is the \textsl{composition length}.
}, 
and all Hom-spaces are finite-dimensional over $k$.
\end{quote}
By \cite[Prop.\,5.13]{Deligne:2007}, if each $\Ac_\sigma$ satisfies condition (F) then the Deligne product $\Ac_S \equiv \boxtimes_{\sigma \in S} \Ac_\sigma$ exists and equally satisfies condition (F); for each $X_\sigma,Y_\sigma \in \Ac_\sigma$, the functor $\boxtimes_S$ gives an isomorphism
\be \label{eq:hom-iso}
  \textstyle \bigotimes_{k\,,\,\sigma \in S} \Ac_\sigma(X_\sigma,Y_\sigma) \xrightarrow{~\sim~}  
  \Ac_S\big( \boxtimes_{\sigma \in S} X_\sigma \,,\, \boxtimes_{\sigma \in S} Y_\sigma \,\big) \ .
\ee
A stronger condition than (F) is
\begin{quote}
{\bf Condition (PF):} The category is $k$-linear and abelian, and it has a projective generator $P$ whose endomorphism space is finite-dimensional over $k$.
\end{quote}
That $P$ is a projective generator means that $P$ is projective and for every $U \in \Ac$ there is an $m \in \Nb$ and a surjection $P^{\oplus m} \to U$, i.e.\ every object in $\Ac$ is a quotient of some $m$-fold direct sum of $P$'s. Since $\Ac(P,P)$ is finite-dimensional, so are all other morphism spaces in $\Ac$ (pick projective resolutions). Since $P$ has finite composition series (or $\Ac(P,P)$ would have infinite dimension since $P$ would have non-zero maps to every subobject in the descending chain), so does every object in $\Ac$. Thus (PF)$\Rightarrow$(F). Categories satisfying (PF) have the following convenient description:

\begin{theorem}[{\cite[Cor.\,2.17]{Deligne:2007}}] \label{eq:PF-means-fd-algebra}
$\Ac$ satisfies condition (PF) if and only if there exists a unital finite-dimensional $k$-algebra $A$ such that $\Ac$ is equivalent, as a $k$-linear category, to the category $\Rep_{f.d.}(A)$ of finite dimensional (over $k$) right $A$-modules.
\end{theorem}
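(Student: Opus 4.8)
The plan is to establish both directions of the equivalence, with the forward direction (PF $\Rightarrow$ existence of $A$) being the substantive one, and the reverse direction (existence of $A$ $\Rightarrow$ PF) being essentially a collection of standard facts about module categories.

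First I would treat the easy direction. Suppose $\Ac \simeq \Rep_{f.d.}(A)$ for a finite-dimensional unital $k$-algebra $A$. Then $\Ac$ is $k$-linear and abelian because $\Rep_{f.d.}(A)$ is. I would exhibit the projective generator explicitly: take $P$ to be the object corresponding to $A$ itself, viewed as a right module over itself (the regular representation). This $P$ is projective because $A$ is a free, hence projective, right $A$-module, and $\Hom$-functors out of a free module are exact. It is a generator because every finitely generated right $A$-module is a quotient of some $A^{\oplus m}$ (send a finite generating set of $m$ elements to the images of the standard basis vectors), which gives the required surjection $P^{\oplus m} \to U$. Finally $\End_A(A) \cong A^{\mathrm{op}}$ as $k$-algebras, which is finite-dimensional over $k$ since $A$ is. Hence condition (PF) holds.

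The harder direction is the forward implication, and this is where the main work — and the main obstacle — lies. Given that $\Ac$ satisfies (PF), with projective generator $P$ and finite-dimensional $\End(P)$, I would define $A := \End_\Ac(P)^{\mathrm{op}}$, a finite-dimensional unital $k$-algebra, and construct the functor
\be
  \Phi := \Ac(P,-) : \Ac \longrightarrow \Rep_{f.d.}(A) \ ,
\ee
sending $U$ to the $k$-vector space $\Ac(P,U)$, which carries a right $A = \End(P)^{\mathrm{op}}$-module structure via precomposition. This is the standard Morita-type reconstruction functor; the assertion is that it is an equivalence of $k$-linear categories. The essential claim to verify is that $\Phi$ is fully faithful and essentially surjective. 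Fullness and faithfulness on morphisms out of $P$ is immediate from the definition; the key technical step is to bootstrap this to all objects using the projective generator property. Concretely, every $U \in \Ac$ admits a presentation $P^{\oplus m} \to P^{\oplus n} \to U \to 0$ (using that $P$ is a generator to get the surjection onto $U$, and applying the generator property again to the kernel, noting that kernels exist and the relevant objects have finite length by (F), which (PF) implies). Since $\Phi$ is exact (because $P$ is projective, $\Ac(P,-)$ is exact) and visibly an isomorphism on finite direct sums of $P$, a five-lemma / diagram-chase argument on these presentations upgrades the isomorphism $\Ac(P^{\oplus n}, P^{\oplus m}) \xrightarrow{\sim} \Hom_A(\Phi P^{\oplus n}, \Phi P^{\oplus m})$ to fullness and faithfulness on all of $\Ac$.

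The main obstacle I anticipate is essential surjectivity: showing that every finite-dimensional right $A$-module arises, up to isomorphism, as $\Phi(U)$ for some $U \in \Ac$. The strategy is again to use presentations, but now on the module side. Any finite-dimensional $A$-module $V$ has a presentation $A^{\oplus m} \xrightarrow{f} A^{\oplus n} \to V \to 0$ by free modules; since $A^{\oplus m} = \Phi(P^{\oplus m})$ and $A^{\oplus n} = \Phi(P^{\oplus n})$ and $\Phi$ is full, the map $f$ lifts to a morphism $g : P^{\oplus m} \to P^{\oplus n}$ in $\Ac$. Setting $U := \cok(g)$ (which exists since $\Ac$ is abelian) and using the right exactness of $\Phi$ (from its exactness) then yields $\Phi(U) \cong \cok(f) \cong V$. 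Throughout, I would lean on theorem \ref{eq:PF-means-fd-algebra}'s ambient hypothesis that (PF) $\Rightarrow$ (F), so that all objects have finite length and all Hom-spaces are finite-dimensional, which is what keeps $\Phi$ landing in $\Rep_{f.d.}(A)$ rather than in a larger module category. Since this is exactly the classical statement characterizing module categories over finite-dimensional algebras, I would cite \cite[Cor.\,2.17]{Deligne:2007} for the full argument rather than reproduce every diagram chase.
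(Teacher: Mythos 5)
Your proposal is correct and follows essentially the same route as the paper's proof in appendix B.1: set $A = \Ac(P,P)$, study the functor $\Ac(P,-)$, establish faithfulness and fullness via presentations $P^{\oplus k} \to P^{\oplus m} \to U \to 0$ (the paper argues fullness directly with generators and the cokernel property where you invoke a five-lemma, a cosmetic difference), and obtain essential surjectivity exactly as the paper does, by lifting a free presentation of $M$ through fullness and taking a cokernel in $\Ac$. One minor bookkeeping slip that does not affect the existence statement: precomposition $f \mapsto f \circ a$ makes $\Ac(P,U)$ a right module over $\End_{\Ac}(P)$ itself rather than over $\End_{\Ac}(P)^{\mathrm{op}}$, and in the easy direction $\End_A(A) \cong A$ (not $A^{\mathrm{op}}$) for $A$ viewed as a right module over itself.
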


The proof is maybe instructive to gain some intuition for the finiteness condition (PF); for the convenience of the reader we include it in appendix \ref{app:more-deligne}. The next theorem confirms the motivation for studying Deligne products which was stated in the beginning of this subsection. It is proved (in greater generality) in \cite[Prop.\,5.3]{Deligne:2007}; we sketch a proof in our simpler situation.

\begin{theorem} \label{thm:deligne-product-of-alg}
Let $A,B$ be finite-dimensional unital $k$-algebras. Then 
\be
  \Rep_{f.d.}(A) \, \boxtimes \, \Rep_{f.d.}(B) ~=~ \Rep_{f.d.}(A \otimes_k B) \ .
\ee
\end{theorem}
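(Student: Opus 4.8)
The plan is to establish the equality $\Rep_{f.d.}(A) \boxtimes \Rep_{f.d.}(B) = \Rep_{f.d.}(A \otimes_k B)$ by verifying that $\Rep_{f.d.}(A \otimes_k B)$, together with a natural bifunctor $\boxtimes$, satisfies the universal property in definition \ref{def:deligne-prod}(ii); by uniqueness of the Deligne product this identifies it with $\Rep_{f.d.}(A) \boxtimes \Rep_{f.d.}(B)$. First I would define the candidate functor $\boxtimes : \Rep_{f.d.}(A) \times \Rep_{f.d.}(B) \to \Rep_{f.d.}(A \otimes_k B)$ by sending a pair of modules $(M,N)$ to $M \otimes_k N$, equipped with the obvious $A \otimes_k B$-action $(a \otimes b)\cdot(x \otimes y) = (a x) \otimes (b y)$, and acting on morphisms by $(f,g) \mapsto f \otimes_k g$. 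Since $\otimes_k$ over a field is exact in each argument and clearly $k$-bilinear, this functor lies in $\Fun_{\text{mult},\mathrm{r.ex.}}(\Rep_{f.d.}(A) \times \Rep_{f.d.}(B), -)$, so the data $(\Rep_{f.d.}(A \otimes_k B), \boxtimes)$ is at least a candidate in the sense of definition \ref{def:deligne-prod}(i).

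The heart of the argument is to show that for every $k$-linear abelian $\Bc$ the composition functor
\be
  (-)\circ \boxtimes : \Fun_{k,\mathrm{r.ex.}}\big(\Rep_{f.d.}(A\otimes_k B),\Bc\big) \longrightarrow \Fun_{\text{mult},\mathrm{r.ex.}}\big(\Rep_{f.d.}(A)\times\Rep_{f.d.}(B),\Bc\big)
\ee
is an equivalence. I would prove this by exhibiting an explicit quasi-inverse. Given a bilinear right-exact $f : \Rep_{f.d.}(A)\times\Rep_{f.d.}(B) \to \Bc$, I would reconstruct a single right-exact functor $F$ on $\Rep_{f.d.}(A\otimes_k B)$ by using that $A \otimes_k B$ is itself a projective generator of $\Rep_{f.d.}(A\otimes_k B)$ with $A\otimes_k B = (A)\boxtimes(B)$, and setting $F(A\otimes_k B) := f(A,B)$ on this generator. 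The object $\Bc(F(-), Y)$ is determined on the projective generator and then extended by right-exactness: any $V \in \Rep_{f.d.}(A\otimes_k B)$ admits a presentation $(A\otimes_k B)^{\oplus m} \to (A\otimes_k B)^{\oplus n} \to V \to 0$ (this uses theorem \ref{eq:PF-means-fd-algebra}, that $\Rep_{f.d.}(A\otimes_k B)$ satisfies (PF) with projective generator $A \otimes_k B$), and I would define $F(V)$ as the cokernel of the induced map on $f(A,B)^{\oplus m} \to f(A,B)^{\oplus n}$. One then checks $F$ is well-defined up to isomorphism, is right exact, and that $F \circ \boxtimes \cong f$ naturally, while $(G\circ\boxtimes)$-induced reconstruction returns $G$ for any $G \in \Fun_{k,\mathrm{r.ex.}}(\Rep_{f.d.}(A\otimes_k B),\Bc)$.

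The main obstacle I anticipate is the faithfulness/fullness bookkeeping needed to show that the two reconstructions are mutually inverse \emph{as functors}, not merely on objects: one must verify that the assignment on natural transformations is compatible, i.e.\ that a natural transformation between bilinear functors corresponds bijectively to one between the reconstructed functors. The cleanest route is to exploit theorem \ref{eq:PF-means-fd-algebra} to reduce everything to the statement that a right-exact $k$-linear functor out of $\Rep_{f.d.}(C)$ (for $C$ a finite-dimensional algebra) is determined up to unique isomorphism by its value on $C$ as an object together with the induced action of $\End_C(C)^{\mathrm{op}} = C$; this is essentially the Eilenberg--Watts/Morita-type description of right-exact functors on module categories. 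The identification $\End_{A\otimes_k B}(A\otimes_k B) = A\otimes_k B = \End_A(A)\otimes_k \End_B(B)$ then matches the endomorphism data on both sides, and the isomorphism \eqref{eq:hom-iso} (applied with $X_\sigma = Y_\sigma$ the regular modules) guarantees that $\boxtimes$ induces the correct Hom-space identification, closing the argument.
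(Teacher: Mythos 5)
Your proposal is correct and takes essentially the same route as the paper's sketch: both verify the universal property of definition \ref{def:deligne-prod}\,(ii) for the candidate bifunctor $(M,N)\mapsto M\otimes_k N$ by using free presentations $(A\otimes_k B)^{\oplus n}\to(A\otimes_k B)^{\oplus m}\to V\to 0$ to show that a $k$-linear right exact functor is determined by its value on the regular module $A\otimes_k B$ together with the induced action of its endomorphisms, and likewise for bilinear functors on $(A,B)$. Your explicit cokernel construction of the quasi-inverse and the Eilenberg--Watts framing merely flesh out what the paper leaves implicit; note only that for the endomorphism matching you need just the elementary identification $\End_{A\otimes_k B}(A\otimes_k B)\cong \End_A(A)\otimes_k\End_B(B)$, not the isomorphism \eqref{eq:hom-iso}, whose general form presupposes Deligne's existence theorem.
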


\begin{proof}[Sketch of proof] Write $\Ac = \Rep_{f.d.}(A)$, $\Bc = \Rep_{f.d.}(B)$, $\Dc = \Rep_{f.d.}(A \otimes_k B)$. The functor $\boxtimes : \Ac \times \Bc \to \Dc$ is $(M,N) \mapsto M \otimes_k N$, seen as an $A \otimes_k B$ right module, and $(f,g) \mapsto f \otimes_k g$ for module maps $f,g$. (Since $k$ is a field, $\boxtimes$ is actually \textsl{exact} in each argument, not only right exact, cf.\,\cite[Cor.\,5.4]{Deligne:2007}.)

Let $\Ec$ be a $k$-linear abelian category. We need to show that $(-) \circ \boxtimes$ gives an equivalence of functor categories $\Fun_{k,\mathrm{r.ex.}}(\Dc,\Ec) \to \Fun_{\text{mult},\mathrm{r.ex.}}(\Ac \times \Bc,\Ec)$, see \eqref{eq:defining-equiv}. The point is that a $k$-linear, right exact functor $F : \Dc \to \Ec$ is fixed by $F( A \otimes_k B )$, and by $F(f)$ for all right module endomorphisms of $A \otimes_k B$. To see this, just express an arbitrary finite-dimensional $A \otimes_k B$ right module $M$ via the first two terms in a free resolution, $(A \otimes_k B)^{\oplus n} \to(A \otimes_k B)^{\oplus m} \to M\to 0$ for appropriate $m,n \in \Zb_{\ge 0}$. Similarly, a functor $G : \Ac \times \Bc \to \Ec$ which is $k$-linear and right exact in each argument is fixed by $G(A,B)$ and $G(f,g)$ for all right module endomorphisms $f$ of $A$ and $g$ of $B$. From this one derives that $(-) \circ \boxtimes$ is essentially surjective. Natural transformations are equally determined by evaluating them on $A \otimes_k B$, respectively on $(A,B)$, and from this one can deduce that $(-) \circ \boxtimes$ is full and faithful.
\end{proof}

\begin{corollary} \label{cor:proj-gen-in-prod}
If $\Ac$ and $\Bc$ satisfy property (PF), then so does $\Ac \boxtimes \Bc$. If $P$ and $Q$ are projective generators of $\Ac$ and $\Bc$, respectively, then $P \boxtimes Q$ is a projective generator of $\Ac \boxtimes \Bc$.
\end{corollary}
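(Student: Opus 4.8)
The plan is to reduce everything to the algebra models supplied by Theorems \ref{eq:PF-means-fd-algebra} and \ref{thm:deligne-product-of-alg}, and then argue about projectives and generators at the level of finite-dimensional modules over a tensor product of algebras.

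First I would establish the statement that $\Ac \boxtimes \Bc$ satisfies (PF). By Theorem \ref{eq:PF-means-fd-algebra}, condition (PF) on $\Ac$ and $\Bc$ yields finite-dimensional unital $k$-algebras $A$ and $B$ together with $k$-linear equivalences $\Ac \simeq \Rep_{f.d.}(A)$ and $\Bc \simeq \Rep_{f.d.}(B)$. By Theorem \ref{thm:deligne-product-of-alg} the Deligne product is then realised as $\Ac \boxtimes \Bc \simeq \Rep_{f.d.}(A \otimes_k B)$, where under this identification the product functor $\boxtimes$ is sent to $\otimes_k$ (viewing $M \otimes_k N$ as an $A \otimes_k B$-module, as in the construction in the proof of Theorem \ref{thm:deligne-product-of-alg}). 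Since $A \otimes_k B$ is again finite-dimensional and unital, a second application of Theorem \ref{eq:PF-means-fd-algebra} shows that $\Ac \boxtimes \Bc$ satisfies (PF).

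For the claim about the projective generator I would transport $P$ and $Q$ across these equivalences. Since $k$-linear equivalences of abelian categories preserve both projectivity and the property of being a generator, $P$ and $Q$ become projective generators of $\Rep_{f.d.}(A)$ and $\Rep_{f.d.}(B)$, and $P \boxtimes Q$ becomes the $A \otimes_k B$-module $P \otimes_k Q$. Projectivity of $P \otimes_k Q$ I would prove by the summand argument: a finitely generated projective $P$ is a direct summand of some free module $A^{\oplus m}$, and likewise $Q$ of $B^{\oplus n}$, so $P \otimes_k Q$ is a direct summand of $A^{\oplus m} \otimes_k B^{\oplus n} \cong (A \otimes_k B)^{\oplus mn}$ and hence projective. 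For the generator property I would use that $P$ generates to obtain a surjection $P^{\oplus a} \twoheadrightarrow A$, and similarly $Q^{\oplus b} \twoheadrightarrow B$; because $k$ is a field the functor $- \otimes_k -$ is exact in each argument, so the induced map $(P \otimes_k Q)^{\oplus ab} \cong P^{\oplus a} \otimes_k Q^{\oplus b} \twoheadrightarrow A \otimes_k B$ is again surjective. As every finite-dimensional $A \otimes_k B$-module is a quotient of a free module, which in turn is a quotient of a direct sum of copies of $P \otimes_k Q$, composing shows that $P \otimes_k Q = P \boxtimes Q$ is a generator.

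As an independent cross-check on the finiteness of endomorphisms I would note the Hom-isomorphism \eqref{eq:hom-iso}, which gives $(\Ac \boxtimes \Bc)(P \boxtimes Q, P \boxtimes Q) \cong \Ac(P,P) \otimes_k \Bc(Q,Q)$, finite-dimensional as a tensor product of two finite-dimensional spaces. The main point to be careful about is not any single computation but the bookkeeping of the equivalences: one must verify that the chosen realisation of the Deligne product really sends $\boxtimes$ to $\otimes_k$ and $P \boxtimes Q$ to $P \otimes_k Q$, and that projectivity and the generator property are preserved under $k$-linear equivalences. Once these identifications are in place, each remaining step is a standard property of modules over a tensor product of algebras over a field.
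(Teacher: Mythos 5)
Your proof is correct, and it shares the paper's overall skeleton -- reduce via theorems \ref{eq:PF-means-fd-algebra} and \ref{thm:deligne-product-of-alg} to finite-dimensional modules over $A \otimes_k B$ -- but it diverges at the decisive step. The paper does not take arbitrary algebra realisations: it chooses $A = \Ac(P,P)$ and $B = \Bc(Q,Q)$, using the explicit equivalence $H(U) = \Ac(P,U)$ from the appendix, under which $H(P) = A$ by construction. With this choice $P \boxtimes Q$ is identified with the regular module $A \otimes_k B$ itself, so both projectivity and the generator property are immediate and the whole proof is three lines. You instead work with unspecified equivalences, so the images of $P$ and $Q$ are merely \emph{some} projective generators, and you must then supply the standard module-theoretic arguments: $P \otimes_k Q$ is a summand of $(A \otimes_k B)^{\oplus mn}$ (projectivity), and surjections $P^{\oplus a} \twoheadrightarrow A$, $Q^{\oplus b} \twoheadrightarrow B$ tensor (exactly, over the field $k$) to a surjection onto the regular module (generation), plus the transport facts that $k$-linear equivalences preserve projectivity and generators. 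Your route is longer but more robust: it does not depend on the particular construction of the equivalence in the appendix, and it makes explicit the bookkeeping -- that $\boxtimes$ is sent to $\otimes_k$ and that the conclusion is equivalence-invariant -- which the paper leaves implicit in the phrase ``we may take $\Ac \boxtimes \Bc \equiv \Rep_{f.d.}(A \otimes_k B)$''. The paper's choice $A = \End(P)$ buys brevity; your generic argument buys independence from that choice, and your cross-check via \eqref{eq:hom-iso} that $(\Ac \boxtimes \Bc)(P \boxtimes Q, P \boxtimes Q) \cong \Ac(P,P) \otimes_k \Bc(Q,Q)$ is finite-dimensional is a nice sanity check, though strictly redundant once (PF) for $\Rep_{f.d.}(A \otimes_k B)$ is established.
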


\begin{proof}
By the explicit construction in appendix \ref{app:more-deligne} we have $\Ac \cong  \Rep_{f.d.}(A)$ as $k$-linear abelian categories for the choice  $A = \Ac(P,P)$, and also $\Bc \cong  \Rep_{f.d.}(B)$ for $B = \Bc(Q,Q)$. Then by theorem \ref{thm:deligne-product-of-alg} we may take $\Ac \boxtimes \Bc \equiv \Rep_{f.d.}(A \otimes_k B)$. With this choice, $P \boxtimes Q = A \otimes_k B$, which indeed is a projective generator.
\end{proof}

Natural transformations of right exact functors whose domain is a Deligne product are determined by their action on ``product objects''. We will use this a number of times, so let us give a short proof (the statement holds for $ \boxtimes_{\sigma \in S} \Ac_\sigma$, but for notational simplicity we only give the case with two factors).

\begin{lemma} \label{lem:nat-xfer-unique-on-prod}
Let $\Ac,\Bc$ satisfy property (F). Let $\Cc$ be a $k$-linear abelian category, let $F,G \in \Fun_{k,\text{r.ex.}}(\Ac \boxtimes \Bc,\Cc)$ and let $\alpha,\beta : F \Rightarrow G$ be natural transformations. The following are equivalent:
\begin{itemize}
\item[(i)] $\alpha_X = \beta_X$ for all $X \in \Ac \boxtimes \Bc$,
\item[(ii)] $\alpha_{A \boxtimes B} = \beta_{A \boxtimes B}$ for all $A \in \Ac$, $B \in \Bc$.
\end{itemize}
\end{lemma}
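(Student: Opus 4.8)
The implication (i)$\Rightarrow$(ii) is trivial, since product objects $A \boxtimes B$ are in particular objects of $\Ac \boxtimes \Bc$. So the content is (ii)$\Rightarrow$(i), and the plan is to reduce an arbitrary object $X \in \Ac \boxtimes \Bc$ to product objects by exploiting the right exactness of $F$ and $G$ together with the universal property of the Deligne product. The key structural fact I would use is that, since $\Ac$ and $\Bc$ satisfy condition (F), every object of $\Ac \boxtimes \Bc$ admits a presentation by finite direct sums of product objects. Concretely, I would first argue that for any $X \in \Ac \boxtimes \Bc$ there is a right exact sequence
\be
  \textstyle\bigoplus_{j} (A_j \boxtimes B_j) \longrightarrow \bigoplus_{i} (A_i \boxtimes B_i) \longrightarrow X \longrightarrow 0
\ee
with finite index sets, where all the $A_i, A_j \in \Ac$ and $B_i, B_j \in \Bc$. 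This is the analogue of presenting a module by free modules, and it holds because the product objects generate $\Ac \boxtimes \Bc$ (in the (PF) case this is exactly corollary \ref{cor:proj-gen-in-prod}, where $P \boxtimes Q$ is a projective generator; in the general (F) case one still obtains such a presentation from \cite{Deligne:2007}).

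The main step is then a naturality-plus-additivity argument. Given such a presentation, write $p : E_1 \to E_0$ and $q : E_0 \to X$ for the two maps, where $E_0 = \bigoplus_i (A_i \boxtimes B_i)$ and $E_1 = \bigoplus_j (A_j \boxtimes B_j)$. Applying $F$ and $G$ and using right exactness, I obtain exact sequences $F(E_1) \to F(E_0) \to F(X) \to 0$ and likewise for $G$, so that $F(X)$ is the cokernel of $F(p)$ and $G(X)$ the cokernel of $G(q)$ composed appropriately; more precisely $F(q) : F(E_0) \to F(X)$ is an epimorphism. The naturality squares for $\alpha$ and $\beta$ with respect to the morphism $q : E_0 \to X$ give $\alpha_X \circ F(q) = G(q) \circ \alpha_{E_0}$ and similarly for $\beta$. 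Now on the direct sum $E_0 = \bigoplus_i (A_i \boxtimes B_i)$, additivity of natural transformations together with hypothesis (ii) forces $\alpha_{E_0} = \beta_{E_0}$ (a natural transformation between additive functors respects finite biproducts, so its component on a direct sum is the direct sum of its components on the summands, each of which is a product object). Hence $\alpha_X \circ F(q) = G(q) \circ \alpha_{E_0} = G(q) \circ \beta_{E_0} = \beta_X \circ F(q)$. Since $F(q)$ is an epimorphism, we may cancel it on the right to conclude $\alpha_X = \beta_X$.

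The one point requiring care, which I would flag as the main obstacle, is justifying that every object of $\Ac \boxtimes \Bc$ really admits a finite presentation by direct sums of product objects under condition (F) alone, rather than the stronger (PF). Under (PF) this is immediate from corollary \ref{cor:proj-gen-in-prod} and theorem \ref{eq:PF-means-fd-algebra}, where one simply reads off free resolutions over the finite-dimensional algebra $A \otimes_k B$. Under (F) one does not have a projective generator, but the defining universal property of the Deligne product together with \cite[Prop.\,5.13]{Deligne:2007} still guarantees that the product objects generate the category and that every object is a quotient of a finite direct sum of them, which is all the argument needs (I only use right exactness, not projectivity of the presenting objects). A secondary, purely bookkeeping point is the additivity claim $\alpha_{E_0} = \bigoplus_i \alpha_{A_i \boxtimes B_i}$; this follows formally from naturality applied to the inclusion and projection morphisms of the biproduct, and I would state it as a one-line observation rather than belabour it.
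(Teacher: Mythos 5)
Your argument is correct in substance, but it takes a genuinely different and longer route than the paper. The paper's proof is essentially one line: by condition (ii) of definition \ref{def:deligne-prod}, the functor $(-)\circ\boxtimes$ in \eqref{eq:defining-equiv} is an \emph{equivalence} of functor categories, hence in particular fully faithful, so the map sending a natural transformation $\eta: F \Rightarrow G$ to its restriction $\{\eta_{A\boxtimes B}\}$ on product objects is a bijection; two transformations agreeing on all $A \boxtimes B$ therefore coincide. Note that the Hom-level content of the universal property is exactly what this lemma exploits -- the paper even remarks after the triangle diagram that the object-level factorisation alone does not capture it. Your proof instead uses a generators-plus-right-exactness argument: present $X$ as a quotient $q: E_0 \to X$ with $E_0$ a finite direct sum of product objects, use additivity to get $\alpha_{E_0} = \beta_{E_0}$, naturality along $q$, and cancel the epimorphism $F(q)$ (right exactness of $F$ guarantees $F(q)$ is epi). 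This is sound, and you correctly identify its cost: the needed generation statement -- every object of $\Ac\boxtimes\Bc$ is a quotient of a finite direct sum of product objects under (F) alone -- is true but is not what \cite[Prop.\,5.13]{Deligne:2007} literally states; it requires unwinding Deligne's construction (a locally finite category is a filtered union of subcategories equivalent to $\Rep_{f.d.}$ of finite-dimensional algebras, whence the product is a filtered union of pieces as in theorem \ref{thm:deligne-product-of-alg}, where free resolutions give the presentation). The paper's definition-based proof sidesteps precisely this point, which is why it works uniformly under (F). Your approach buys something too: it would survive in a setting where one has a generating family of product-type objects but no universal property in the strong functor-category sense, and it only needs the single epimorphism $q$ -- your first map $p: E_1 \to E_0$ is never used, and the phrase ``$G(X)$ the cokernel of $G(q)$'' should in any case read $G(p)$; both can simply be deleted.
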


\begin{proof}
We need to check (ii)$\Rightarrow$(i). Write $\hat F = F \circ \boxtimes$ and $\hat G = G \circ \boxtimes$. The functor $(-) \circ \boxtimes$ maps natural transformations $F \Rightarrow G$ to natural transformations $\hat F \Rightarrow \hat G$ via 
\be \label{eq:nat-xfer-unique-on-prod-aux1}
  \{ \eta_X \}_{X \in \Ac \boxtimes \Bc} \longmapsto \{ \eta_{A,B} \}_{(A,B) \in \Ac \times \Bc} \qquad , \quad \text{where}~~ \eta_{A,B} := \eta_{A\boxtimes B} \ .
\ee
By condition (ii) in definition \ref{def:deligne-prod}, the map \eqref{eq:nat-xfer-unique-on-prod-aux1} is an isomorphism and hence $\beta$ is uniquely determined by its values on all $A \boxtimes B$.
\end{proof}

We will be interested in the case that a category $\Cc$ satisfies property (F) and is in addition monoidal with $k$-linear right exact tensor product. Then the tensor product $\otimes_{\Cc}:\Cc \times \Cc \to \Cc$ gives us a right exact functor
\be \label{eq:TC-construction}
  T_{\Cc} : \Cc \boxtimes \Cc \longrightarrow \Cc \ ,
\ee
such that $A \otimes_{\Cc} B = T_{\Cc}(A \boxtimes B)$ and analogously for morphisms. Let now $\Dc$ be another such category. Then $\Cc \boxtimes \Dc$ is monoidal with right exact tensor product given by
\be
   \otimes_{\Cc \boxtimes \Dc} = \Big[ (\Cc \boxtimes \Dc) \times (\Cc \boxtimes \Dc) \xrightarrow{\boxtimes}
   \Cc \boxtimes \Dc \boxtimes \Cc \boxtimes \Dc
  \xrightarrow{\sim}
   \Cc \boxtimes \Cc \boxtimes \Dc \boxtimes \Dc
  \xrightarrow{T_{\Cc} \boxtimes T_{\Dc}}
   \Cc \boxtimes \Dc \Big] \ ,
\ee 
for details see \cite[Sect.\,5.16--17]{Deligne:2007}. The unnamed isomorphism is induced by the functor $\Cc \times \Dc \times \Cc \times \Dc \to \Cc \times \Cc \times \Dc \times \Dc$ which exchanges the middle two factors. In particular, for $A,B \in \Cc$ and $U,V \in \Dc$,
\be \label{eq:CxC-tensor-def}
  (A \boxtimes U) \,\otimes_{\Cc \boxtimes \Dc} (B \boxtimes V) ~=~  (A \otimes_{\Cc} B) \, \boxtimes \, (U \otimes_{\Dc} V) \ .
\ee

\subsection{Braiding} \label{sec:braiding}

For this subsection we fix a braided monoidal $k$-linear abelian category $\Cc$ which satisfies property (F), and which has a $k$-linear right exact tensor product. In the previous subsection we saw that $\Cc \boxtimes \Cc$ is again monoidal with right exact tensor product. We will use the braiding on $\Cc$ for three related constructions:
\begin{itemize}
\item turn the functor $T_{\Cc} : \Cc \boxtimes \Cc \to \Cc$ from above into a tensor functor,
\item equip the category $\Cc \boxtimes \Cc$ with a braiding, 
\item define a `mixed braiding' with one object from $\Cc \boxtimes \Cc$ and one object from $\Cc$.
\end{itemize}

Let us start with the monoidal structure on $T \equiv T_{\Cc}$. The tensor product of $\Cc \boxtimes \Cc$ will be denoted by $\otimes_{\Cc^2}$.
We have to give isomorphisms
\be \label{eq:T2T0-def}
  T_{2;X,Y} : T(X) \otimes_{\Cc} T(Y) \to T(X \otimes_{\Cc^2} Y)
  \quad , \qquad 
  T_0 : \one \to T(\one \boxtimes \one) \equiv \one \otimes_{\Cc} \one \ ,
\ee
where $T_{2;X,Y}$ is natural in $X,Y \in \Cc\boxtimes\Cc$. $T_2$ and $T_0$ are required to satisfy the hexagon and triangle identity (given explicitly in \eqref{eq:lax-monoidal-hexagon} and \eqref{eq:lax-monoidal-triangle} below for a lax monoidal functor). For $T_0$ one takes the inverse unit isomorphism of $\Cc$. For $T_2$, consider first the two functors from $\Cc^{\times 4}$ to $\Cc$ given by
\be
\begin{array}{l}
  (A,B,U,V) ~\mapsto~ T(A \boxtimes B) \otimes_{\Cc} T(U \boxtimes V) ~\equiv~ (A\otimes_{\Cc}B)\otimes_{\Cc}(U\otimes_{\Cc}V)
  ~~\text{and}
  \\[.4em]
  (A,B,U,V) ~\mapsto~ T\big(\,(A\boxtimes B) \otimes_{\Cc^2} (U\boxtimes V)\,\big) ~\equiv~ (A\otimes_{\Cc}U)\otimes_{\Cc}(B\otimes_{\Cc}V) \ .
\end{array}
\ee
These are linked by the natural isomorphism (not writing out $\otimes_{\Cc}$ between objects)\footnote{
  The convention to use $c^{-1}$ and not $c$ for $T_2$ agrees with 
  \cite[Sect.\,2.4]{Kong:2008ci}
  but it is opposite to 
  \cite[Sect.\,7]{Davydov:2009}. This should be taken into account when referring to proofs in \cite{Davydov:2009}. We use the $c^{-1}$ convention to make lemma \ref{lem:varphi-other-def} true in the form given below. In the context of CFT, the inverse braiding convention means that in the graphical notation ``lines corresponding to holomorphic insertions go on top''.}
\be \label{eq:T-tensor-structure}
  \tilde T_{2;(A,B),(U,V)} := 
  \Big[ (AB)(UV) \xrightarrow{\text{assoc.}} (A(BU))V \xrightarrow{(\id_A \otimes c_{U,B}^{-1}) \otimes \id_V} (A(UB))V
  \xrightarrow{\text{assoc.}} (AU)(BV)  \Big] \ .
\ee
The defining isomorphism of the Deligne product between functor categories transports $ \tilde T_2$ to the desired natural isomorphism $T_2$ in \eqref{eq:T2T0-def}. In particular, $T_2$ obeys $T_{2;U \boxtimes V,A \boxtimes B} =  \tilde T_{2;(U,V),(A,B)}$. The hexagon identity for $T_2$ follows if it holds on product objects (lemma \ref{lem:nat-xfer-unique-on-prod}), and for these it reduces to the hexagon of the braiding $c$ of $\Cc$, cf.\,\cite[Prop.\,5.2]{Joyal:1993}.

\medskip

Next we turn to the braiding on $\Cc \boxtimes \Cc$ that we wish to use. This will again be defined by transporting a natural isomorphism, this time between two functors $\Cc^{\times 4} \to \Cc\boxtimes \Cc$
\be
\begin{array}{l}
  (A,B,U,V) ~\mapsto~ (A \boxtimes B) \otimes_{\Cc^2} (U \boxtimes V) ~\equiv~ (A\otimes_{\Cc}U)\boxtimes (B\otimes_{\Cc}V)
  ~~\text{and}
  \\[.4em]
  (A,B,U,V) ~\mapsto~ (U \boxtimes V) \otimes_{\Cc^2} (A \boxtimes B) ~\equiv~ (U\otimes_{\Cc}A)\boxtimes(V\otimes_{\Cc}B) \ .
\end{array}
\ee
The natural isomorphism we choose is $\tilde c_{(A,B),(U,V)} = c_{A,U} \boxtimes c_{V,B}^{-1}$. The defining property of the Deligne product provides a natural isomorphism $c_{X,Y} : X \otimes_{\Cc^2} Y \to Y \otimes_{\Cc^2} X$ which satisfies
\be \label{eq:CxC-braiding-def}
  c_{A \boxtimes B,U \boxtimes V} =
  \Big[ (A \boxtimes B) \otimes_{\Cc^2} (U \boxtimes V) \xrightarrow{~c_{A,U} \,\boxtimes \,c_{V,B}^{-1}~}
  (U \boxtimes V) \otimes_{\Cc^2} (A \boxtimes B) \Big] \ .
\ee
One verifies that the hexagon condition for the braiding on $\Cc$ implies the hexagon for $c$ in $\Cc \boxtimes \Cc$ on product objects; by lemma \ref{lem:nat-xfer-unique-on-prod} it then holds on all of $\Cc \boxtimes \Cc$. We will denote the category $\Cc \boxtimes \Cc$ with tensor product \eqref{eq:CxC-tensor-def} and braiding \eqref{eq:CxC-braiding-def} by
\be
  \CoC \ .
\ee

Finally, we turn to the mixed braiding between $\CoC$ and $\Cc$. The relevant functors $\Cc^{\times 3} \to \Cc$ are $\tilde L(A,B,U) = (A \otimes_{\Cc} B) \otimes_{\Cc} U$ and $\tilde R(A,B,U) = U \otimes_{\Cc} (A \otimes_{\Cc} B)$. Between these we have the natural isomorphism $\tilde \varphi_{A,B,U} : \tilde L \Rightarrow \tilde R$ given by the string diagram (to be read the optimistic way, i.e.\ upwards from bottom to top)
\be \label{eq:mixed-braid-tildephi-braid}
\tilde \varphi_{A,B,U} ~=~ 
  \raisebox{-20pt}{
  \begin{picture}(35,51)
   \put(0,0){\scalebox{1.00}{\includegraphics{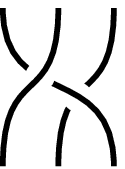}}}
   \put(0,0){
     \setlength{\unitlength}{1.00pt}\put(-6,-5){
     \put(3,-3) {\scriptsize$ A $}
     \put(19,-3) {\scriptsize$ B $}
     \put(34,-3) {\scriptsize$ U $}
     \put(3,55) {\scriptsize$ U $}
     \put(19,55) {\scriptsize$ A $}
     \put(34,55) {\scriptsize$ B $}
     }\setlength{\unitlength}{1pt}}
  \end{picture}}
  \quad .
\ee
In terms of formulas, this translates as\footnote{
  Our convention for associators is $\alpha_{X,Y,Z} : X \otimes (Y \otimes Z) \to (X \otimes Y) \otimes Z$.
  }
\be \label{eq:mixed-braid-tildephi}
  \tilde \varphi_{A,B,U}
  = 
  \alpha_{U,A,B}^{-1}
  \circ
  (c_{A,U} \otimes_{\Cc} \id_B)
  \circ
  \alpha_{A,U,B}
  \circ
  (\id_A \otimes_{\Cc} c_{U,B}^{-1})
  \circ
  \alpha_{A,B,U}^{-1} \ .
\ee
From the Deligne product, we obtain a natural isomorphism $\varphi$ between $L,R : \Cc \boxtimes \Cc \boxtimes \Cc \to \Cc$ such that $\varphi_{A \boxtimes B \boxtimes U} = \tilde \varphi_{A,B,U}$. We will most often use $\varphi$ in the form
\be \label{eq:phi-mixed-braid}
  \varphi_{X,U} := \varphi_{T(X) \boxtimes U} : T(X) \otimes_{\Cc} U \longrightarrow U \otimes_{\Cc} T(X) 
  \quad ; ~ X \in \CoC ~,~ Y \in \Cc \ .
\ee
There is an alternative way to define $\varphi_{X,U}$ by transporting the braiding from $\CoC$ to $\Cc$ with $T$. By the next lemma, these two possibilities give the same result.

\begin{lemma} \label{lem:varphi-other-def}
For $X \in \CoC$ and $U \in \Cc$, the following diagram commutes.
\be \label{eq:varphi-alternate-def}
\raisebox{2em}{\xymatrix{
T(X) \otimes_{\Cc} U \ar[r]^-{\sim} \ar[d]^{\varphi_{X,U}} 
& T(X) \otimes_{\Cc} T(U \boxtimes \one) \ar[r]^{T_2} 
& T(X \otimes_{\Cc^2} (U \boxtimes \one)) \ar[d]_{T(c_{X,U \boxtimes \one})}
\\
U \otimes_{\Cc} T(X) \ar[r]^-{\sim} 
& T(U \boxtimes \one) \otimes_{\Cc} T(X)\ar[r]^{T_2}
& T((U \boxtimes \one) \otimes_{\Cc^2} X)
}}
\ee
\end{lemma}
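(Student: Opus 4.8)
The plan is to reduce the statement to product objects and then verify it by a coherence computation with the braiding and associators of $\Cc$. For each fixed $U \in \Cc$, the two paths around the diagram \eqref{eq:varphi-alternate-def} are the components at $X$ of two natural transformations between the functors $X \mapsto T(X) \otimes_{\Cc} U$ and $X \mapsto T\big((U \boxtimes \one) \otimes_{\Cc^2} X\big)$ from $\CoC$ to $\Cc$. Both of these functors are $k$-linear and right exact in $X$, since each is a composite of $T$ (right exact), the right exact tensor product of $\Cc$ or of $\CoC$, and unit isomorphisms; and both paths are genuinely natural in $X$, being built from $\varphi_{-,U}$, $T_2$, $T(c_{-,U \boxtimes \one})$ and the (natural) unit isomorphisms. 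Hence Lemma \ref{lem:nat-xfer-unique-on-prod} applies, and it suffices to check commutativity for $X = A \boxtimes B$ with $A,B \in \Cc$.

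On such a product object I would substitute the explicit descriptions already available. For $X = A \boxtimes B$ one has $T(X) = A \otimes_{\Cc} B$, the component $\varphi_{X,U}$ equals $\tilde\varphi_{A,B,U}$ as given in \eqref{eq:mixed-braid-tildephi}, the tensor structure $T_2$ on the relevant product objects is $\tilde T_2$ from \eqref{eq:T-tensor-structure}, and by \eqref{eq:CxC-braiding-def} the braiding is $c_{X, U \boxtimes \one} = c_{A,U} \boxtimes c_{\one,B}^{-1}$, whence $T(c_{X, U \boxtimes \one}) = c_{A,U} \otimes_{\Cc} c_{\one,B}^{-1}$ using that $T$ acts on morphisms by $f \boxtimes g \mapsto f \otimes_{\Cc} g$. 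The two horizontal isomorphisms labelled $\sim$ are induced by the unit isomorphism $T(U \boxtimes \one) = U \otimes_{\Cc} \one \cong U$ of $\Cc$.

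After these substitutions the diagram becomes an equality of two morphisms $(A \otimes_{\Cc} B) \otimes_{\Cc} U \to (U \otimes_{\Cc} A) \otimes_{\Cc} (\one \otimes_{\Cc} B)$ assembled entirely from associators, unit isomorphisms, and the braiding $c$ of $\Cc$. I would then use the standard coherence fact that the braiding with the tensor unit $c_{\one,B}$ coincides with the appropriate composite of unit isomorphisms, so that $c_{\one,B}^{-1}$ contributes only unit constraints. Cancelling these against the isomorphisms $U \otimes_{\Cc} \one \cong U$ and $\one \otimes_{\Cc} B \cong B$ by Mac Lane coherence and the triangle axiom, the top–right path collapses to exactly the string diagram \eqref{eq:mixed-braid-tildephi-braid} defining $\tilde\varphi_{A,B,U}$: the braiding $c_{A,U}$ carries $U$ past $A$, while the $\tilde T_2$-insertion $\id_A \otimes_{\Cc} c_{U,B}^{-1}$ carries $U$ past $B$, which is precisely how $\tilde\varphi$ threads $U$ through $A \otimes_{\Cc} B$ in \eqref{eq:mixed-braid-tildephi}.

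I expect the main obstacle to lie entirely in the unit bookkeeping rather than in any new braiding identity. One must track the three independent sources of unit data --- the braiding $c_{\one,B}$, the monoidal unit isomorphisms hidden inside $\tilde T_2$ and the $\sim$-arrows, and the associators involving $\one$ --- and confirm that they all cancel coherently, leaving only the two honest braidings $c_{A,U}$ and $c_{U,B}^{-1}$. Once these unit constraints are disposed of by coherence, the remaining equality is literally the definition of $\tilde\varphi_{A,B,U}$, so no hexagon beyond what is already encoded in \eqref{eq:mixed-braid-tildephi} is required.
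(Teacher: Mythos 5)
Your proposal is correct and follows essentially the same route as the paper's proof: reduce to product objects $X = A \boxtimes B$ via lemma \ref{lem:nat-xfer-unique-on-prod}, then substitute \eqref{eq:T-tensor-structure}, \eqref{eq:CxC-braiding-def} and \eqref{eq:mixed-braid-tildephi} and check that both paths yield the string diagram \eqref{eq:mixed-braid-tildephi-braid}. The paper compresses the final step to ``drawing string diagrams''; you have merely made explicit what that hides, namely the naturality and right-exactness hypotheses needed to invoke lemma \ref{lem:nat-xfer-unique-on-prod} and the coherence argument disposing of the unit braidings $c_{\one,B}^{-1}$ and $c_{A,\one}^{-1}$.
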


\begin{proof}
By lemma \ref{lem:nat-xfer-unique-on-prod} it is enough to verify commutativity of the diagram on product objects $X = A \boxtimes B$. Drawing the corresponding string diagrams using \eqref{eq:T-tensor-structure}, \eqref{eq:CxC-braiding-def} and \eqref{eq:mixed-braid-tildephi} one finds the string diagram \eqref{eq:mixed-braid-tildephi-braid} for both paths.
\end{proof}

With the help of the above lemma, it is easy to use identities for the braiding on $\CoC$ to obtain identities for $\varphi$. We will need
\begin{align}\label{eq:phi-properties}
 \varphi_{X \otimes_{\Cc^2} Y,U} = 
&\Big[ T(XY)U \xrightarrow{T_2^{-1} \otimes \id_U} (TX\,TY)U 
\xrightarrow{\sim} TX(TY\,U) \xrightarrow{\id_{TX} \otimes \varphi_{Y,U}} 
TX(U\,TY)\\\nonumber
&\ \xrightarrow{\sim} (TX\,U)TY \xrightarrow{\varphi_{X,U} \otimes \id_{TY}} 
(U\,TX)TY \xrightarrow{\sim} U(TX\,TY) \xrightarrow{\id_U \otimes T_2} 
U\,T(XY) \Big] \ ,
\end{align}
which follows form applying $T$ to the hexagon identity $c_{X \otimes_{\Cc^2} Y, U \boxtimes \one} = ( c_{X , U \boxtimes \one} \otimes_{\Cc^2} \id_Y ) \circ (\id_X \otimes_{\Cc^2} c_{Y , U \boxtimes \one})$ (we have omitted the associators) and rearranging terms via lemma \ref{lem:varphi-other-def}.

Instead of $\varphi_{X,U}$, which takes one argument from $\CoC$ and one from $\Cc$, we can use the diagram \eqref{eq:varphi-alternate-def} to define $\hat \varphi_{X,Y}$, which takes both arguments from $\CoC$ via
\be \label{eq:varphi-hat-def}
\raisebox{2em}{\xymatrix{
T(X) \otimes_{\Cc} T(Y) \ar[r]^{T_2}  \ar@{-->}[d]_{\hat\varphi_{X,Y}}
& T(X \otimes_{\Cc^2} Y) \ar[d]^{T(c_{X,Y})}
\\
T(Y) \otimes_{\Cc} T(X)\ar[r]^{T_2}
& T(Y \otimes_{\Cc^2} X)
}}
\ee
The following observation will be important below.

\begin{lemma} \label{lem:phi-hatphi-rel}
For $X,Y \in \CoC$ we have the identity
\be
  \varphi_{X,T(Y)} = \hat \varphi_{X,Y} ~:~ T(X) \otimes_{\Cc} T(Y) \to T(Y) \otimes_{\Cc} T(X) \ .
\ee
\end{lemma}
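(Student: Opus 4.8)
The plan is to mimic the proof of Lemma~\ref{lem:varphi-other-def}: reduce the identity to product objects, and there compare the two sides as explicit composites of braidings and associators in $\Cc$.

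First I would observe that both $\varphi_{X,T(Y)}$ and $\hat\varphi_{X,Y}$ are natural in the pair $(X,Y)\in\CoC\times\CoC$; for the left-hand side this combines naturality of $\varphi_{X,U}$ in $U$ with functoriality of $T$, and for the right-hand side it is immediate from the defining square \eqref{eq:varphi-hat-def}. Their common source $(X,Y)\mapsto T(X)\otimes_{\Cc}T(Y)$ and target $(X,Y)\mapsto T(Y)\otimes_{\Cc}T(X)$ are $k$-linear and right exact in each of the four underlying $\Cc$-variables, because $T$ and $\otimes_{\Cc}$ are right exact; hence they factor through the Deligne product $\CoC\boxtimes\CoC\cong\Cc\boxtimes\Cc\boxtimes\Cc\boxtimes\Cc$. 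By the multi-factor version of Lemma~\ref{lem:nat-xfer-unique-on-prod} it then suffices to prove the identity on product objects $X=A\boxtimes B$ and $Y=C\boxtimes D$ with $A,B,C,D\in\Cc$.

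On such objects $T(X)=A\otimes_{\Cc}B$ and $T(Y)=C\otimes_{\Cc}D$. For the left-hand side, $\varphi_{A\boxtimes B,U}=\tilde\varphi_{A,B,U}$ by \eqref{eq:phi-mixed-braid} and \eqref{eq:mixed-braid-tildephi}, so with $U=T(Y)=C\otimes_{\Cc}D$ one gets $\varphi_{X,T(Y)}=\tilde\varphi_{A,B,\,C\otimes_{\Cc}D}$, i.e.\ the mixed braiding that carries the block $C\otimes_{\Cc}D$ to the left past $A$ and $B$. For the right-hand side, \eqref{eq:varphi-hat-def} gives $\hat\varphi_{X,Y}=T_{2;Y,X}^{-1}\circ T(c_{X,Y})\circ T_{2;X,Y}$; inserting the product-object formulas \eqref{eq:T-tensor-structure} for $T_2$ and \eqref{eq:CxC-braiding-def} for the braiding (so that $T(c_{X,Y})=c_{A,C}\otimes_{\Cc}c_{D,B}^{-1}$) expresses $\hat\varphi_{X,Y}$ as an explicit composite of the elementary crossings $c_{C,B}^{-1}$, $c_{A,C}$, $c_{D,B}^{-1}$, $c_{A,D}$ together with associators. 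Both composites realise the permutation $ABCD\mapsto CDAB$.

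It remains to match the two string diagrams. On the left I would expand the braidings $c_{A,\,C\otimes_{\Cc}D}$ and $c_{C\otimes_{\Cc}D,\,B}^{-1}$ of $A$ and $B$ with the composite object $C\otimes_{\Cc}D$ using the hexagon axiom for $c$ in $\Cc$; this resolves them into precisely the four elementary crossings $c_{A,C},c_{A,D},c_{C,B}^{-1},c_{D,B}^{-1}$ that already appear on the right. A finite sequence of hexagon identities and naturality squares for $c$ then brings the two composites into the same normal form, namely the single crossing pattern drawn in \eqref{eq:mixed-braid-tildephi-braid}. The main obstacle is exactly this bookkeeping: one must track the associators and the ordering of crossings so that the two composites agree literally, and be careful with the inverse-braiding convention built into $\CoC$ (the factors $c^{-1}$ in \eqref{eq:T-tensor-structure} and \eqref{eq:CxC-braiding-def}, cf.\ the footnote to \eqref{eq:T-tensor-structure}), since a single flipped crossing would destroy the match. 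As coherence for braided categories guarantees that two composites with the same underlying crossing data are equal, the only genuine content is the verification that the crossing data on both sides coincide, which is the hexagon computation above.
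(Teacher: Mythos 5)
Your proposal is correct and follows essentially the same route as the paper: reduce to product objects $X=A\boxtimes B$, $Y=C\boxtimes D$ via (the multi-factor version of) lemma \ref{lem:nat-xfer-unique-on-prod}, then verify the resulting identity of composites of braidings and associators by a string-diagram/hexagon computation, exactly as in the paper's proof (which merely phrases the left-hand side via lemma \ref{lem:varphi-other-def} rather than expanding $\varphi_{A\boxtimes B,\,C\otimes_{\Cc}D}=\tilde\varphi_{A,B,\,C\otimes_{\Cc}D}$ directly, a cosmetic difference). One small caution: Joyal--Street coherence identifies composites whose underlying \emph{braids} agree up to isotopy, not merely ones with the same multiset of crossings, but since your hexagon/naturality normal-form argument establishes precisely this isotopy, the proof stands.
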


\begin{proof}
From \eqref{eq:varphi-alternate-def} and \eqref{eq:varphi-hat-def} we see that we have to establish commutativity of
\be
\raisebox{2em}{\xymatrix{
T(X \otimes_{\Cc^2} Y) \ar[r]^{T_2^{-1}} \ar[d]^{T(c_{X,Y})} 
& T(X) \otimes_{\Cc} T(Y) \ar[r]^-{\sim}  
& T(X) \otimes_{\Cc} T(T(Y) \boxtimes \one) \ar[r]^{T_2} 
& T(X \otimes_{\Cc^2} (T(Y) \boxtimes \one)) \ar[d]_{T(c_{X,T(Y) \boxtimes \one})}
\\
T(Y \otimes_{\Cc^2} X) \ar[r]^{T_2^{-1}} 
&
T(Y) \otimes_{\Cc} T(X) \ar[r]^-{\sim} 
& T(T(Y) \boxtimes \one) \otimes_{\Cc} T(X)\ar[r]^{T_2}
& T((T(Y) \boxtimes \one) \otimes_{\Cc^2} X)
}}
\ee
By lemma \ref{lem:nat-xfer-unique-on-prod}, it is enough to verify this for $X = A \boxtimes B$ and $Y = U \boxtimes V$ for all $A,B,U,V \in \Cc$. In this case, the above diagram reads (not writing $\otimes_\Cc$, brackets between objects, and associators)
\be
\raisebox{2em}{\xymatrix{
AUBV \ar[rr]^{\id_A \otimes c_{U,B} \otimes \id_V} \ar[d]_{c_{A,U} \otimes c_{V,B}^{-1}} 
&& ABUV \ar[r]^{\sim}  
& ABUV\one \ar[rr]^{\id_A \otimes c_{UV,B}^{-1} \otimes \id_\one} 
&& AUVB\one \ar[d]^{c_{A,UV} \otimes c_{\one,B}^{-1}}
\\
UAVB \ar[rr]^{\id_U \otimes c_{A,V} \otimes \id_B} 
&& UVAB \ar[r]^{\sim} 
& UV1AB \ar[rr]^{\id_{UV} \otimes c_{\one,A}^{-1} \otimes \id_B}
&& UVA1B
}}
\quad .
\ee
That this diagram commutes can be checked easily by drawing string diagrams.
\end{proof}

\begin{remark} \label{rem:T-central}
The functor $T$ is \textsl{central} in the sense of \cite[Sect.\,2]{Davydov:2010a}. Namely, it factors through the braided tensor functor $G$ from $\CoC$ to the monoidal centre of $\Cc$ as $T_{\Cc} = \big[ \CoC \xrightarrow{G} \mathcal{Z}(\Cc) \xrightarrow{\text{forget}} \Cc\big]$; we refer to \cite[Sect.\,2]{Davydov:2010a} for details. 
\end{remark}

\subsection{Algebras} \label{sec:algebra}

We recall the definition of algebras in monoidal categories, and of commutative algebras in braided monoidal categories. In the category of vector spaces, these give the usual notions of algebras /  commutative algebras.

\begin{definition}
(i) An \textsl{algebra} in a monoidal category $\Cc$ is an object $A \in \Cc$ together with a morphism $\mu : A \otimes A \to A$ which is associative in the sense that
\be
 \raisebox{2em}{\xymatrix@R=.5em{
 A \otimes (A \otimes A) \ar[r]^-{\id_A \otimes \mu} \ar[dd]_{\alpha_{A,A,A}}
 &
 A \otimes A
 \ar[dr]^\mu
 \\
 && A
 \\
 (A \otimes A) \otimes A \ar[r]^-{\mu \otimes \id_A}
 &
 A \otimes A \ar[ur]_\mu
}}
\ee
commutes. $A$ is called \textsl{unital} if it is equipped with a morphism $\iota : \one \to A$ such that \be
 \raisebox{2em}{\xymatrix{
 \one \otimes A \ar[r]^{\iota \otimes \id_A} \ar[dr]_{\lambda_A}
 & A \otimes A \ar[d]^\mu
 & A \otimes \one \ar[l]_{\id_A \otimes \iota} \ar[dl]^{\rho_A}
 \\
 & A
}} 
\ee
commutes. Here $\alpha$ is the associator of $\Cc$ and $\lambda$, $\rho$ are the unit isomorphisms. An \textsl{algebra homomorphism} from $(A,\mu)$ to $(A',\mu')$ is a morphism $f : A \to A'$ such that $f \circ \mu = \mu' \circ (f \otimes f)$. If $A$ and $A'$ are unital, $f$ is called \textsl{unital} if $f \circ \iota = \iota'$.
\\[.3em]
(ii) An algebra in a braided monoidal category is called \textsl{commutative} if $\mu \circ c_{A,A} = \mu$. 
\end{definition}

The tensor unit $\one \in \Cc$ with multiplication $\mu = \lambda_I = \rho_I$ and unit $\iota = \id_\one$ is always a commutative unital algebra. A similar class of examples are objects $S \in \Cc$ such that $\Cc(S,S) = k \cdot \id_S$ and $S \otimes_{\Cc} S \cong S$. Each isomorphism $S \otimes_{\Cc} S \to S$ is a commutative associative multiplication on $S$ (not necessarily unital), and of course all these multiplications give isomorphic algebras, see appendix \ref{app:idempot-alg}. In the $W_{2,3}$-model treated in section \ref{sec:W23-model}, this will give three examples of algebras (namely the representations $\Wc$, $\Wc^*$ and $\Wc(0)$, see section \ref{sec:W23-model} for details).

Suppose $\Cc$ has property (C). By a \textsl{pairing} on an algebra $A$ in $\Cc$ we mean a morphism $\pi : A \otimes A \to \one^*$. The pairing is called \textsl{invariant} if 
\be
  \pi \circ (\id_A \otimes \mu) = \pi \circ (\mu \otimes \id_A) \circ \alpha_{A,A,A} \ .
\ee
If $A$ is unital, giving an invariant pairing is the same as giving a morphism
   $\tau : A \to \one^*$ via $\pi = \tau \circ \mu$.
The notion of non-degeneracy of a paring on $A$ is that of definition \ref{def:non-deg-pair}. 

\medskip

A brief comparison between these algebraic notions and the discussion of CFT in section \ref{sec:bb-corr} is given in table \ref{tab:alg-cft-table1}.

\begin{table}[bt]
\begin{center}
\begin{tabular}{p{16em}|p{21em}}
Conformal field theory & Algebraic counterpart
\\
\hline
\\[-.5em]
$\Rep\Vc$, for a vertex operator algebra $\Vc$ which is `logarithmic-rational': the tensor product theory of \cite{Huang:2010} should apply and it should only have a finite number of irreducible sectors.
&
A braided monoidal category $\Cc$ which is
$\Cb$-linear, abelian, with right exact tensor product, and which satisfies the finiteness
condition (PF) and has conjugates in the
sense of condition (C).
\\
\\
$(B,m,\omega^*)$, a non-degenerate
boundary theory as in definitions 
\ref{def:theo-on-bnd} and \ref{def:non-deg-bnd}.
&
An algebra $B \in \Cc$ with associative product $m : B \otimes_{\Cc} B \to B$ and a
map $\omega^* : B \to \one^*$ such that the pairing $\omega^* \circ m$ on $B$ is
non-degenerate.
\\
\\
$(F,M,\Omega^*)$, a non-degenerate CFT on $\Cb$
as in definitions \ref{def:CFT-plane} and \ref{def:non-deg-CFT}.
&
An algebra $F \in \CoC$ with associative, commutative product $M : F \otimes_{\Cc^2} F \to F$ and a map $\Omega^* : F \to \one^* \boxtimes \one^*$ such that the pairing $\Omega^* \circ M$ is
non-degenerate.
\\
\\
$(F,M)$, a CFT on $\Cb$ with background states as defined in section \ref{sec:genout-ideal}.
&
An algebra $F \in \CoC$ with associative, commutative product $M : F \otimes_{\Cc^2} F \to F$.
\end{tabular}
\end{center}
\caption{Relation between the algebraic notions of this section and the discussion of CFT in section \ref{sec:bb-corr}. These relations have been proved for non-logarithmic rational CFTs (see \cite{Fuchs:2001am,Fjelstad:2005,Huang:2005,Kong2006b}). In general the table should be understood as `similarity in structure'. This table continues after some preparation with table \ref{tab:alg-cft-table2} below.}
\label{tab:alg-cft-table1}
\end{table}

\medskip

It is not surprising that a monoidal functor between two monoidal categories transports algebras to algebras. However, also the weaker notion of a lax monoidal functor is sufficient for this purpose.

\begin{definition}\label{def:lax-monoidal-def}
Let $\Ac$ and $\Bc$ be two monoidal categories and let $F : \Ac \to \Bc$ be a functor. Then $F$ is called \textsl{lax monoidal} if it is equipped 
with morphisms $F_0 : \one_{\Bc} \to F(\one_{\Ac})$ and $F_{2;U,V} : F(U) \otimes_{\Bc} F(V) \to F(U \otimes_{\Ac} V)$, the latter natural in $U,V$, such that for all $U,V,W \in \Ac$,
\be\label{eq:lax-monoidal-hexagon}
\raisebox{4em}{\xymatrix{
F(U)\otimes_{\Bc} \big(F(V)\otimes_{\Bc} F(W)\big) \ar[rr]^{\alpha^{\Bc}_{FU,FV,FW}}
\ar[d]_{\id_{FU}\otimes F_{2;V,W}}  
&&   \big(F(U)\otimes_{\Bc} F(V)\big)\otimes_{\Bc} F(W) \ar[d]^{F_{2;U,V} \otimes \id_{FW} }  
\\
F(U)\otimes_{\Bc} F(V\otimes_{\Ac} W) \ar[d]_{F_{2;U,VW}} 
&& F(U\otimes_{\Ac} V)\otimes_{\Bc} F(W) \ar[d]^{F_{2;UV,W}} 
\\
F(U\otimes_{\Ac} (V\otimes_{\Ac} W)) \ar[rr]^{F(\alpha^{\Ac}_{U,V,W})}  
&& 
F((U\otimes_{\Ac} V)\otimes_{\Ac} W)}}   
\quad ,
\ee
and
\be\label{eq:lax-monoidal-triangle}
\raisebox{2em}{\xymatrix{
\one_{\Bc} \otimes_{\Bc} F(U) \ar[d]_{F_0 \otimes \id_{FU}}
\ar[r]^{\lambda^{\Bc}_{FU}} & F(U)  \\
F(\one_{\Ac})\otimes_{\Bc} F(U) \ar[r]^-{F_{2;\one,U}} &  
F(\one_{\Ac} \otimes_{\Ac} U) \ar[u]_{F(\lambda^{\Ac}_U)} }}
\quad , \quad
\raisebox{2em}{\xymatrix{
F(U)\otimes_{\Bc} \one_{\Bc} \ar[d]_{ \id_{FU}\otimes F_0}
\ar[r]^{\rho^{\Bc}_{FU}} & F(U) \\
F(U)\otimes_{\Bc} F(\one_{\Ac}) \ar[r]^-{F_{2;U,\one}} &
F(U\otimes_{\Ac} \one_{\Ac})\ar[u]_{F(\rho^{\Ac}_U)} }}  
\ee
commute. If $F_0$ and $F_2$ are isomorphisms, $F$ is called {\em strong monoidal} (or just {\em monoidal}).
\end{definition}

Let $\Ac$, $\Bc$ be monoidal categories and let $F : \Ac \to \Bc$ be a lax monoidal functor. If $(A,\mu)$ is an algebra in $\Ac$, then the image object $F(A)$ also carries the structure of an algebra, with associative multiplication given by
\be \label{eq:F(A)-product}
\mu_{F(A)} = \Big[
F(A) \otimes_{\Bc} F(A) \xrightarrow{F_{2;A,A}} F(A \otimes_{\Ac} A) \xrightarrow{F(\mu)} F(A) \Big] \ .
\ee
If $A$ is unital with unit $\iota$, then so is $F(A)$ with unit $F(\iota) \circ F_0$. If $f : A \to B$ is a homomorphism of algebras in $\Ac$, then $F(f) : F(A) \to F(B)$ is a homomorphism of algebras in $\Bc$. See \cite[Sect.\,5]{Joyal:1993} for details.

\subsection{The full centre in $\CoC$} \label{sec:fullcentre}

In this subsection, $\Cc$ is assumed to be a braided monoidal $k$-linear abelian category with conjugates as in (C), which satisfies the finiteness condition (PF), and which has a $k$-linear right exact tensor product functor. The assumptions (PF) and (C) will guarantee existence of the full centre of an algebra in $\Cc$, to be defined now (though much weaker conditions should be sufficient, too). Recall the definition of the functor $T : \CoC \to \Cc$ from sections \ref{sec:deligne-prod} and \ref{sec:braiding}, as well as the mixed braiding $\varphi_{X,A}$ from \eqref{eq:phi-mixed-braid}.

\begin{definition} \label{eq:full-centre-in-C2}
Let $(A,\mu_A)$ be an algebra in $\Cc$. The \textsl{full centre in} $\CoC$ is an object $Z(A) \in \CoC$ together with a morphism $z : T(Z(A)) \to A$ in $\Cc$ such that the following universal property holds: For all pairs $(X,x)$ with $X \in \CoC$ and $x : T(X) \to A$ such that the diagram
\be \label{eq:full-cent-univprop}
\raisebox{2.4em}{\xymatrix@R=.7em@C=2.5em{
T(X) \otimes_\Cc A \ar[dd]^{\varphi_{X,A}} \ar[rr]^{x \otimes_\Cc \id_A} && A \otimes_\Cc A \ar[dr]^{\mu_A}
\\
&&& A
\\
A \otimes_\Cc T(X)  \ar[rr]^{\id_A \otimes_\Cc x} && A \otimes_\Cc A \ar[ur]_{\mu_A}
}}
\ee
in $\Cc$ commutes, there exists a unique morphism $\zeta_{(X,x)} : X \to Z(A)$ such that
\be \label{eq:full-cent-pairmap}
\raisebox{2em}{\xymatrix@R=1.5em@C=1.5em{
T(X) \ar[rr]^-{T(\zeta_{(X,x)})} \ar[dr]_{x} && T(Z(A)) \ar[dl]^z
\\
& A
}}
\ee
commutes.
\end{definition}

The existence of the full centre will be proved in theorem \ref{thm:full-centre-via-left-centre} below.

\begin{table}[bt]
\begin{center}
\begin{tabular}{p{18em}|p{19em}}
Conformal field theory & Algebraic counterpart
\\
\hline
\\[-.5em]
$b'$, a `candidate bulk-boundary map' from a `candidate space of bulk fields' $F'$ to the space of boundary fields $B$ of a boundary theory, satisfying in particular the centrality condition from section \ref{sec:boundary-to-bulk}.
&
An object $F' \in \CoC$ and an algebra $B \in \Cc$ together with a morphism $b' : T(F') \to B$ such that $b' \in \Cent(F',B)$.
\\
\\
$\Pc$, the category of pairs from section \ref{sec:boundary-to-bulk} and the terminal object $(F(B),b(B))$ in it from \eqref{eq:terminal-obj-in-P}, interpreted as the maximal bulk theory compatible with the given boundary theory $B$.
&
The category $\Cc_{\text{full\,center}}(B)$ for a given algebra $B \in \Cc$ and the terminal object $(Z,z)$ in it, where $Z \in \CoC$ is the full centre of $B$ and $z$ the corresponding map $T(Z) \to B$.
\\
\\
$(F,M,\Omega^*;B,m,\omega^*;b)$, a CFT on the upper half plane as in definition \ref{def:CFT-uhp}, for which the CFT on $\Cb$ and the boundary theory are non-degenerate.
&
A commutative algebra $(F,M)$ in $\CoC$ with non-degenerate pairing $\Omega^* \circ M$, a not necessarily commutative algebra $(B,m)$ in $\Cc$ with non-degenerate pairing $\omega^* \circ m$, and an algebra map $b : T(F) \to B$, such that $b \in \Cent(F,B)$.
\end{tabular}
\end{center}
\caption{Continuation of table \ref{tab:alg-cft-table1}.}
\label{tab:alg-cft-table2}
\end{table}

For later use we give a name to the space of maps for which the diagram \eqref{eq:full-cent-univprop} commutes. For $A$ an algebra in $\Cc$ and $X \in \CoC$
\be
  \Cent(X,A) := \big\{\, x : T(X) \to A \,\big|\, \text{  \eqref{eq:full-cent-univprop} commutes } \,\} 
\ee
 (`Cent' for centrality condition, cf.\,table \ref{tab:alg-cft-table2}).

\begin{remark} \label{rem:fullcentre}
(i) The above definition can be recast into describing the full centre in $\CoC$ as a terminal object. Namely, consider the category $\Cc_{\text{full\,center}}(A)$ whose objects are pairs $(X,x)$ with $X \in \CoC$ and $x \in \Cent(X,A)$, and whose morphisms are maps $f : X \to Y$ in $\CoC$ such that
\be \label{eq:morph-in-cat-of-pairs}
\raisebox{1.4em}{\xymatrix@R=1.5em@C=1.5em{
T(X) \ar[rr]^{T(f)} \ar[dr]_{x} && T(Y) \ar[dl]^y
\\
& A
}}
\ee
commutes. By definition, the full centre $(Z,z)$ of an algebra $A$ is a terminal object in $\Cc_{\text{full\,center}}(A)$. 
\\[.3em]
(ii) The full centre was introduced in \cite{Runkel:2005qw} and \cite[Def.\,4.9]{Fjelstad:2006aw} in the case that $\Cc$ is a modular category and is (in particular) an object in $\CoC$. The notion was then generalised to arbitrary monoidal categories (not necessarily braided or abelian) in \cite[Sect.\,4]{Davydov:2009}, where the full centre, if it exists, is an object in the monoidal centre $\Zc(\Cc)$ of $\Cc$. If $\Cc$ is modular, $\Zc(\Cc) \cong \CoC$ by \cite[Thm.\,7.10]{Muger2001b}, and the two definitions agree (cf.\,\cite[Sect.\,8]{Davydov:2009}). In general, $\Zc(\Cc)$ and $\CoC$ may not be equivalent (but $T$ factors through $\Zc(\Cc)$, cf.\ remark \ref{rem:T-central}). For this reason, we added the suffix ``in $\CoC$'' to the name ``full centre'' in definition \ref{eq:full-centre-in-C2}.
However, because we will only ever use the full centre in $\CoC$ and never the full centre in $\Zc(\Cc)$, we will drop the suffix ``in $\CoC$'' from now on.
\end{remark}

Let $(Z,z)$ be the full centre of an algebra $A$ in $\Cc$ as in definition \ref{eq:full-centre-in-C2}. Suppose we are given a morphism $\mu_Z : Z \otimes_{\Cc^2} Z \to Z$; this will later be an associative, commutative product, but let us not demand that yet. Eqn.\,\eqref{eq:F(A)-product} defines a product $\mu_{T(Z)}$ on $T(Z)$. Suppose further that $z$ intertwines $\mu_{T(Z)}$ and $\mu_A$, i.e.\
\be \label{eq:z-is-algebramap}
\raisebox{2em}{\xymatrix@R=1em@C=1.5em{
T(Z) \otimes_{\Cc} T(Z) \ar[rr]^{\mu_{T(Z)}} \ar[dd]_{z \otimes_{\Cc} z} 
&& T(Z) \ar[dd]^z
\\
& T(Z \otimes_{\Cc^2} Z) \ar[ur]_-{T(\mu_Z)} \ar[ul]^{T_{2;Z,Z}^{-1}}
\\
A \otimes_{\Cc} A \ar[rr]^{\mu_A} && A
}}
\ee
commutes (we included also the definition of $\mu_{T(Z)}$). This diagram can be read in a second way: Starting from $T(Z \otimes_{\Cc^2} Z)$ and following the two paths to $A$, we see that it is an instance of \eqref{eq:full-cent-pairmap}. If we can establish \eqref{eq:full-cent-univprop} for the left path, the universal property of $(Z,z)$ provides us with a unique choice for $\mu_Z$, which, as we will see, is automatically associative and commutative. This is done in the next statement, which is just \cite[Prop.\,4.1]{Davydov:2009} with $\Zc(\Cc)$ replaced by $\CoC$. Even the proof works in the same way. Still, as the full centre is one of the main players in this paper we include parts of the proof in appendix \ref{app:fullcentre-alg-assoc+comm}.

\begin{theorem} \label{thm:fullcentre-alg-assoc+comm}
Let $(Z,z)$ be the full centre of an algebra $A \in \Cc$. 
There exists a unique product $\mu_Z : Z \otimes Z \to Z$ such that \eqref{eq:z-is-algebramap} commutes. This product is associative and commutative. If $A$ has a unit $\iota_A$, then there exists a unique map $\iota_Z : \one \to Z$ such that 
\be
\raisebox{2em}{\xymatrix{
T(\one) \ar[r]^{T(\iota_Z)}  \ar[d]_{(F_0)^{-1}}
& T(Z) \ar[d]^{z}
\\
\one \ar[r]^{\iota_A} & A
}}
\ee 
commutes. This map is a unit for the product $\mu_Z$. 
\end{theorem}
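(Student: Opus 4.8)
The plan is to reduce every assertion about $Z$ to an identity about the single map $z\colon T(Z)\to A$ precomposed with $T(-)$, so that all the actual work takes place inside the algebra $A$, and then to invoke the uniqueness half of the universal property in Definition~\ref{eq:full-centre-in-C2} to transport the identity back up to $Z$. The organising observation I would isolate first is that \emph{centrality is closed under the product}: given $x\in\Cent(X,A)$ and $y\in\Cent(Y,A)$, define their convolution
\[
  x\cdot y \;:=\; \mu_A \circ (x \otimes_{\Cc} y) \circ T_{2;X,Y}^{-1} \;:\; T(X \otimes_{\Cc^2} Y) \longrightarrow A \ ,
\]
and then $x\cdot y\in\Cent(X\otimes_{\Cc^2}Y,A)$. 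This is the key lemma, and verifying it is the main obstacle. The proof is essentially a string-diagram computation forced by the shape of \eqref{eq:phi-properties}: I would expand $\varphi_{X\otimes_{\Cc^2}Y,A}$ via the factorisation \eqref{eq:phi-properties} into the pieces $\varphi_{X,A}$ and $\varphi_{Y,A}$, then apply the centrality conditions for $x$ and for $y$ in turn and reassemble using associativity of $\mu_A$. Everything else is a formal consequence.

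For existence and uniqueness of $\mu_Z$, I would apply the key lemma to $x=y=z\in\Cent(Z,A)$, obtaining $z\cdot z\in\Cent(Z\otimes_{\Cc^2}Z,A)$. The universal property then yields a unique $\zeta\colon Z\otimes_{\Cc^2}Z\to Z$ with $z\circ T(\zeta)=z\cdot z$ and I set $\mu_Z:=\zeta$. Unwinding definitions, the equation $z\circ T(\mu_Z)=z\cdot z$ is exactly commutativity of \eqref{eq:z-is-algebramap}; conversely any $\mu_Z'$ making \eqref{eq:z-is-algebramap} commute satisfies $z\circ T(\mu_Z')=z\cdot z$, so uniqueness in the universal property gives $\mu_Z'=\mu_Z$.

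Associativity and commutativity both follow by the same principle. For associativity, I would check that both $\mu_Z\circ(\mu_Z\otimes\id_Z)$ and $\mu_Z\circ(\id_Z\otimes\mu_Z)\circ\alpha$, after composing with $z\circ T(-)$, equal the triple convolution $(z\cdot z)\cdot z=z\cdot(z\cdot z)$; this uses $z\circ T(\mu_Z)=z\cdot z$, naturality of $T_2$, associativity of $\mu_A$, and the hexagon for the monoidal functor $T$. Since the key lemma shows this triple convolution lies in $\Cent$, uniqueness in the universal property forces the two composites to coincide. For commutativity I would compute $z\circ T(\mu_Z\circ c_{Z,Z})=(z\cdot z)\circ T(c_{Z,Z})$ and rewrite $T_2^{-1}\circ T(c_{Z,Z})$ by means of \eqref{eq:varphi-hat-def} and then Lemma~\ref{lem:phi-hatphi-rel}, turning it into $\mu_A\circ(z\otimes_{\Cc}z)\circ\varphi_{Z,T(Z)}\circ T_2^{-1}$. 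Naturality of $\varphi$ in its $\Cc$-argument together with the centrality of $z$ then collapses $\mu_A\circ(z\otimes_{\Cc}z)\circ\varphi_{Z,T(Z)}$ back to $\mu_A\circ(z\otimes_{\Cc}z)$, so that $z\circ T(\mu_Z\circ c_{Z,Z})=z\cdot z=z\circ T(\mu_Z)$, and uniqueness yields $\mu_Z\circ c_{Z,Z}=\mu_Z$.

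Finally, for the unit, assuming $A$ carries $\iota_A\colon\one\to A$, I would take $X=\one$ and $x=\iota_A\circ F_0^{-1}\colon T(\one)\to A$. Centrality of this pair is immediate, since $\varphi_{\one,A}$ is the identity up to unitors and $\iota_A$ is a two-sided unit, so the universal property produces the unique $\iota_Z\colon\one\to Z$ with $z\circ T(\iota_Z)=\iota_A\circ F_0^{-1}$ — which is precisely the asserted commuting square, and its uniqueness is the uniqueness in the universal property. To see that $\iota_Z$ is a unit for $\mu_Z$, I would compare $\mu_Z\circ(\iota_Z\otimes_{\Cc^2}\id_Z)$ with the left unitor $\lambda_Z$ by composing both with $z\circ T(-)$: the left side becomes $\mu_A\circ((z\circ T(\iota_Z))\otimes_{\Cc}z)\circ T_2^{-1}=\mu_A\circ((\iota_A\circ F_0^{-1})\otimes_{\Cc}z)\circ T_2^{-1}$, which by the unit axiom of $A$ and the triangle identity for the monoidal functor $T$ reduces to $z\circ T(\lambda_Z)$; uniqueness then gives $\mu_Z\circ(\iota_Z\otimes_{\Cc^2}\id_Z)=\lambda_Z$, and the already established commutativity supplies the right unit axiom.
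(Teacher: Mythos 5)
Your proposal is correct and takes essentially the same route as the paper's proof in appendix \ref{app:fullcentre-alg-assoc+comm}: your convolution lemma is exactly the content of the paper's diagram \eqref{eq:TZZ-in-Cent} (left cell is \eqref{eq:phi-properties}, right cell applies centrality of $z$), your commutativity computation is the paper's verification that $c_{Z,Z}$ is an endomorphism of $(Z \otimes_{\Cc^2} Z, w)$ via \eqref{eq:varphi-hat-def}, lemma \ref{lem:phi-hatphi-rel} and naturality of $\varphi$, and your unit argument matches the paper's use of \eqref{eq:varphi_1U}. The only difference is organisational: you isolate centrality-closure as a reusable lemma and invoke the universal property directly, where the paper checks each instance ($w$, the triple product $y$, and $u$) separately and phrases uniqueness via terminality in $\Cc_{\text{full\,center}}(A)$, which is equivalent by remark \ref{rem:fullcentre}\,(i).
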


In particular, $z : T(Z) \to A$ is an algebra map. It is unital if $A$ is unital.

\subsection{The right adjoint $R$ of $T$} \label{sec:computeR1*}

As in the previous subsection, $\Cc$ is assumed to be $k$-linear abelian and braided monoidal, to satisfy (PF) and (C), and to have a $k$-linear right exact tensor product.
The aim of this section is to show the existence of the right adjoint $R  : \Cc \to \CoC$ of the functor $T : \CoC \to \Cc$ and give an explicit expression for it. In the next subsection, the adjoint $R$ will be used to give an explicit description of the full centre and thereby prove its existence.

\medskip

Consider the functor $\Cc(T(-),\one^*)$ from $\CoC$ to $\vect$. Define $R_{\one^*}$ (if it exists) to be the representing object of this functor. That is, there is a natural (in $X$) isomorphism of functors $\CoC \to \vect$,
\be \label{eq:R1*-chi}
  \chi_X : \Cc(T(X),\one^*) \overset{\sim}\longrightarrow \Cc^2(X,R_{\one^*}) \ .
\ee
Here and below, $\Cc^2(X,Y)$ denotes the space of morphisms from $X$ to $Y$ in $\CoC$.
We will now show that $R_{\one^*}$ may be written as the cokernel of a morphism between two projective objects in $\CoC$; in particular, $R_{\one^*}$ exists.

Let $P$ be a projective generator of $\Cc$ (which exists by (PF)). By corollary \ref{cor:proj-gen-in-prod}, also $\CoC$ satisfies property (PF) and $P \boxtimes P$ is a projective generator of $\CoC$. Define the linear subspace $N \subset \Cc^2(P \boxtimes P, P \boxtimes P^*)$ to consist of all $f : P \boxtimes P \to P \boxtimes P^*$ such that
\be \label{eq:N-def-cond}
  \Big[ P \otimes_{\Cc} P \xrightarrow{~T(f)~} P \otimes_{\Cc} P^* \xrightarrow{~\beta_P~} \one^* \Big] ~=~ 0 \ ,
\ee
where $\beta_P$ is the non-degenerate pairing defined in \eqref{eq:def-beta-pairing}. 
Let $\{u_1, \dots, u_{|N|}\}$ be a basis of $N$ (the space is finite-dimensional by (PF)). Define the map $n : (P \boxtimes P)^{\oplus |N|} \to P \boxtimes P^*$ as $n = \sum_{i=1}^{|N|} u_i \circ \pi_i$, with $\pi_i$ the projection to the $i$'th direct summand. Define $R'$ to be the cokernel of $n$, so that we have the exact sequence
\be \label{eq:R'-def}
  (P \boxtimes P)^{\oplus |N|} \xrightarrow{~n~} P \boxtimes P^* \xrightarrow{\cok(n)} R' \longrightarrow 0 \ .
\ee
Now consider the diagram
\be \label{eq:r'-def}
\raisebox{2em}{\xymatrix@C=3em{
T((P \boxtimes P)^{\oplus |N|})  \ar[r]^-{T(n)}
& P \otimes P^* \ar[d]_{\beta_P} \ar[r]^-{T(\cok(n))}
& T(R') \ar@{-->}[dl]^{\exists ! r'} \ar[r] & 0
\\
& \one^*
}}
\qquad .
\ee
Since by its construction in \eqref{eq:TC-construction}, $T$ is right exact, the top row of the diagram is exact, i.e.\ $T(\cok(n))$ is the cokernel of $T(n)$.
Because $\beta_P \circ T(n) = \sum_{i=1}^{|N|} \beta_P \circ  T(u_i) \circ T(\pi_i)=0$ (by definition of the $u_i$), from the universal property of the cokernel we obtain the arrow $r' : T(R') \to \one^*$. The next theorem, whose proof can be found in appendix \ref{app:more-on-R1}, states that $R'$ is the object we are looking for.

\begin{theorem} \label{thm:R1*-via-projectives}
The object $R'$ just constructed represents the functor $\Cc(T(-),\one^*)$, i.e.\ one may take $R_{\one^*} = R'$.
\end{theorem}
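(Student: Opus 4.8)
The plan is to exhibit a natural isomorphism $\Phi_X : \Cc^2(X,R') \to \Cc(T(X),\one^*)$ and then take $\chi_X := \Phi_X^{-1}$ in \eqref{eq:R1*-chi}. The only sensible candidate is $\Phi_X(g) = r' \circ T(g)$ for $g : X \to R'$; naturality in $X$ is immediate from functoriality of $T$, since for $h : X' \to X$ one has $\Phi_{X'}(g \circ h) = r' \circ T(g) \circ T(h) = \Phi_X(g) \circ T(h)$. Both $\Cc^2(-,R')$ and $\Cc(T(-),\one^*)$ are contravariant functors $\CoC \to \vect$ that convert cokernels into kernels: for the first this is left exactness of contravariant $\Hom$, and for the second it follows because $T$ is right exact (by its construction in \eqref{eq:TC-construction}) while $\Cc(-,\one^*)$ is left exact.

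First I would reduce the claim to the single object $X = P \boxtimes P$. Since $P \boxtimes P$ is a projective generator of $\CoC$ (corollary \ref{cor:proj-gen-in-prod}), every $X$ admits a presentation $(P \boxtimes P)^{\oplus b} \to (P \boxtimes P)^{\oplus a} \to X \to 0$. Applying the two left-exact functors and the natural transformation $\Phi$ produces a map between two three-term exact sequences in which $\Cc^2(X,R')$ and $\Cc(T(X),\one^*)$ appear as the kernels of the respective maps between the $(P\boxtimes P)^{\oplus a}$- and $(P\boxtimes P)^{\oplus b}$-terms. By additivity, $\Phi$ is an isomorphism on these finite direct sums as soon as it is one on $P \boxtimes P$, and then $\Phi_X$, being the induced map on kernels, is automatically an isomorphism as well.

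For $X = P \boxtimes P$ I would compute both sides from the defining presentation \eqref{eq:R'-def}. Applying the exact functor $\Cc^2(P \boxtimes P,-)$ (exact because $P\boxtimes P$ is projective) identifies $\Cc^2(P \boxtimes P, R')$ with the cokernel of $n_* : \Cc^2(P\boxtimes P,(P\boxtimes P)^{\oplus|N|}) \to \Cc^2(P\boxtimes P, P\boxtimes P^*)$. I then introduce $\Psi : \Cc^2(P\boxtimes P, P\boxtimes P^*) \to \Cc(P\otimes P,\one^*)$, $\Psi(f) = \beta_P \circ T(f)$. By the very definition \eqref{eq:N-def-cond} of $N$ one has $\ker\Psi = N$; and since $\beta_P \circ T(f \circ \phi) = \Psi(f) \circ T(\phi)$, the subspace $N$ is a right ideal under precomposition, so $\mathrm{im}(n_*)$, which consists of all $\sum_i u_i \circ \phi_i$, is contained in $N$ and (taking $\phi_i = \id$) contains every $u_i$, whence $\mathrm{im}(n_*) = N$. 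Because $r' \circ T(\cok(n)) = \beta_P$ by construction \eqref{eq:r'-def}, and because $\cok(n)_*$ is surjective (projectivity again), every class in $\Cc^2(P\boxtimes P,R')$ is $\cok(n)\circ f$ with $\Phi_{P\boxtimes P}(\cok(n)\circ f) = r'\circ T(\cok(n)) \circ T(f) = \Psi(f)$. Thus $\Phi_{P\boxtimes P}$ is exactly the map induced by $\Psi$ on $\Cc^2(P\boxtimes P,P\boxtimes P^*)/N$, and it is bijective precisely when $\Psi$ is surjective.

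The surjectivity of $\Psi$ is the heart of the matter and the step I expect to be the main obstacle, since it is where condition (C) genuinely enters. I would prove it already on morphisms of the special form $\id_P \boxtimes g$ with $g : P \to P^*$. Using naturality of $\pi$ in its second argument together with $\beta_P = \pi_{P,P^*}(\delta_P)$ from \eqref{eq:def-beta-pairing}, one finds $\beta_P \circ (\id_P \otimes g) = \pi_{P,P}(g^* \circ \delta_P)$. The assignment $g \mapsto g^* \circ \delta_P$ is a bijection of $\Cc(P,P^*)$, because $(-)^*$ is an involutive equivalence (so $g \mapsto g^*$ is a bijection onto $\Cc(P^{**},P^*)$) and $\delta_P$ is an isomorphism; and $\pi_{P,P} : \Cc(P,P^*) \to \Cc(P\otimes P,\one^*)$ is an isomorphism by (C). Hence $g \mapsto \Psi(\id_P\boxtimes g)$ is already a bijection, so $\Psi$ is surjective. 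Combining with the reductions above, $\Phi_X$ is a natural isomorphism for all $X$, and $\chi_X := \Phi_X^{-1}$ realises $R'$ as the representing object $R_{\one^*}$.
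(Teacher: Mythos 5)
Your argument is correct, and it reorganises the proof rather than repeating it. The paper first converts representability into a universal property (lemma \ref{lem:repres+terminal}) and then verifies that property one pair $(X,x)$ at a time: it picks a projective presentation $(P\boxtimes P)^{\oplus k}\xrightarrow{K}(P\boxtimes P)^{\oplus m}\to X$, uses lemma \ref{lem:conjugate-prop}\,(i) to rewrite the components of $x\circ T(\cok(K))$ through $\beta_P$, and invokes the factorisation lemma \ref{lem:factor-through-n} twice --- once to push the lifted map through the cokernel defining $R'$ (existence of $\tilde x$) and once more, in a separate diagram chase, for uniqueness. You instead fix the only possible natural transformation $\Phi_X = r'\circ T(-)$ globally, observe that both $\Cc^2(-,R')$ and $\Cc(T(-),\one^*)$ turn cokernels into kernels, and reduce by a five-lemma argument to the single object $P\boxtimes P$, where $\Cc^2(P\boxtimes P,R')$ becomes the quotient of $\Cc^2(P\boxtimes P,P\boxtimes P^*)$ by $\mathrm{im}(n_*)=N=\ker\Psi$ and everything boils down to surjectivity of $\Psi$. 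That last step is precisely lemma \ref{lem:conjugate-prop}\,(i) --- your identity $\beta_P\circ(\id_P\otimes g)=\pi_{P,P}(g^*\circ\delta_P)$ is the same naturality manipulation as in its proof --- so the raw ingredients coincide in both arguments: projectivity of $P\boxtimes P$ (corollary \ref{cor:proj-gen-in-prod}), right exactness of $T$, the definition \eqref{eq:N-def-cond} of $N$, and condition (C). What your packaging buys: existence and uniqueness come in a single stroke rather than as two separate chases; the factorisation statement is needed only in its trivial one-summand form ($N$ is stable under precomposition and contains the basis elements $u_i$, so $\mathrm{im}(n_*)=N$); and the purely formal part (reduction to the generator) is cleanly separated from the one genuine computation. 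What the paper's route buys in exchange: the general lemma \ref{lem:factor-through-n} it develops is reused later (lemma \ref{lem:kernel-of-cokn}, and through it theorem \ref{thm:trivial-twist}), and the pointwise construction yields an explicit formula for $\tilde x$ in terms of a presentation of $X$, which can be convenient in concrete computations. Both proofs are complete; yours is the more standard homological-algebra presentation of the same mathematics.
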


We will soon use the object $R_{\one^*}$ to construct the entire adjoint functor $R$, but first we would like to state one important property of $R_{\one^*}$. Given a natural transformation $(\nu_U : U \to U)_{U \in \Cc}$  of the identity functor on $\Cc$, set
\be
  \tilde \nu_U = \Big[ U \xrightarrow{\delta_U} U^{**} \xrightarrow{(\nu_{U^*})^*} U^{**} \xrightarrow{\delta_U^{-1}} U \Big] \ ,
\ee
where $\delta$ is the natural isomorphism $\Id \Rightarrow (-)^{**}$ from condition (C). Then $\tilde \nu$ is again natural in $U$. In particular, both $\boxtimes \circ (\nu \times \id)$ and $\boxtimes \circ (\id \times \tilde\nu)$ are natural transformations of $\boxtimes : \CxC \to \CoC$. Via the defining equivalence \eqref{eq:defining-equiv}, these give two natural transformations $\nu \boxtimes \id$ and $\id \boxtimes \tilde\nu$ of the identity functor on $\CoC$. 

\begin{theorem} \label{thm:trivial-twist}
Let $R_{\one^*} \in \CoC$ be as above and let $\nu : \Id_{\Cc} \Rightarrow \Id_{\Cc}$ be a natural transformation. Then $(\nu \boxtimes \id)_{R_{\one^*}} = (\id \boxtimes \tilde\nu)_{R_{\one^*}}$.
\end{theorem}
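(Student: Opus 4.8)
The plan is to exploit that $R_{\one^*}$ is a representing object, so that an endomorphism of $R_{\one^*}$ is completely determined by its effect on the universal element, and to push the desired equality down onto the projective generator $P$, where it becomes a statement about $\beta_P$ and the duality data of condition (C). Write $R' = R_{\one^*} = \cok(n)$ as in \eqref{eq:R'-def} and let $r' : T(R') \to \one^*$ be the map from \eqref{eq:r'-def}. By the construction in theorem \ref{thm:R1*-via-projectives}, under the natural isomorphism $\chi$ of \eqref{eq:R1*-chi} the map $r'$ is the universal element, i.e.\ $\chi_{R'}(r') = \id_{R'}$, and consequently $\chi_X^{-1}(\psi) = r' \circ T(\psi)$ for every $\psi : X \to R'$. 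In particular, two endomorphisms $g_1,g_2 : R' \to R'$ agree if and only if $r' \circ T(g_1) = r' \circ T(g_2)$. Taking $g_1 = (\nu \boxtimes \id)_{R'}$ and $g_2 = (\id \boxtimes \tilde\nu)_{R'}$, the theorem is reduced to the single equality $r' \circ T(g_1) = r' \circ T(g_2)$ of maps $T(R') \to \one^*$.

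Next I would transport this equality along the epimorphism $T(\cok(n))$, which is epi because $T$ is right exact. Hence it is enough to prove the equality after precomposition with $T(\cok(n))$. This is where the two natural transformations enter: since $\nu \boxtimes \id$ and $\id \boxtimes \tilde\nu$ are natural transformations of $\Id_{\CoC}$, naturality with respect to $\cok(n) : P \boxtimes P^* \to R'$ gives $g_i \circ \cok(n) = \cok(n) \circ (g_i)_{P \boxtimes P^*}$. Combining this with $r' \circ T(\cok(n)) = \beta_P$ from \eqref{eq:r'-def} and with functoriality of $T$ turns $r' \circ T(g_i) \circ T(\cok(n))$ into $\beta_P \circ T\big((g_i)_{P \boxtimes P^*}\big)$. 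By construction of these transformations from the defining equivalence of the Deligne product, their components on the product object $P \boxtimes P^*$ are $(\nu \boxtimes \id)_{P \boxtimes P^*} = \nu_P \boxtimes \id_{P^*}$ and $(\id \boxtimes \tilde\nu)_{P \boxtimes P^*} = \id_P \boxtimes \tilde\nu_{P^*}$, which $T$ sends to $\nu_P \otimes_\Cc \id_{P^*}$ and $\id_P \otimes_\Cc \tilde\nu_{P^*}$. Thus the whole statement reduces to
\be
  \beta_P \circ (\nu_P \otimes_\Cc \id_{P^*}) = \beta_P \circ (\id_P \otimes_\Cc \tilde\nu_{P^*}) ~:~ P \otimes_\Cc P^* \to \one^* \ .
\ee

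The core of the argument, and the step I expect to be the main obstacle, is this last identity, since it is here that the compatibility built into condition (C) is essential. By lemma \ref{lem:conjugate-prop} applied to $h = \nu_P : P \to P$, the left-hand side equals $\beta_P \circ (\id_P \otimes_\Cc (\nu_P)^*)$, so it suffices to prove $\tilde\nu_{P^*} = (\nu_P)^*$. I would establish this as a general fact about the operation $\nu \mapsto \tilde\nu$: expanding $\tilde\nu_{U^*} = \delta_{U^*}^{-1} \circ (\nu_{U^{**}})^* \circ \delta_{U^*}$ and using naturality of $\nu$ at the morphism $\delta_U$ to write $\nu_{U^{**}} = \delta_U \circ \nu_U \circ \delta_U^{-1}$, contravariance of $(-)^*$ yields $(\nu_{U^{**}})^* = (\delta_U^{-1})^* \circ (\nu_U)^* \circ (\delta_U)^*$. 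The relation $(\delta_U)^* = (\delta_{U^*})^{-1}$ from condition (C), together with $(\delta_U^{-1})^* = \big((\delta_U)^*\big)^{-1} = \delta_{U^*}$, makes the four outer $\delta$-factors cancel in pairs, leaving $\tilde\nu_{U^*} = (\nu_U)^*$; setting $U = P$ finishes the proof. The only delicate point is keeping track of the various duals $P^*, P^{**}, P^{***}$ and the exact domains on which $(\delta_U)^* = (\delta_{U^*})^{-1}$ is applied, but once these are aligned the cancellation is immediate.
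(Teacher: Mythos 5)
Your proof is correct, and while it shares the paper's skeleton, the device by which you conclude the equality on $R_{\one^*}$ is genuinely different. Common to both arguments are: naturality of $\nu \boxtimes \id$ and $\id \boxtimes \tilde\nu$ at the epimorphism $\cok(n) : P \boxtimes P^* \to R_{\one^*}$, the computation $\tilde\nu_{U^*} = (\nu_U)^*$ from the coherence $(\delta_U)^* = (\delta_{U^*})^{-1}$ (your derivation via naturality of $\nu$ at $\delta_U$ is exactly the hidden step in the paper's one-line chain), and lemma \ref{lem:conjugate-prop}\,(ii) for the pairing $\beta_P$. The paper, however, concludes via the auxiliary lemma \ref{lem:kernel-of-cokn} — that $\cok(n) \circ (u \boxtimes \id - \id \boxtimes u^*) = 0$ for \emph{every} $u \in \Cc(P,P)$ — which it proves by factoring through $n$ using lemma \ref{lem:factor-through-n}; combined with naturality and surjectivity of $\cok(n)$, this kills the difference $(\nu \boxtimes \id - \id \boxtimes \tilde\nu)_{R_{\one^*}}$. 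You instead exploit the representability already established in theorem \ref{thm:R1*-via-projectives}: since $g \mapsto r' \circ T(g) = \chi_{R'}^{-1}(g)$ is injective on $\Cc^2(R',R')$, it suffices to compare the two endomorphisms after pairing against the universal element, and precomposing with the epimorphism $T(\cok(n))$ (together with $r' \circ T(\cok(n)) = \beta_P$ from \eqref{eq:r'-def}) reduces everything to $\beta_P \circ (\nu_P \otimes_\Cc \id_{P^*}) = \beta_P \circ (\id_P \otimes_\Cc (\nu_P)^*)$, which is precisely lemma \ref{lem:conjugate-prop}\,(ii). Your route buys economy and conceptual transparency: the factorization lemma is not invoked again, being already absorbed into the representability statement, and endomorphisms of a representing object are detected by the universal pairing. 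The paper's route buys the standalone intermediate fact of lemma \ref{lem:kernel-of-cokn}, valid for arbitrary endomorphisms $u$ of $P$ rather than only components of natural transformations, and it does not presuppose the isomorphism $\chi$. Since both rest on the same two core ingredients, the difference is one of bookkeeping rather than substance, but your version is a legitimate and somewhat leaner alternative.
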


The theorem is proved in appendix \ref{app:more-on-R1}.

\begin{remark}\label{rem:trivial-twist}
Let $\Vc$ be a vertex operator algebra such that $\Cc = \Rep(\Vc)$ satisfies the conditions set out in the beginning of this subsection. Then $\exp(2 \pi i L_0)$ acting on some representation $A \in \Cc$ is an example of a natural transformation of the identity functor (it commutes with all modes of all fields in the VOA, and it can be moved past all intertwiners $f : A \to B$, i.e.\ it is natural in $A$). Theorem \ref{thm:trivial-twist} states in this case that $\exp(2 \pi i \cdot L_0 \otimes_{\Cb} \id)$ and $\exp(2 \pi i \cdot \id \otimes_{\Cb} L_0)$ act in the same way on the $\Vc \otimes_{\Cb} \Vc$-module $R_{\one^*}$. In other words,
\be
  \exp\!\Big\{ 2 \pi i \big( L_0 \otimes_{\Cb} \id - \id \otimes_{\Cb} L_0\big) \Big\}\Big|_{R_{\one^*}} = \id_{R_{\one^*}} \ .
\ee
In CFT terms this means that in a situation where $R_{\one^*}$ is the space of bulk fields\footnote{
     For this to be possible, we must have $Z(\one^*) = R(\one^*)$ (we will see in \eqref{eq:R(U)-expression} that $R(\one^*) = R_{\one^*}$). By lemma \ref{lem:transparent-Z=R} below this is true if $\one^* \cong \one$. We expect that $Z(\one^*) = R(\one^*)$ also holds in the $W_{2,3}$-model (where $\one^* \not\cong \one$), see section \ref{sec:W23-model}.},     
the partition function is invariant under the T-transformation $\tau \mapsto \tau+1$.
\end{remark}

We now turn to the right adjoint $R$. The involution $(-)^*$ on $\Cc$ induces an involution on $\CoC$, which we also denote by $(-)^*$, and which also satisfies condition (C) (see appendix \ref{app:R-adj-T}). We can use $R_{\one^*}$ and the involution $(-)^*$ to define a functor $R : \Cc \to \CoC$ as 
\be \label{eq:R(U)-expression}
  R(U) = 
  \Big( \,(U^* \boxtimes \one) \,\otimes_{\Cc^2} \,(R_{\one^*})^* \,\Big)^{\!*}
  \quad , \quad
  R(f) = 
  \Big( \, (f^* \boxtimes \id_\one)\,  \otimes_{\Cc^2} \, \id_{(R_{\one^*})^*} \,\Big)^{\!*} \ . 
\ee
Note that $R(\one^*) \cong R_{\one^*}$. 

\begin{theorem} \label{thm:R-adj-to-T}
The functor $R$ is a right adjoint for $T$.
\end{theorem}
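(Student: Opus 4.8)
The plan is to exhibit, for all $X \in \CoC$ and $U \in \Cc$, a natural isomorphism $\Cc(T(X),U) \cong \Cc^2(X,R(U))$ and to check naturality in both arguments; by the hom-set characterisation of adjunctions this establishes $T \dashv R$. The isomorphism is assembled as a composite of elementary isomorphisms, all of which are already available: the contragredient data $\delta$ and $\pi$ from condition (C) (both for $\Cc$ and, as set up in appendix \ref{app:R-adj-T}, for $\CoC$), the fact that $T$ is strong monoidal so that $T_2$ is invertible, the associator and unit isomorphisms, and the representability isomorphism $\chi$ of $R_{\one^*}$ from \eqref{eq:R1*-chi}.

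First I would rewrite the left-hand side entirely in terms of $R_{\one^*}$. Using $\delta_U : U \to U^{**}$ and then $\pi_{T(X),U^*}$ one obtains
$$\Cc(T(X),U) \cong \Cc(T(X), U^{**}) \cong \Cc(T(X) \otimes_\Cc U^*, \one^*) \ .$$
Since $T(U^* \boxtimes \one) \cong U^*$ by the unit isomorphism and $T_2$ is an isomorphism, $T(X) \otimes_\Cc U^*$ is identified with $T\big(X \otimes_{\Cc^2}(U^* \boxtimes \one)\big)$, so the right-hand side becomes $\Cc\big(T(X \otimes_{\Cc^2}(U^* \boxtimes \one)), \one^*\big)$. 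Finally $\chi$ gives
$$\Cc\big(T(X \otimes_{\Cc^2}(U^* \boxtimes \one)), \one^*\big) \cong \Cc^2\big(X \otimes_{\Cc^2}(U^* \boxtimes \one), R_{\one^*}\big) \ .$$

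It then remains to move the factor $U^* \boxtimes \one$ from the source into the target, turning $R_{\one^*}$ into $R(U)$. This is a purely formal consequence of condition (C) in $\CoC$: for any $B,C \in \CoC$ one has $\Cc^2(X \otimes_{\Cc^2} B, C) \cong \Cc^2\big(X, (B \otimes_{\Cc^2} C^*)^*\big)$, obtained by applying $\delta_C$, then $\pi$, then the associator $\alpha_{X,B,C^*}$, then $\pi^{-1}$ (the two copies of the tensor unit's conjugate cancel). Taking $B = U^* \boxtimes \one$ and $C = R_{\one^*}$ yields $\Cc^2(X, R(U))$ exactly by the definition \eqref{eq:R(U)-expression}. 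Composing all these isomorphisms produces the desired map $\Phi_{X,U} : \Cc(T(X),U) \to \Cc^2(X,R(U))$, which is an isomorphism as a composite of isomorphisms.

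Naturality in $X$ is immediate, since every constituent — in particular $\chi$ — is natural in $X$. The step I expect to require the most care is naturality in $U$: for $g : U \to U'$ one must verify that post-composition with $g$ on the left corresponds, through $\Phi$, to post-composition with $R(g) = \big((g^* \boxtimes \id_\one)\otimes_{\Cc^2}\id_{(R_{\one^*})^*}\big)^*$ on the right. This reduces, using naturality of $\delta$ and $\pi$ (in $\Cc$ and in $\CoC$), naturality of $T_2$ and $\chi$, and functoriality of $(-)^*$ and $\otimes$, to the behaviour of the contragredient functor on $g^* \boxtimes \id_\one$; the only genuine bookkeeping is tracking the contravariance introduced by the two applications of $(-)^*$, which is precisely the contravariance built into $R(g)$. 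Granting this, $\Phi$ is natural in both variables, and hence $R$ is a right adjoint for $T$.
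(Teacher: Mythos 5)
Your proposal is correct and follows essentially the same route as the paper's proof in appendix \ref{app:R-adj-T}: the adjunction isomorphism $\xi_{X,U}$ there is built from exactly your chain of steps ($\delta$ and $\pi$ from condition (C), the unit isomorphism $T(U^*\boxtimes\one)\cong U^*$, invertibility of $T_2$, the representability isomorphism $\chi$ of $R_{\one^*}$, and then condition (C) in $\CoC$ — established as lemma \ref{lem:CxD-prop-C} — together with the associator to move $U^*\boxtimes\one$ into the target). Naturality in both variables is, as you say, inherited from the naturality of each constituent, so nothing further is needed.
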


The proof and the adjunction isomorphisms are given in appendix \ref{app:R-adj-T}. On general grounds, the functor $R$, being adjoint to a monoidal functor, is lax monoidal (see, e.g., \cite[Lem.\,2.7]{Kong:2008ci}). The structure maps $R_0$ and $R_2$ can equally be found in appendix \ref{app:R-adj-T}. Thus, for an algebra $A \in \Cc$, $R(A)$ is an algebra in $\CoC$ with multiplication \eqref{eq:F(A)-product}.

\begin{remark}
(i)~Since $R$ is a right adjoint functor, it is left exact. This can also be seen explicitly from \eqref{eq:R(U)-expression}, namely
\be
  R ~=~
  \big[\, \Cc^{\mathrm{op}} \xrightarrow{(-)^*} \Cc \xrightarrow{(-) \boxtimes \one} \CoC \xrightarrow{(-) \otimes_{\Cc^2} (R_{\one^*})^*} \CoC \xrightarrow{(-)^*} (\CoC)^{\mathrm{op}} \,\big] \ ,
\ee
where $(-)^*$ is exact, $(-) \boxtimes \one$ is exact (see the beginning of the proof of theorem \ref{thm:deligne-product-of-alg}), and $(-) \otimes_{\Cc^2} (R_{\one^*})^*$ is right exact. Thus $R$ is a right exact functor $\Cc^{\mathrm{op}} \to (\CoC)^{\mathrm{op}}$ which is the same as a left exact functor $\Cc \to \CoC$.
\\[.3em]
(ii)~Suppose there are isomorphisms $(U \otimes_{\Cc} V)^* \to V^* \otimes_{\Cc} U^*$, natural in $U$ and $V$. Then, firstly, $\one \cong (\one^*)^* \cong ( \one \otimes_{\Cc} \one^* )^* \cong \one \otimes_{\Cc} \one^* \cong \one^*$. Secondly, the above formula for $R$ simplifies to $R(U) = R_{\one} \,\otimes_{\Cc^2} \,(U \boxtimes \one)$, and analogously for $R(f)$. In this formulation, $R$ is clearly right exact, so that together with (i) we see that $R$ is exact. Similarly, the natural isomorphism $(-)^* \circ T \circ (-)^* \cong T$ shows that $T$ is exact.
\end{remark}

\subsection{Left centre and full centre} \label{sec:leftcentre}

In this subsection, we will express the full centre of an algebra $A$ as the `left centre' -- to be defined momentarily -- of the adjoint functor $R$ applied to $A$. As before, $\Cc$ is assumed to be $k$-linear abelian and braided monoidal, to satisfy (PF) and (C), and to have a $k$-linear right exact tensor product.

\medskip

In the braided setting, one distinguishes three different notions of the centre of an algebra: the left centre, the right centre and the full centre. From these, the left and right centre are subobjects of the algebra itself, while -- as we have seen in definition \ref{eq:full-centre-in-C2} -- the full centre lives in a different category. The left and right centres were introduced in \cite{Oystaeyen:1998} and appeared in various incarnations in \cite{Ostrik:2001,Frohlich:2003hm,Davydov:2009}. The following definition is taken from \cite[Sect.\,5]{Davydov:2009}.

\begin{definition} \label{def:left-centre}
Let $B$ be an algebra in a braided monoidal category $\Bc$. The \textsl{left centre} of $B$ is an object $C_l(B)$ in $\Bc$ together with a morphism $e : C_l(B) \to B$ such that the following universal property holds: For every $U \in \Bc$ and arrow $u : U \to B$ such that
\be \label{eq:left-centre-diagram}
\raisebox{2.3em}{\xymatrix@R=0.7em{
U \otimes_\Bc B \ar[dd]_{c_{U,B}} \ar[rr]^{u \,\otimes_\Bc \,\id_B} && B \otimes_\Bc B \ar[dr]^{\mu_B}
\\
&&& B
\\
B \otimes_\Bc U  \ar[rr]^{\id_B\, \otimes_\Bc\, u} && B \otimes_\Bc B \ar[ur]_{\mu_B}
}}
\ee
commutes, there exists a unique arrow $\tilde u : U \to C_l(B)$ such that $e \circ \tilde u = u$.
\end{definition}

\begin{remark} \label{rem:left-right-centre}
(i) There is an analogous definition for the right centre, see \cite{Oystaeyen:1998,Ostrik:2001}.
\\[.3em]
(ii) If the category $\Bc$ is in addition abelian and has conjugates as in (C), the left centre of an algebra $B$ can be expressed as a kernel. This shows at the same time that the left centre exists and that $e : C_l(B) \to B$ is injective, i.e.\ $C_l(B)$ is a subobject of $B$. See appendix \ref{app:full-centre-via-left-centre} for details. 
\\[.3em]
(iii) $C_l(B)$ carries a unique algebra structure such that $e : C_l(B) \to B$ is an algebra map. This algebra structure on $C_l(B)$ is commutative. If $A$ is unital, so is $C_l(A)$. See \cite{Oystaeyen:1998,Ostrik:2001,Davydov:2009} for details.
\end{remark}

In the previous subsection we gave the direct definition of the full centre as first formulated in \cite[Sect.\,4]{Davydov:2009}. The original definition in \cite[Def.\,4.9]{Fjelstad:2006aw} proceeds in two steps: first, one applies the adjoint $R$ of $T$ to the algebra $A$ and second, one finds the left centre of $R(A)$. The same works in the present setting, as we now show. The proof is the same as in \cite[Thm.\,5.4]{Davydov:2009}, we reproduce an adapted version in appendix \ref{app:full-centre-via-left-centre}. Denote the adjunction natural transformation $TR \Rightarrow \Id$ by $\eps$, cf.\,\eqref{eq:alpha-beta-def}.

\begin{theorem} \label{thm:full-centre-via-left-centre}
Let $A$ be an algebra in $\Cc$. The pair $(Z,z)$ with
\be
Z = C_l(R(A))
\quad , \qquad
z = \big[ \,T(C_l(R(A))) \xrightarrow{T(e)} T(R(A)) \xrightarrow{\eps_A} A \,\big]
\ee
is the full centre of $A$.
\end{theorem}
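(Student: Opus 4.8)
The plan is to show that the pair $(Z,z) = (C_l(R(A)),\, \eps_A \circ T(e))$, with $e : C_l(R(A)) \to R(A)$ the canonical inclusion of the left centre, is a terminal object in the category $\Cc_{\text{full\,center}}(A)$ of remark \ref{rem:fullcentre}(i); by that remark this is exactly what it means for $(Z,z)$ to be the full centre of $A$. The whole argument rests on transporting the two commuting-square conditions — the centrality condition \eqref{eq:full-cent-univprop} in $\Cc$ and the left-centre condition \eqref{eq:left-centre-diagram} in $\CoC$ — into one another along the adjunction $T \dashv R$ of theorem \ref{thm:R-adj-to-T}.

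First I would fix notation for the adjunction. For $X \in \CoC$ the natural bijection $\Cc(T(X),A) \cong \Cc^2(X,R(A))$ sends $x : T(X) \to A$ to the unique $\hat x : X \to R(A)$ with $x = \eps_A \circ T(\hat x)$. In particular $e$ is the transpose of $z$, and for any $\zeta : X \to Z$ one has $z \circ T(\zeta) = \eps_A \circ T(e \circ \zeta)$, so that $z \circ T(\zeta) = x$ holds if and only if $e \circ \zeta = \hat x$. Thus the triangle \eqref{eq:full-cent-pairmap} (and likewise a morphism in the sense of \eqref{eq:morph-in-cat-of-pairs}) is literally the factorisation equation $e \circ \zeta = \hat x$ demanded by the universal property of the left centre in definition \ref{def:left-centre}.

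The key lemma I would then prove is: $\hat x$ satisfies the left-centre condition \eqref{eq:left-centre-diagram} for the algebra $R(A)$ if and only if $x = \eps_A \circ T(\hat x)$ lies in $\Cent(X,A)$, i.e.\ satisfies \eqref{eq:full-cent-univprop}. To compare the two squares I would apply $T$ to \eqref{eq:left-centre-diagram}, precompose with the monoidal structure isomorphism $T_2$ and postcompose with $\eps_A$. Three ingredients make the two sides collapse onto \eqref{eq:full-cent-univprop}: (a) $R$ is lax monoidal and the counit of such a monoidal adjunction is a monoidal natural transformation (a standard fact), so $\eps_A$ is an algebra map, $\eps_A \circ \mu_{T(R(A))} = \mu_A \circ (\eps_A \otimes_\Cc \eps_A)$; (b) naturality of $T_2$, which lets $T$ commute past the tensor factors $\hat x \otimes \id$ and $\id \otimes \hat x$; and (c) lemma \ref{lem:phi-hatphi-rel} together with \eqref{eq:varphi-hat-def}, which rewrites $T(c_{X,R(A)}) \circ T_2$ in terms of the mixed braiding $\varphi_{X,T(R(A))}$, after which naturality of $\varphi$ in its $\Cc$-argument along $\eps_A$ produces exactly $\varphi_{X,A}$. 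A direct diagram chase then shows that applying these operations to the two legs of \eqref{eq:left-centre-diagram} yields the two legs of \eqref{eq:full-cent-univprop}, each precomposed with $\id_{T(X)} \otimes_\Cc \eps_A$. Since the adjunction transpose $\Cc^2(X \otimes_{\Cc^2} R(A), R(A)) \cong \Cc(T(X \otimes_{\Cc^2} R(A)), A)$ is a bijection, the left-centre condition \eqref{eq:left-centre-diagram} is equivalent to the equality of these two composites with $\id_{T(X)} \otimes_\Cc \eps_A$, whereas \eqref{eq:full-cent-univprop} is the equality of the two legs themselves.

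This makes the main obstacle visible and isolates it: passing from \eqref{eq:full-cent-univprop} to \eqref{eq:left-centre-diagram} is immediate, but the converse requires cancelling $\id_{T(X)} \otimes_\Cc \eps_A$, i.e.\ knowing that $\eps_A$ is an epimorphism. I would settle this by observing that $T$ is essentially surjective: by \eqref{eq:TC-construction} one has $T(A \boxtimes \one) = A \otimes_\Cc \one \cong A$. For any object of the form $T(W)$ the triangle identity $\eps_{T(W)} \circ T(\eta_W) = \id_{T(W)}$ exhibits $\eps_{T(W)}$ as a split epimorphism, and since being epi is invariant under the isomorphism $A \cong T(A \boxtimes \one)$ it follows that $\eps_A$ is epi for every $A$; as $\otimes_\Cc$ is right exact it preserves epimorphisms, so $\id_{T(X)} \otimes_\Cc \eps_A$ is epi as well. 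With the key lemma in hand the theorem follows in two steps: $(Z,z)$ is an object of $\Cc_{\text{full\,center}}(A)$ because the universal map $e$ of the left centre satisfies \eqref{eq:left-centre-diagram}, hence $z = \eps_A \circ T(e)$ satisfies \eqref{eq:full-cent-univprop}; and $(Z,z)$ is terminal because, given any object $(X,x)$, the condition \eqref{eq:full-cent-univprop} on $x$ gives \eqref{eq:left-centre-diagram} for $\hat x$, whence the left-centre universal property yields a unique $\zeta : X \to C_l(R(A))$ with $e \circ \zeta = \hat x$, equivalently the unique morphism $(X,x) \to (Z,z)$ with $z \circ T(\zeta) = x$.
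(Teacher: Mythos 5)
Your proof is correct, and while it runs on the same engine as the paper's --- transport along the adjunction $T \dashv R$ with lemma \ref{lem:phi-hatphi-rel} as the pivot --- the execution is genuinely different. The paper's proof (appendix \ref{app:full-centre-via-left-centre}) only establishes the one direction it strictly needs: given $x \in \Cent(X,A)$, it forms the transpose $\tilde x = R(x) \circ \eta_X$ and verifies by a large diagram chase in $\CoC$ (figure \ref{fig:subdiag-1}, using naturality of $\eta$ and $T_2$, the triangle identities \eqref{eq:adjunction-prop}, the definition \eqref{eq:R0-R2-def} of $R_2$, and lemma \ref{lem:phi-hatphi-rel}) that $\tilde x$ satisfies \eqref{eq:left-centre-diagram}; it then factors through $e$ and is done. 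You instead work on the other side of the adjunction, at the level of $T$, $T_2$ and $\eps$: transposing each leg of \eqref{eq:left-centre-diagram} via $\eps_A \circ T(-) \circ T_2$ and using that $\eps_A$ is an algebra map (which is exactly the content of cells 3, 4, 8, 9 of the paper's figure, repackaged as the standard doctrinal-adjunction fact), naturality of $T_2$, and lemma \ref{lem:phi-hatphi-rel} plus naturality of $\varphi_{X,-}$, you land on the two legs of \eqref{eq:full-cent-univprop} precomposed with $\id_{T(X)} \otimes_{\Cc} \eps_A$; since transposition is bijective, this upgrades the paper's one-way implication to an equivalence modulo cancelling $\id \otimes \eps_A$. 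Your epi argument for that cancellation is sound: $\eps_{T(W)}$ is split epi by \eqref{eq:adjunction-prop}, $A \cong T(A \boxtimes \one)$ makes $\eps_A$ epi by naturality, and right exactness of $\otimes_{\Cc}$ preserves epis. What this buys you is the converse direction, which you use to verify that $z = \eps_A \circ T(e)$ itself lies in $\Cent(Z,A)$ (the objecthood of $(Z,z)$ in $\Cc_{\text{full\,center}}(A)$, needed for the terminal-object formulation of remark \ref{rem:fullcentre}\,(i)). The paper's proof, which follows the letter of definition \ref{eq:full-centre-in-C2}, never checks this --- even though it is silently used later, e.g.\ in the proof of theorem \ref{thm:fullcentre-alg-assoc+comm} where ``property \eqref{eq:full-cent-univprop} for $z$'' is invoked --- so your version closes a small gap left implicit in the paper. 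One minor point to keep honest: you assert that the inclusion $e$ of the left centre satisfies \eqref{eq:left-centre-diagram}; this is not literally part of definition \ref{def:left-centre} but does follow from the kernel realisation of $C_l$ in remark \ref{rem:left-right-centre}\,(ii), since $\tilde m \circ k = 0$ for the kernel map $k$ is equivalent to the centrality square.
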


In particular, since $R$ exists by theorem \ref{thm:R-adj-to-T} and the left centre exists by remark \ref{rem:left-right-centre}\,(ii), the full centre of an algebra exists under the assumptions set out in the beginning of this subsection.

\medskip

Even if $A$ is a commutative algebra, $R(A)$ need not be commutative and one still needs to take the left centre to arrive at $Z(A)$. However, the next lemma gives a simple condition in addition to commutativity which guarantees $Z(A) = R(A)$; this will be useful in section \ref{sec:W23-model}. An object $S \in \Cc$ is called \textsl{transparent} if $c_{U,S} \circ c_{S,U} = \id_{S \otimes U}$ for all $U \in \Cc$. 

\begin{lemma} \label{lem:transparent-Z=R}
If $(S,\mu_S)$ is a commutative algebra in $\Cc$ and $S$ is transparent in $\Cc$, then we can take $(Z(S),z) = (R(S),\eps_S)$. In particular, $Z(\one) = R(\one)$.
\end{lemma}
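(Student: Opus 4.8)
The plan is to verify directly that the pair $(R(S),\eps_S)$ satisfies the universal property of the full centre from Definition~\ref{eq:full-centre-in-C2}, i.e.\ that it is a terminal object in the category $\Cc_{\text{full\,center}}(S)$ of Remark~\ref{rem:fullcentre}(i); uniqueness of terminal objects then yields $(Z(S),z)=(R(S),\eps_S)$. Terminality is essentially just the adjunction $T\dashv R$ of Theorem~\ref{thm:R-adj-to-T}: for any object $(X,x)$, the condition \eqref{eq:full-cent-pairmap} defining a morphism $\zeta_{(X,x)}:X\to R(S)$ reads $\eps_S\circ T(\zeta_{(X,x)})=x$, and the adjunction bijection $\Cc^2(X,R(S))\xrightarrow{\sim}\Cc(T(X),S)$, $\zeta\mapsto\eps_S\circ T(\zeta)$, supplies a unique such $\zeta$. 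The only point that genuinely requires the hypotheses on $S$ is therefore that $(R(S),\eps_S)$ lies in $\Cc_{\text{full\,center}}(S)$ at all, that is, that $\eps_S\in\Cent(R(S),S)$.

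I would in fact prove the stronger statement that, when $S$ is commutative and transparent, \emph{every} morphism $x:T(X)\to S$ lies in $\Cent(X,S)$, i.e.\ the centrality diagram \eqref{eq:full-cent-univprop} commutes for all $X\in\CoC$ and all $x$. This reduces to the case of a product object $X=A\boxtimes B$: picking an epimorphism $p:(P\boxtimes P)^{\oplus n}\to X$ from the projective generator of Corollary~\ref{cor:proj-gen-in-prod}, right-exactness of $T$ (its construction \eqref{eq:TC-construction}) and of $(-)\otimes_\Cc S$ makes $T(p)\otimes_\Cc\id_S$ an epimorphism, so by naturality of $\varphi_{-,S}$ in its $\CoC$-argument it suffices to check \eqref{eq:full-cent-univprop} for $x\circ T(p)$ on $(P\boxtimes P)^{\oplus n}$; additivity of $\varphi$ then splits this into the summands $P\boxtimes P$.

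On a product object $\varphi_{A\boxtimes B,S}=\tilde\varphi_{A,B,S}$ is given explicitly by \eqref{eq:mixed-braid-tildephi}, and here transparency of $S$ does the essential work: since $c_{B,S}\circ c_{S,B}=\id_{S\otimes_\Cc B}$ gives $c_{S,B}^{-1}=c_{B,S}$, comparing \eqref{eq:mixed-braid-tildephi} with the braiding hexagon identity for $c$ shows $\tilde\varphi_{A,B,S}=c_{A\otimes_\Cc B,\,S}$. The centrality condition for $x:A\otimes_\Cc B\to S$ then becomes $\mu_S\circ(x\otimes_\Cc\id_S)=\mu_S\circ(\id_S\otimes_\Cc x)\circ c_{A\otimes_\Cc B,\,S}$, and naturality of the braiding in its first argument rewrites the right-hand side as $\mu_S\circ c_{S,S}\circ(x\otimes_\Cc\id_S)$, which equals the left-hand side by commutativity $\mu_S\circ c_{S,S}=\mu_S$. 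This establishes the claim, hence $\eps_S\in\Cent(R(S),S)$ and the main statement. The special case $Z(\one)=R(\one)$ follows because $\one$ is a commutative algebra and is transparent.

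The expected main obstacle is purely bookkeeping rather than conceptual, since all the content sits in the single transparency step that collapses the two-chirality mixed braiding $\tilde\varphi_{A,B,S}$ to the ordinary braiding $c_{A\otimes_\Cc B,\,S}$. The care needed is in matching the associators in \eqref{eq:mixed-braid-tildephi} against those in the hexagon identity (the one place a convention slip is easy), and in making the reduction from a general $X\in\CoC$ to product objects fully precise; both are routine given the projective generator and the right-exactness of $T$ already in hand.
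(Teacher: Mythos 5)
Your proof is correct and follows essentially the same route as the paper: transparency collapses $\tilde\varphi_{A,B,S}$ to $c_{A\otimes_\Cc B,S}$, commutativity of $S$ then makes the centrality condition \eqref{eq:full-cent-univprop} hold for \emph{every} $x:T(X)\to S$, and the universal property of the full centre reduces to the adjunction $T\dashv R$ --- your adjunction-bijection step is exactly the paper's appeal to lemma \ref{lem:repres+terminal} with $U=S$, $r'=\eps_S$. The only cosmetic difference is your reduction to product objects via a projective generator and epimorphisms; the paper instead obtains $\varphi_{X,S}=c_{T(X),S}$ for all $X$ in one stroke, since both sides are natural transformations of right-exact functors agreeing on product objects (lemma \ref{lem:nat-xfer-unique-on-prod}), which renders the projective-cover bookkeeping unnecessary.
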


\begin{proof}
From \eqref{eq:mixed-braid-tildephi} one checks that for transparent $S$ we have $\tilde \varphi_{A,B,S} = c_{A\otimes B,S}$. Thus also $\varphi_{X,S} = c_{T(X),S}$. Condition \eqref{eq:full-cent-univprop} is then true for all $x : T(X) \to S$ as by commutativity of $S$ we have $\mu_S \circ c_{S,S} = \mu_S$. But then the universal property of the full centre reduces to that in lemma \ref{lem:repres+terminal}\,(ii) with $U=S$, $R'=Z(S)$ and $r'=z$. By part (i) of that lemma, $R'=R(S)$ and $r' = \xi_{R(S),S}^{-1}(\id_{R(S)}) = \eps_S$, see \eqref{eq:xi_XU-def} and \eqref{eq:alpha-beta-def}.
\end{proof}

\begin{remark}\label{rem:FSS-Hopf}
A different approach to finding algebraic counterparts to logarithmic CFTs on $\Cb$ is taken in \cite{Fuchs:2011mg}. There, the category $\Cc$ is chosen to be $H\text{-Mod}$ for a certain Hopf algebra $H$ (in more detail, finite dimensional representations of a finite-dimensional factorisable ribbon Hopf algebra) and $\CoC \cong H\text{-Bimod}$ (see \cite[App.\,A.3]{Fuchs:2011mg}). In $H\text{-Bimod}$ the coregular bimodule $H^*$ is studied and shown to be a commutative Frobenius algebra \cite[Prop.\,2.10\,\&\,3.1]{Fuchs:2011mg}. In addition, $H^*$ satisfies certain modular invariance properties \cite[Thm.\,5.6]{Fuchs:2011mg}. In \cite[App.\,B]{Fuchs:2011mg}, the bimodule $H^*$ is proposed to be a candidate bulk theory for a logarithmic CFT in case $H\text{-Mod} \cong \Rep\Vc$ for the vertex operator algebra $\Vc$ encoding the chiral symmetry. In the setting of the present paper, $H^*$ corresponds to $R(\one)$. 
\end{remark}

\section{The $W_{2,3}$-model with $c=0$} \label{sec:W23-model}

In this section we look more closely at one particular class of examples, namely conformal field theories built from representations of the $W_{2,3}$ vertex operator algebra. This symmetry algebra was chosen because it demonstrates that the level of generality assumed in section \ref{sec:alg-reform} is indeed needed in the treatment of interesting examples.

We start in section \ref{sec:W23-reps} with a brief collection of what is known or expected about the representation theory of the $W_{2,3}$ vertex operator algebra. In section \ref{sec:RW(0)-example} it is shown how the formalism of finding the maximal bulk theory for a given boundary theory can produce the (trivial) $c=0$ Virasoro minimal model. A non-trivial bulk theory is discussed in sections \ref{sec:RW*-socle-dec}--\ref{sec:R1-OPE}; 
this bulk theory is logarithmic and can be understood as a `refinement' of the $c=0$ minimal model.

\subsection{The $W$-algebra and some of its representations} \label{sec:W23-reps}

Let $\mathrm{Ver}(h{=}0,c{=}0)$ be the Virasoro Verma module generated by the state $\Omega$ with $L_0 \Omega = C \Omega = 0$. It has a maximal proper submodule which is generated by the two vectors
\be
  n_1 = L_{-1} \Omega
  \quad , \quad
  n_2 = \big( L_{-2} - \tfrac32 L_{-1} L_{-1} \big) \Omega \ .
\ee
Since words in $L_{-1}$ and $L_{-2}$ acting on $\Omega$ span $\mathrm{Ver}(0,0)$, the quotient  $\mathrm{Ver}(0,0)/ \langle n_1,n_2 \rangle$ is just $\Cb \Omega$ with trivial $\Vir$-action. This describes the vacuum representation of the Virasoro minimal model with $c=0$, which is trivial in the sense that it is a two-dimensional topological field theory for the commutative algebra $\Cb$, cf.\ remark \ref{rem:F-trivial}. The module $\Vc \equiv \mathrm{Ver}(0,0) / \langle n_1 \rangle$ is infinite dimensional and carries the structure of a vertex operator algebra with Virasoro element $T = L_{-2} \Omega \neq 0$ (note that $\mathrm{Ver}(0,0)$ is not itself a vertex operator algebra because the vacuum $\Omega$ is not annihilated by the translation operator $L_{-1}$). The VOA $\Vc$ has an infinite number of distinct irreducible representations (see \cite[Thm.\,4.4]{Frenkel:1992} and \cite[Sect.\,2.3]{Milas:2001}). To be able to apply the discussion in section \ref{sec:alg-reform}, we can pass to a larger VOA $\Wc \supset \Vc$, which is the chiral symmetry algebra for the $W_{2,3}$-model \cite{Feigin:2006iv,Adamovic:2009a}, and is obtained as an extension of $\Vc$ by two fields of weight 15. Its character reads
\begin{eqnarray}
  \chi_{\Wc}(q) &=& 1 + q^2 + q^3 + 2q^4 +2q^5 +4q^6 +4q^7 +7q^8 +8q^9 +12q^{10} 
  \nonumber \\
  && \hspace{1.4em} + 14q^{11} + 21q^{12} + 24q^{13} + 34q^{14} + 44q^{15} + 58q^{16} + \dots
  \quad .
\end{eqnarray}
It turns out that $\chi_{\Wc}(q)$ differs from the character of $\Vc$ only starting from $q^{15}$, namely $\chi_{\Vc}(q) = {\dots} +  41q^{15} + 55q^{16} + {\dots}$ (so e.g.\ there are three new fields at weight 15). The VOA $\Wc$ is $C_2$-cofinite \cite{Adamovic:2009a} and has 13 irreducible representations \cite{Feigin:2006iv,Adamovic:2009a}, which we label by their lowest $L_0$-weight:
\be \label{eq:W23-irreps}
  \begin{tabular}{c|ccc}
    & $s=1$ & $s=2$ & $s=3$ \\
    \hline
    \\[-.8em]
    $r=1$ & $0,\,2,\,7$~&~ $0,\,1,\,5$~
    &$\frac{1}{3},\,\frac{10}{3}$\\[.5em]
    $r=2$ & $\frac58,\,\frac{33}{8}$& $\frac18,\,\frac{21}{8}$& $\frac{-1}{24},\,\frac{35}{24}$
  \end{tabular}
\ee
Here, the two entries `0' refer to the same irreducible representation $\Wc(0) \equiv \Cb \Omega$. We write $\Wc(h)$ for the irreducible $\Wc$-representation of lowest $L_0$-weight $h$. Their characters (from \cite{Feigin:2006iv}) are listed in our notation in \cite[App.\,A.1]{Gaberdiel:2009ug}.

At this point we note the first oddity of the $W_{2,3}$-model: the vertex operator algebra $\Wc$ is not one of the irreducible representations: it is indecomposable but not irreducible. Indeed, $\Omega$ is a cyclic vector (hence indecomposability) and the stress tensor $T = L_{-2} \Omega$ generates a $\Wc$-subrepresentation (on the level of Virasoro modes, this follows since $L_n T = 0$ for $n>0$, where $L_2 T = 0$ is a special feature of $c=0$). Specifically, $\Wc$ is the middle term in a non-split exact sequence
\be \label{eq:W-sequence}
  0 \longrightarrow \Wc(2) \longrightarrow \Wc \longrightarrow \Wc(0) \longrightarrow 0 \ .
\ee

This brings us to the second oddity. Denote by $R^*$ the contragredient representation of $R$ (see, e.g.\ \cite[Def.\,I:2.35]{Huang:2010}) of a representation $R$. Then $\Wc^* \not\cong \Wc$, as can be seen for example from their socle filtration\footnote{
  The \textsl{socle} $\mathrm{soc}(M)$ of a module $M$ is the largest semi-simple submodule contained in $M$. The \textsl{socle filtration} of $M$ is the unique filtration $\{0\} = M_0 \subset M_1 \subset M_2 \subset \cdots \subset M_n = M$ where $M_1$ is the socle of $M$, $M_2/M_1$ is the socle of $M/M_1$ and in general $M_{i+1}/M_i$ is semi-simple and equals the socle of $M/M_i$. The socle filtration is a unique version of the composition series. In the latter, one iteratively picks a simple submodule and quotients by it. The composition series hence involves choices.
}
\be \label{eq:WW*-compos}
  \Wc ~:~ \raisebox{1.3em}{\xymatrix@R=1em{ 0 \ar@{-}[d] \\ 2 }}
  \qquad , \qquad
  \Wc^* ~:~ \raisebox{1.3em}{\xymatrix@R=1em{ 2 \ar@{-}[d] \\ 0 }} 
  \quad .
\ee
The above diagrams show the semi-simple quotients of successive submodules in the socle filtration. For example, the largest semi-simple subrepresentation of $\Wc$ is $\Wc(2)$. Quotienting by $\Wc(2)$, one obtains a representation whose largest semi-simple subrepresentation is $\Wc(0)$, and this accounts for all of $\Wc$ (this is just the statement of the sequence \eqref{eq:W-sequence} and the fact that it is non-split). For $\Wc^*$, the largest semi-simple subrepresentation is $\Wc(0)$ and the quotient is isomorphic to $\Wc(2)$.

\medskip

We have now pretty much reached the frontier of established mathematical truth regarding the $W_{2,3}$-model. Hence it is time for the following
\begin{quote}
{\bf Disclaimer:} The statements concerning the structure of the $W_{2,3}$-model in the remainder of section \ref{sec:W23-model} should be treated as conjectures, even if we refrain from writing `conjecturally' in every sentence.
\end{quote}
The first statement under the umbrella of the above disclaimer is: The tensor product theory of \cite{Huang:2010} turns $\Cc \equiv \Rep(\Wc)$ into a braided monoidal category
\begin{itemize}\setlength{\leftskip}{-1em}
\item[+] which has property (PF) from section \ref{sec:deligne-prod},
\item[+] whose tensor product functor is right exact in each argument,
\item[+] whose contragredient functor $(-)^*$ has property (C) from section \ref{sec:conjugate}. 
\end{itemize}
The category $\Rep(\Wc \otimes_{\Cb} \Wc)$ (with inverse convention for the braiding in the second factor) is just the Deligne product $\CoC$. Every irreducible $\Wc(h)$ has a projective cover, which we denote by $\Pc(h)$. The fusion rules of the representations generated from the 13 irreducibles and from $\Wc^*$ (and from two representations $\Qc$, $\Qc^*$ which have the socle filtration \eqref{eq:WW*-compos} with $2$ replaced by $1$) are listed in \cite[App.\,4]{Gaberdiel:2009ug} (see also \cite{Eberle:2006zn,Feigin:2006iv,Rasmussen:2008ii}); some of the fusion rules of the projective cover $\Pc(0)$ are given in \cite[App.\,B.1]{Gaberdiel:2010rg}. The fusion-tensor product of $\Cc$ will be denoted by $\otimes_f$.

The properties marked `$+$' above allow one to apply the formalism in section \ref{sec:alg-reform}. However, there are many other convenient properties which $\Cc$ does not have:
\begin{itemize}\setlength{\leftskip}{-1em}
\item[$-$] $\Cc$ is not semi-simple (e.g.\ the sequence \eqref{eq:W-sequence} is not split).
\item[$-$] The tensor unit $\one \equiv \Wc$ in $\Cc$ is not simple (cf.\,\eqref{eq:W-sequence}).
\item[$-$] $\one \equiv \Wc$ is not isomorphic to its conjugate $\one^* \equiv \Wc^*$ (cf.\,\eqref{eq:WW*-compos}).
\item[$-$] The involution $(-)^*$ is not monoidal, e.g.\ $(\Wc \otimes_f \Wc(0))^* = \Wc(0)$ and $\Wc^* \otimes_f \Wc(0)^* = 0$.
\item[$-$] The tensor product of $\Cc$ is not exact. For example, the functor $\Wc(0) \otimes_f (-)$ transports the exact sequence $0 \rightarrow \Wc(0) \rightarrow \Wc^* \rightarrow \Wc(2) \rightarrow 0$ to $0 \rightarrow \Wc(0) \rightarrow 0 \rightarrow 0 \rightarrow 0$, which is not exact.
\item[$-$] $\Cc$ is not rigid, i.e.\ not every object has a dual (in the categorical sense -- not to be confused with the contragredient representation, which always exists). Examples are the irreducibles $\Wc(0)$, $\Wc(1)$, $\Wc(2)$, $\Wc(5)$, $\Wc(7)$, the contragredient $\Wc^*$ of the VOA, and the projective cover $\Pc(0)$; we refer to  \cite[Sect.\,1.1.1]{Gaberdiel:2009ug} and \cite[App.\,B.1]{Gaberdiel:2010rg} for details.
\item[$-$] Even if $U \in \Cc$ has a dual $U^\vee$, it may happen that $U^\vee \not\cong U^*$, e.g.\ $\Wc^\vee = \Wc$ (the tensor unit is self-dual in any monoidal category), but $\Wc^* \not\cong \Wc$.
\end{itemize}

For the rest of this subsection we take a look at the most intricate\footnote{
   Also the $\Pc(h)$ with $h \in \{1,2,5,7\}$ have a socle filtration with 5 levels (see \cite[App.\,A.1]{Gaberdiel:2010rg}). $\Pc(0)$ is `most intricate' in the sense that it does not occur in the representations generated by fusion from the 13 irreducibles and its structure has only been found by indirect reasoning.
}
of the $\Wc$-representations, the projective cover $\Pc(0)$ of $\Wc(0)$. It has the socle filtration (as argued for in \cite[App.\,B.2]{Gaberdiel:2010rg}):
\be
\label{eq:P0-diagram}
  \begin{tikzpicture}
    [baseline=(lev33),scale=.8, vert/.style={inner sep=1,
      text centered,circle,anchor=base}]
    \node at (0,1) {$\Pc(0)$};
    \path
    (0,0) node[vert] (lev11) {0}
    ++(-0.5,-1) node[vert] (lev21) {1} edge[-] (lev11)
    ++(1,0) node[vert] (lev22) {2} edge[-] (lev11)
    ++(1.5,-1) node[vert] (lev35) {7} edge[-] (lev21) edge[-] (lev22)
    ++(-1,0) node[vert] (lev34) {7} edge[-] (lev21) edge[-] (lev22)
    ++(-1,0) node[vert] (lev33) {0} edge[-] (lev21) edge[-] (lev22)
    ++(-1,0) node[vert] (lev32) {5} edge[-] (lev21) edge[-] (lev22)
    ++(-1,0) node[vert] (lev31) {5} edge[-] (lev21) edge[-] (lev22)
    ++(1.5,-1) node[vert] (lev41) {1} edge[-] (lev31) edge[-] (lev32) edge[-] (lev33) edge[-] (lev34) edge[-] (lev35)
    ++(1,0) node[vert] (lev42) {2} edge[-] (lev31) edge[-] (lev32) edge[-] (lev33) edge[-] (lev34) edge[-] (lev35)
    ++(-0.5,-1) node[vert] (lev51) {0} edge[-] (lev41) edge[-] (lev42);
    \path
    (-3,1) node[vert] {Level}
    ++(0,-1) node[vert] {0}
    ++(0,-1) node[vert] {1}
    ++(0,-1) node[vert] {2}
    ++(0,-1) node[vert] {3}
    ++(0,-1) node[vert] {4};
  \end{tikzpicture}
\ee
As before, the numbers in each row give the simple summands in the quotient of two consecutive layers of the socle filtration. The lines indicate the action of the $W$-modes; for $\Pc(0)$ they merely state that a vector at a given level can be transported into any of the lower lying submodules (this is not so for $\Pc(h)$ with $h=1,2,5,7$, see \cite[App.\,A.1]{Gaberdiel:2010rg}).

There are three quasi-primary states of generalised $L_0$-weight 0, no such states at weight 1, 
two states at weight 2, and infinitely more at higher weights.\footnote{The character of $\Pc(0)$ starts as
 $3 + 2 q + 4 q^2 + \dots$. The two states at weight 1 are $L_{-1}$-descendents, as are two of the
four states at weight 2.  }
It seems natural to us that the quasi-primary states up to weight 2 organise themselves under the Virasoro action as in the following diagram (the action is given up to constants, see below for the full expressions)
\be\label{5fold}
\raisebox{2.1em}{\xymatrix@R=2.7em@C=2.7em{
\text{$L_0$-weight $2$\hspace{-1em}}& & t \ar[rr]^{L_0-2} \ar[dr]_{L_2} \ar@/_1.2em/@{..>}[drrr]!<-8pt,4pt>
&& \Tc \ar[dr]^{L_2} 
\\
\text{$L_0$-weight $0$\hspace{-1em}}& \eta \ar[ur]^{\!L_{-2}+..} \ar[rr]_{L_0}
&& \omega \ar[ur]^(0.55){\!L_{-2}+..} \ar[rr]_{L_0}
&& \Omega
}}
\qquad ,
\ee
where $L_{-2} + \dots$ stands for the operators defined by 
\be \label{eq:tT-def}
 t := \big( \,L_{-2} - \tfrac32 L_{-1} L_{-1}  + \tfrac95 (L_{-2} + \tfrac16 L_{-1}L_{-1}) L_0 \,
\big) \,\eta ~ , ~~
 \Tc := \big( L_{-2} - \tfrac32 L_{-1} L_{-1} \big) \,\omega \ .
\ee
In more detail, let us assume that we are given a Virasoro representation with the following properties:
\begin{enumerate}
\item It allows for a non-degenerate symmetric pairing such that $\langle v, L_m w \rangle = \langle L_{-m} v, w \rangle$.
\item It has a cyclic vector $\eta$ which is primary (i.e.\ the $\Vir$-action on $\eta$ generates the entire representation and $L_m \eta = 0$ for all $m>0$).
\item $\eta$ generates a rank three Jordan cell for $L_0$ of generalised eigenvalue 0; we set 
\be\label{eq:om-Om-def}
  \omega := L_0 \eta \quad , \quad \Omega := L_0 \omega \ ,
\ee
\item $\Cb \Omega$ is the trivial Virasoro representation: $L_m \Omega = 0$ for all $m \in \Zb$.
\end{enumerate}
Let us draw some conclusions from these assumptions. Firstly, $\langle \omega , \Omega \rangle = \langle L_0 \eta, (L_0)^2 \eta \rangle = \langle \eta, (L_0)^3 \eta \rangle = 0$ and similarly $\langle \Omega , \Omega \rangle = 0$, so that we must have $\langle \eta, \Omega \rangle  \neq 0$ by non-degeneracy. Hence also $\langle \omega, \omega \rangle = \langle L_0 \eta, L_0 \eta \rangle = \langle \eta, \Omega \rangle  \neq 0$. Suppose that $\langle \eta , \omega \rangle \neq 0$. Then we can replace $\eta \leadsto \eta' := \eta + \text{(const)} \omega$ such that $\langle \eta' , L_0 \eta' \rangle = 0$. Next, if $\langle \eta' , \eta' \rangle \neq 0$ we replace $\eta' \leadsto \eta'' := \eta' + \text{(const)} \Omega$ such that $\langle \eta'' , \eta'' \rangle = 0$. We will henceforth assume that both has been done. Altogether, the pairing on the states of generalised $L_0$-eigenvalue $0$ is, for some normalisation constant $N \neq 0$:
\be \label{eq:M-weight0-IP}
\begin{tabular}{c|ccc}
$\langle\,,\,\rangle$ & $\eta$ & $\omega$ & $\Omega$
\\
\hline
$\eta$ & 0 & 0 & $N$ \\
$\omega$ & 0 & $N$ & 0\\
$\Omega$ & $N$ & 0 & 0
\end{tabular}
\ee
Secondly, using points 3.\ and 4.\ above,  one verifies with a little patience that for $t$ and $\Tc$ as defined in \eqref{eq:tT-def}
\be \label{eq:M-modeactions}
  \begin{array}{rclrcl}
  L_1 \,t &=& 0 \ , &
  L_1 \,\Tc &=& 0 \ ,
  \\[.3em]
  L_2 \,t &=& - 5 \omega 
     + 9 \Omega
     \ , &
  L_2 \,\Tc &=& - 5 \Omega \ ,
  \\[.3em]
  (L_0{-}2)\, t &=& \Tc \ , &
  (L_0{-}2) \,\Tc &=& 0 \ .
  \end{array}
\ee
It is then easy to compute the 
pairing on the weight 2 states (the pairing of a quasi-primary with an $L_{-1}$-descendent vanishes;
we give the pairing restricted to $t$, $\Tc$). For example, using invariance of the pairing and the relations \eqref{eq:M-modeactions} gives
\be
  \langle \Tc , t \rangle  = \langle \omega , (L_{2} - \tfrac32 L_{1} L_{1})  t \rangle 
= -5  \langle \omega,\omega \rangle \ .
\ee
Altogether, the pairing takes the form:
\be \label{eq:M-weight1-IP}
\begin{tabular}{c|cc}
$\langle\,,\,\rangle$ & $t$ & $\Tc$ 
\\
\hline
$t$ & 0 & $-5N$ \\
$\Tc$ & $-5N$ & 0
\end{tabular}
\ee
The fact that $\langle t,t\rangle=0$ is the motivation for the complicated choice of $t$ in \eqref{eq:tT-def}.

\medskip

In summary, if  the $\Vir$-submodule of $\Pc(0)$ generated by a state $\eta$ representing the top 0 in the socle filtration \eqref{eq:P0-diagram} indeed has properties 1.--4., then we have quasi-primary states $\omega$, $\Omega$, $t$, $\Tc$ defined as in \eqref{eq:tT-def} and \eqref{eq:om-Om-def} with the properties \eqref{eq:M-weight0-IP}--\eqref{eq:M-weight1-IP}. We will return to this in the discussion of OPEs.

\subsection{Computation of $R(\Wc(0))$} \label{sec:RW(0)-example}

An instance where we can compute the value of the adjoint functor $R$ directly is the one-dimensional $\Wc$-module $\Wc(0)$. Namely, as we will explain in the second half of this short subsection,
\be \label{eq:R(W0)}
  R(\Wc(0)) = \Wc(0) \boxtimes \Wc(0) \ .
\ee
The object $\Wc(0)$ is transparent because it is a quotient of $\Wc$ and the tensor unit is always transparent (recall that $\otimes_f$ is right exact and hence preserves surjections). Furthermore, $\Wc(0) \otimes_f \Wc(0) \cong \Wc(0)$ so that lemma \ref{lem:SS=S-algebra} implies that $\Wc(0)$ is a commutative associative algebra. Lemma \ref{lem:transparent-Z=R} now tells us that the full centre is
\be
  Z(\Wc(0)) = \Wc(0) \boxtimes \Wc(0) \ .
\ee
This result has an evident CFT interpretation. The algebra $\Wc(0)$ is a non-degenerate boundary theory in the sense of definition \ref{def:non-deg-bnd}. In fact, $\Wc(0)$ is nothing but the chiral symmetry algebra of the $c=0$ Virasoro minimal model. According to the discussion in sections \ref{sec:boundary-to-bulk} and \ref{sec:fullcentre}, $Z(\Wc(0))$ is the largest bulk theory that can be consistently and non-degenerately joined to the boundary theory $\Wc(0)$. It is then not surprising that this bulk theory is the $c=0$ Virasoro minimal model, i.e.\ the trivial theory with one-dimensional state space.

\medskip

The derivation of \eqref{eq:R(W0)} is as follows. We first remark that the functor $\Wc(0) \otimes_f (-)$ from $\Cc$ to $\Cc$ is monoidal (combine $\Wc(0) \otimes_f \Wc(0) \cong \Wc(0)$ with lemma \ref{lem:SS=S-algebra}). The image of $\Wc(0) \otimes_f (-)$ lies in the full subcategory of $\Cc_0 \subset \Cc$ of objects isomorphic to direct sums of $\Wc(0)$ (thus $\Cc_0$ is a tensor-ideal). But $\Cc_0 \cong \vect$ as monoidal categories via $N \mapsto \Cc(\Wc(0),N)$. We shall need $N \mapsto \Cc(N,\Wc(0))$ instead, which is a monoidal equivalence $\Cc_0^\mathrm{opp} \cong \vect$. Now
\be
  \Cc(\Wc(0) \otimes_f U, \Wc(0)) \cong \Cc(\Wc(0) \otimes_f U \otimes_f \Wc(0) , \Wc^*) \cong \Cc(U,\Wc(0)) \ ,
\ee
so that the composition $\Cc \xrightarrow{\Wc(0) \otimes_f (-)} \Cc_0 \xrightarrow{\sim} \vect$ is just $\Cc(-,\Wc(0))$. Since both functors are monoidal, we conclude that $\Cc(-,\Wc(0))$ is a monoidal functor $\Cc^\mathrm{opp} \to \vect$. Finally, for all $U,V \in \Cc$ we have 
\be\begin{array}{ll}
  \Cc(T(U \boxtimes V),\Wc(0)) &=~ \Cc(U \otimes_f V,\Wc(0)) ~\overset{(1)}{\cong}~ \Cc(U,\Wc(0)) \otimes_{\Cb} \Cc(V,\Wc(0)) \\[.2em]
  & \overset{(2)}{\cong}~ \Cc^2(U \boxtimes V, \Wc(0) \boxtimes \Wc(0)) \ ,
\end{array}
\ee  
where (1) is monoidality of $\Cc(-,\Wc(0))$ and (2) is \eqref{eq:hom-iso}. By lemma \ref{lem:nat-xfer-unique-on-prod}, this shows that for all $X \in \CoC$ we have $\Cc(T(X),\Wc(0))  \cong \Cc^2(X, \Wc(0) \boxtimes \Wc(0))$. Thus by definition of the adjoint, $R(\Wc(0)) = \Wc(0) \boxtimes \Wc(0)$. 

\subsection{Computation of $R(\Wc^*)$} \label{sec:RW*-socle-dec}

We now want to implement the general construction of section \ref{sec:boundary-to-bulk} for more interesting boundary theories than the $\Wc(0)$-example treated in the previous subsection. That is, we should fix an associative algebra $A \neq \Wc(0)$ in $\Cc$ which has a non-degenerate pairing and compute its full centre $Z(A) \in \CoC$ according to definition \ref{eq:full-centre-in-C2}. 

A point to stress is that neither $\Wc$ nor $\Wc^*$ are non-degenerate boundary theories as in definition \ref{def:non-deg-bnd}. They are both associative (and commutative) algebras (cf.\ lemma \ref{lem:SS=S-algebra}), but neither allows for a non-degenerate pairing. This is evident from the socle filtration \eqref{eq:WW*-compos}, as a necessary condition for a non-degenerate pairing on an algebra $A$ is that $A^* \cong A$ (see definition \ref{def:non-deg-pair}).

According to \cite[Thm.\,3.10]{Gaberdiel:2009ug}, one way to produce such an algebra $A \in \Cc$ is to take an object $U \in \Cc$ for which $U^*$ is the categorical dual and set $A = U \otimes_f U^*$. There are (recall the above disclaimer) many such objects to choose from. The original idea was to choose $A$ small in order to simplify the analysis, and one convenient choice which produces a particularly small $A$ is $U = \Wc(\tfrac58)$ (which is self-contragredient). From \cite[App.\,A.3\,\&\,A.4]{Gaberdiel:2009ug} we read off that $A = \Wc(\tfrac58) \otimes_f \Wc(\tfrac58)$ has socle filtration
\be \label{eq:small-bnd-object}
A~:~
\raisebox{2.4em}{\xymatrix@C=1em@R=1em{
  & 2 \ar@{-}[dr] \ar@{-}[dl] \ar@{-}[d]
\\
7 \ar@{-}[dr] &0\ar@{-}[d]&7\ar@{-}[dl]  
\\
& 2
}}
\qquad .
\ee
The next step would be to use expression \eqref{eq:R(U)-expression} and theorem \ref{thm:full-centre-via-left-centre} to obtain $Z(A)$ as the subobject $C_l( R(A) )$ of $R(A) = ((A^* \boxtimes \Wc) \otimes_f R(\Wc^*)^*)^*$. Unfortunately, we do not control the tensor product and braiding on $\Cc$ well enough to carry out this computation. 

\medskip

Instead, let us have a closer look at $\Wc^*$. As we already remarked, $\Wc^*$ is not a non-degenerate boundary theory, but it is still a (non-unital) associative algebra and hence a boundary theory with background states as alluded to in remark \ref{rem:bnd-bgstate}\,(ii).
The full centre $Z(\Wc^*)$ is a commutative associative algebra and provides a bulk theory with background states as defined in section \ref{sec:genout-ideal}. Indeed, it is by construction the maximal such theory that can be non-degenerately joined to the boundary theory $\Wc^*$ (this follows from \eqref{eq:terminal-obj-in-P}, table \ref{tab:alg-cft-table2} and remark \ref{rem:fullcentre}\,(i)). However, since $Z(\Wc^*)$ is obtained from a `non-standard' boundary theory, it is maybe not surprising that it will show some `non-standard' features itself; this will be discussed in section \ref{sec:R1-OPE} below.

\medskip

The first step towards $Z(\Wc^*)$ is to determine $R(\Wc^*)$. The method for this given in section \ref{sec:computeR1*} has been carried out (recall the above disclaimer) in \cite[Sect.\,2.2]{Gaberdiel:2010rg}. The result is as follows. As a $\Wc \otimes_{\Cb} \Wc$-representation, $R(\Wc^*)$ splits into 5 indecomposable summands,
\be
  R(\Wc^*) = {\cal H}_0 \oplus  {\cal H}_{\frac{1}{8}} \oplus {\cal H}_{\frac{5}{8}}
\oplus {\cal H}_{\frac{1}{3}} \oplus {\cal H}_{\frac{-1}{24}}  \oplus {\cal H}_{\frac{35}{24}}\ ,
\end{equation}
where we have labelled the individual blocks ${\cal H}_h$ by the conformal weight of the
lowest state. The blocks $\Hc_{\frac{-1}{24}}$ and $\Hc_{\frac{35}{24}}$ are irreducible and given by
\be 
 \Hc_{\frac{-1}{24}} = \Wc(\tfrac{-1}{24}) \otimes_\Cb \Wc(\tfrac{-1}{24}) \ , \qquad
 \Hc_{\frac{35}{24}} = \Wc(\tfrac{35}{24}) \otimes_\Cb \Wc(\tfrac{35}{24}) \ .
\ee
The remaining blocks are not irreducible. The socle filtration of ${\cal H}_{\frac{1}{8}}$ reads
\be
 {\cal H}_{\frac{1}{8}} ~ : \quad
 \begin{array}{rcl} 
   \Wc(\tfrac{1}{8}) \otimes_\Cb \Wc(\tfrac{1}{8})
 &\oplus& \Wc(\tfrac{33}{8}) \otimes_\Cb \Wc(\tfrac{33}{8})
\\
 &\downarrow&
\\
 2 \cdot \Wc(\tfrac{1}{8}) \otimes_\Cb \Wc(\tfrac{33}{8})
 &\oplus& 2 \cdot \Wc(\tfrac{33}{8}) \otimes_\Cb \Wc(\tfrac{1}{8})
\\
 &\downarrow&
\\
 \Wc(\tfrac{1}{8}) \otimes_\Cb \Wc(\tfrac{1}{8})
 &\oplus& \Wc(\tfrac{33}{8}) \otimes_\Cb \Wc(\tfrac{33}{8})\ .
\end{array}
\quad 
\ee  
This can be organised in a more transparent fashion if we 
replace each direct sum by a little table where we indicate the multiplicity of each term as in
\be\label{eq:3level_compseries} \small
\begin{tabular}{c|c|c|}
\raisebox{-7pt}{\rule{0pt}{20pt}} & $\tfrac{1}{8}$ & $\tfrac{33}{8}$ \\ 
\hline
\raisebox{-7pt}{\rule{0pt}{20pt}}  $\tfrac{1}{8}$  & 1 & 0 \\
\hline
\raisebox{-7pt}{\rule{0pt}{20pt}}$\tfrac{33}{8}$ & 0 & 1 \\
\hline
\end{tabular}
\quad \longrightarrow \quad
\begin{tabular}{c|c|c|}
\raisebox{-7pt}{\rule{0pt}{20pt}}& $\tfrac{1}{8}$ & $\tfrac{33}{8}$ \\ 
\hline
\raisebox{-7pt}{\rule{0pt}{20pt}} $\tfrac{1}{8}$  & 0 & 2 \\
\hline
\raisebox{-7pt}{\rule{0pt}{20pt}} $\tfrac{33}{8}$ & 2 & 0 \\
\hline
\end{tabular}
\quad \longrightarrow \quad
\begin{tabular}{c|c|c|}
\raisebox{-7pt}{\rule{0pt}{20pt}}& $\tfrac{1}{8}$ & $\tfrac{33}{8}$ \\ 
\hline
\raisebox{-7pt}{\rule{0pt}{20pt}} $\tfrac{1}{8}$  & 1 & 0 \\
\hline
\raisebox{-7pt}{\rule{0pt}{20pt}} $\tfrac{33}{8}$ & 0 & 1 \\
\hline
\end{tabular} \quad .
\ee
The socle filtrations of  ${\cal H}_{{5}/{8}}$ and ${\cal H}_{1/3}$ are  the same, but with $\{ \tfrac18, \tfrac{33}8 \}$ replaced by $\{ \tfrac58, \tfrac{21}8 \}$ and $\{ \tfrac13, \tfrac{10}3 \}$, respectively. The sector $\Hc_0$ is the most interesting, its socle filtration is
(all empty entries are equal to `0') 
\def\stab{\hspace*{-.6em}&\hspace*{-.4em}}
\begin{align}\label{eq:0_compseries}
\begin{tikzpicture}[>=latex,baseline=(F)]
  \node (A) at (0,0) {\scriptsize
  \begin{tabular}{c|c|c|c|c|c|}
  \multicolumn{6}{c}{Level 4}\\[.5em]
 &\hspace*{-.6em} 0 \stab 1 \stab 2 \stab 5 \stab 7 \hspace*{-.6em}\\
\hline
0 &\hspace*{-.6em}   1 \stab   \stab   \stab   \stab   \hspace*{-.6em}\\
\hline
1 &\hspace*{-.6em}   \stab 1 \stab   \stab   \stab   \hspace*{-.6em}\\
\hline
2 &\hspace*{-.6em}   \stab   \stab 1 \stab   \stab   \hspace*{-.6em}\\
\hline
5 &\hspace*{-.6em}   \stab   \stab   \stab 1 \stab   \hspace*{-.6em}\\
\hline
7 &\hspace*{-.6em}   \stab   \stab   \stab   \stab 1 \hspace*{-.6em}\\
\hline
\end{tabular}};
  \node (B) at (-5,0) {\scriptsize\begin{tabular}{c|c|c|c|c|c|}
  \multicolumn{6}{c}{Level 3}\\[.5em]
 &\hspace*{-.6em} 0 \stab 1 \stab 2 \stab 5 \stab 7 \hspace*{-.6em}\\
\hline
0 &\hspace*{-.6em}    \stab 1  \stab 1  \stab   \stab   \hspace*{-.6em}\\
\hline
1 &\hspace*{-.6em} 1 \stab  \stab   \stab  2 \stab 2  \hspace*{-.6em}\\
\hline
2 &\hspace*{-.6em} 1  \stab   \stab  \stab 2  \stab 2  \hspace*{-.6em}\\
\hline
5 &\hspace*{-.6em}   \stab  2 \stab 2  \stab  \stab   \hspace*{-.6em}\\
\hline
7 &\hspace*{-.6em}   \stab 2  \stab 2  \stab   \stab  \hspace*{-.6em}\\
\hline
\end{tabular}};
  \node (C) at (2.5,4) {\scriptsize\begin{tabular}{c|c|c|c|c|c|}
  \multicolumn{6}{c}{Level 2}\\[.5em]
 &\hspace*{-.6em} 0 \stab 1 \stab 2 \stab 5 \stab 7 \hspace*{-.6em}\\
\hline
0 &\hspace*{-.6em} 1 \stab   \stab   \stab 2 \stab 2 \hspace*{-.6em}\\
\hline
1 &\hspace*{-.6em}  \stab 2 \stab 4 \stab   \stab   \hspace*{-.6em}\\
\hline
2 &\hspace*{-.6em}  \stab 4 \stab 2 \stab  \stab  \hspace*{-.6em}\\
\hline
5 &\hspace*{-.6em} 2 \stab  \stab  \stab 2 \stab 4 \hspace*{-.6em}\\
\hline
7 &\hspace*{-.6em} 2 \stab  \stab  \stab 4 \stab 2 \hspace*{-.6em}\\
\hline
\end{tabular}};
  \node (D) at (-2.5,4) {\scriptsize\begin{tabular}{c|c|c|c|c|c|}
  \multicolumn{6}{c}{Level 1}\\[.5em]
 &\hspace*{-.6em} 0 \stab 1 \stab 2 \stab 5 \stab 7 \hspace*{-.6em}\\
\hline
0 &\hspace*{-.6em}    \stab 1  \stab 1  \stab   \stab   \hspace*{-.6em}\\
\hline
1 &\hspace*{-.6em} 1 \stab  \stab   \stab  2 \stab 2  \hspace*{-.6em}\\
\hline
2 &\hspace*{-.6em} 1  \stab   \stab  \stab 2  \stab 2  \hspace*{-.6em}\\
\hline
5 &\hspace*{-.6em}   \stab  2 \stab 2  \stab  \stab   \hspace*{-.6em}\\
\hline
7 &\hspace*{-.6em}   \stab 2  \stab 2  \stab   \stab  \hspace*{-.6em}\\
\hline
\end{tabular}};
  \node (E) at (-7.5,4) {\scriptsize\begin{tabular}{c|c|c|c|c|c|}
  \multicolumn{6}{c}{Level 0}\\[.5em]
 &\hspace*{-.6em} 0 \stab 1 \stab 2 \stab 5 \stab 7 \hspace*{-.6em}\\
\hline
0 &\hspace*{-.6em}  1  \stab   \stab   \stab   \stab   \hspace*{-.6em}\\
\hline
1 &\hspace*{-.6em}   \stab 1 \stab   \stab   \stab   \hspace*{-.6em}\\
\hline
2 &\hspace*{-.6em}   \stab   \stab 1 \stab   \stab   \hspace*{-.6em}\\
\hline
5 &\hspace*{-.6em}   \stab   \stab   \stab 1 \stab   \hspace*{-.6em}\\
\hline
7 &\hspace*{-.6em}   \stab   \stab   \stab   \stab 1 \hspace*{-.6em}\\
\hline
\end{tabular}};
\node (F) at (-5,2) {};
\node (G) at (2.5,2) {};
\draw[<-] (A) -- (B);
\draw[<-] (B) -- (F.center) -- (G.center) -- (C);
\draw[<-] (C) -- (D);
\draw[<-] (D) -- (E);
\node at (1.5,0) {\qquad .};
\end{tikzpicture}
\end{align}

Via \eqref{eq:F(A)-product}, $R(\Wc^*)$ inherits the structure of an associative algebra from $\Wc^*$. To find the full centre we should compute $Z(\Wc^*) = C_l(R(\Wc^*)) \subset R(\Wc^*)$. Again, the lack of detailed knowledge of the braiding means we currently cannot do this. However, we know from remark \ref{rem:trivial-twist} that $\exp\!\big( 2 \pi i ( L_0 -  \overline L_0) \big)$ acts as the identity on $R(\Wc^*)$. This implies that $m_{R} \circ c_{R,R} \circ c_{R,R} = m_R$ (abbreviating $R \equiv R(\Wc^*)$), i.e.\ taking one field all the way around another does not produce a monodromy. Our guess is that in fact $m_{R} \circ c_{R,R} = m_R$, i.e.\ $Z(\Wc^*) = R(\Wc^*)$, but as we already said, we cannot check this.

\medskip

Finally, recall that the functor $R(-)$ is lax monoidal (cf.\ section \ref{sec:computeR1*}) and so the algebra map $\Wc^* \to \Wc(0)$ gives an algebra map $\pi : R(\Wc^*) \to R(\Wc(0))$, which we expect to be non-zero. As a consequence, there is an OPE-preserving surjection from the tentative bulk theory $Z(\Wc^*) = R(\Wc^*)$ to the $c=0$ minimal model $\Wc(0) \boxtimes \Wc(0)$. In this sense, $R(\Wc^*)$ is a `refinement' of the minimal model.

\subsection{Modular invariance}\label{sec:W23-modinv}

A second interesting feature of this construction is that it leads to a modular invariant partition function. It is a straightforward exercise to write down the vector space of modular invariant bilinear combinations of the 13 characters of irreducible $\Wc$-representations. Namely, make the general ansatz
\be
  \xi(M,\tau) := \sum_{a,b} M_{ab} \, \chi_{\Wc(a)}(q) \, \chi_{\Wc(b)}(\bar q) 
  \qquad ; \quad q = e^{2\pi i \tau} \ ,
\ee
where $M$ is a $13{\times}13$-matrix and $a,b$ run over the lowest $L_0$-weights of the 13 irreducibles. Then the condition $\xi(M,\tau+1) = \xi(M,\tau)$ already forces most entries of $M$ to be zero. The known modular properties of the characters (see \cite{Feigin:2006iv}, or \cite[App.\,A.2]{Gaberdiel:2010rg} for the notation used here) turn $\xi(M,-1/\tau) = \xi(M,\tau)$ into a linear equation for $M$. In this way one finds that $\xi(M,\tau+1) = \xi(M,\tau) = \xi(M,-1/\tau)$ has a two-dimensional space of solutions given by (zeros are not written)
\be
M ~= \quad 
{\tiny
\begin{tabular}{c|c@{~}c@{~}c@{~}c@{~}c@{~}c@{~}c@{~}c@{~}c@{~}c@{~}c@{~}c@{~}c}
& $\Wc(0)$ & $\Wc(1)$ & $\Wc(2)$ & $\Wc(5)$ & $\Wc(7)$ & $\Wc(\tfrac{1}{3})$ & $\Wc(\tfrac{10}{3})$ & $\Wc(\tfrac{5}{8})$ & $\Wc(\tfrac{21}{8})$ & $\Wc(\tfrac{1}{8})$  & $\Wc(\tfrac{33}{8})$ & $\Wc(\tfrac{-1}{24})$ & $\Wc(\tfrac{35}{24})$ 
\\
\hline
\\[-.7em]
$\Wc(0)$ & $\alpha$ & $2\beta$ & $2\beta$ & $2\beta$ & $2\beta$ 
\\[.5em]
$\Wc(1)$ & $2\beta$ & $4\beta$ & $4\beta$ & $4\beta$ & $4\beta$
\\[.5em]
$\Wc(2)$ & $2\beta$ & $4\beta$ & $4\beta$ & $4\beta$ & $4\beta$
\\[.5em]
$\Wc(5)$ & $2\beta$ & $4\beta$ & $4\beta$ & $4\beta$ & $4\beta$
\\[.5em]
$\Wc(7)$ & $2\beta$ & $4\beta$ & $4\beta$ & $4\beta$ & $4\beta$
\\[.5em]
$\Wc(\tfrac{1}{3})$ & & & & & & $2\beta$ & $2\beta$
\\[.5em]
$\Wc(\tfrac{10}{3})$ & & & & & & $2\beta$ & $2\beta$
\\[.5em]
$\Wc(\tfrac{5}{8})$ & & & & & & & & $2\beta$ & $2\beta$ 
\\[.5em]
$\Wc(\tfrac{21}{8})$ & & & & & & & & $2\beta$ & $2\beta$ 
\\[.5em]
$\Wc(\tfrac{1}{8})$ & & & & & & & & & & $2\beta$ & $2\beta$ 
\\[.5em]
$\Wc(\tfrac{33}{8})$ & & & & & & & & & & $2\beta$ & $2\beta$ 
\\[.5em]
$\Wc(\tfrac{-1}{24})$ & & & & & & & & & & & & $\beta$ 
\\[.5em]
$\Wc(\tfrac{35}{24})$ & & & & & & & & & & & & & $\beta$
\end{tabular}
}
\ee
with $\alpha,\beta \in \Cb$. Summing up the entries of the tables in \eqref{eq:3level_compseries} and \eqref{eq:0_compseries} level by level, one quickly checks that the above space of solutions is spanned by the characters of $R(\Wc(0))$ and $R(\Wc^*)$. In particular, we see that the character $\chi_{R(\Wc^*)}(q,\bar q)$ is modular invariant.

\medskip

To relate the character $\chi_{R(\Wc^*)}(q,\bar q)$ to the partition function of $R(\Wc^*)$ we appeal to remark \ref{rem:sphere2}\,(iv) and section \ref{sec:mod-inv-pf}: The composition series \eqref{eq:3level_compseries} and \eqref{eq:0_compseries} suggest that $R(\Wc^*) \cong R(\Wc^*)^*$, i.e.\ that $R(\Wc^*)$ is self-conjugate. Therefore, assuming inversion invariance of $R(\Wc^*)$, the construction in remark \ref{rem:sphere2}\,(iv) provides us with non-degenerate two-point correlators on the Riemann sphere. According to section \ref{sec:mod-inv-pf} this allows one to express the torus amplitude as a trace over the space of states.

\medskip

The partition function of $R(\Wc^*)$ follows a pattern also observed in supergroup WZW models and the $W_{1,p}$-models \cite{Quella:2007hr,Gaberdiel:2007jv}, as well as in the study of modular properties of Hopf algebra modules \cite{Fuchs:2011mg} (cf.\ remark \ref{rem:FSS-Hopf}). Namely, despite the complicated submodule structure of $R(\Wc^*)$ as given in \eqref{eq:0_compseries}, in terms of characters we simply have
\be
  \chi_{R(\Wc^*)}(q,\bar q) ~=~ \sum_h \chi_{\Wc(h)}(q) \,\chi_{\Pc(h)}(\bar q) \ ,
\ee
where the sum is over the weights of the 13 irreducibles. 

In \cite{Pearce:2010pa} it has been argued that this bilinear combination of characters is modular invariant for all $W_{p,q}$-models.
Furthermore, it turns out that the function $\chi_{R(\Wc^*)}(q,\bar q)$ can -- up to a constant -- be written as a linear combination of modular invariant free boson partition functions at $c=1$ \cite{Pearce:2010pa}. In this form, $\chi_{R(\Wc^*)}(q,\bar q)$ has already appeared in the context of a model for dilute polymers \cite{Saleur:1991hk}.\footnote{
  More precisely, $\chi_{R(\Wc^*)}(q,\bar q) = Z_c[\frac32,1] + 3$, where for $Z_c[\frac32,1]$ we refer to 
  \cite[Eqn.\,(38)]{Saleur:1991hk} and for the relation to polymers to \cite[Sect.\,4.1.2]{Saleur:1991hk}. 
  We thank Hubert Saleur for a discussion on this point.
}

\medskip

\subsection{Correlators and OPEs in $R(\Wc^*)$} \label{sec:R1-OPE}

Finally, we want to explain the non-standard features of the putative bulk theory $R(\Wc^*)$ 
in more detail. In particular, we want to show that it does not have an identity field, nor
a stress energy tensor. (However, the correlation functions are still invariant under infinitesimal conformal transformations.)

In order to understand these features let us study the OPEs of the low-lying fields. It follows from the socle filtration in (\ref{eq:0_compseries}) that there are three states of generalised conformal dimension $(0,0)$. These are mapped into one another
under the action of the zero modes. Denoting the relevant states again by $\eta$, 
$\omega$ and $\Omega$, one would expect (as is also assumed in (\ref{5fold})) that the
relevant zero mode can be taken to be $L_0$ or $\overline{L}_0$. Since locality requires that 
$L_0-\overline{L}_0$ must be diagonalisable (cf.\ remark \ref{rem:trivial-twist}), we then conclude that 
\be
L_0 \eta = \overline L_0 \eta = \omega 
~~,\quad
L_0 \omega = \overline L_0 \omega = \Omega 
~~,\quad
L_0 \Omega = \overline L_0 \Omega = 0 \ .
\ee
We can again define quasiprimary states $t$ and $\Tc$ by  \eqref{eq:tT-def}, and likewise for $\overline{t}$ and $\overline{\Tc}$. It follows from \eqref{eq:0_compseries} that $\Tc$ is a holomorphic field since there is no primary field of generalised dimension $(2,1)$ in the
third or fourth level of the socle filtration and hence $\overline{L}_{-1} \Tc=0$.  By the same argument 
we also see that $\Omega$ is annihilated by all $L_n$ and $\overline{L}_n$ modes. On the other hand,
we cannot conclude that $t$ is holomorphic, since there is a $(2,1)$ state in level 2
of the socle filtration \eqref{eq:0_compseries}; this is indeed expected since 
the diagram (\ref{5fold}) still applies, and hence $t$ is the `logarithmic partner' of $\Tc$ (and 
thus should depend on both $z$ and $\bar{z}$). 

\subsubsection*{Some OPEs}

The derivation of the OPEs and correlators presented below can be found in appendix \ref{app:OPEs}, here we merely list the results. The simplest set of OPEs are those involving $\Omega$:
\be \label{eq:Om-phi-OPE}
  \Omega(z) \phi(w) = \pi(\phi) \cdot \Omega(w)
  \quad , ~~ \text{for all} ~ \phi \in F \ ,
\ee
and the OPE does not contain subleading terms.
Here $\pi$ is the intertwiner $R(\Wc^*) \to R(\Wc(0)) \equiv \Cb$ introduced in the previous subsection. The map $\pi$ is an algebra homomorphism, i.e.\ it is compatible with the OPE, and it is non-vanishing on the level 0 state $\eta$. We can normalise $\eta$ such that $\pi(\eta) = 1$. 
In particular,
\be
  \Omega(z)\Omega(w) = \Omega(z)\omega(w) = 0
  \quad , \quad
  \Omega(z) \eta(w) = \Omega(w) \ .
\ee
Since $\Omega$ is the only $sl(2,\Cb)$-invariant field in $R(\Wc^*)$, this shows that $R(\Wc^*)$ has no identity field. Next we list some OPEs involving $\Tc$:
\be \label{eq:RW*-OPEs-T}
\begin{array}{rcl}
\Tc(z)\, \omega(w) &=& \Oc\big((z{-}w)^0\big)
\quad , \quad \Tc(z) \Tc(0) ~=~  \Oc\big((z{-}w)^0\big) \ ,
\\[1em]
\Tc(z) \,\eta(w) &=&  \displaystyle
A \cdot \Big(~ \frac{\Omega(w)}{(z-w)^{2}}  + \frac{\tfrac{\partial}{\partial w}\omega(w)}{z-w} ~ \Big) ~+~ \Oc\big((z{-}w)^0\big) \ ,
\\[1em]
t(z) \, \Tc(w) &=& \displaystyle
(A+1) \cdot \Big(~
  \frac{-5\,\Omega(w)}{(z-w)^4} 
  + \frac{ 2 \Tc(w) }{(z{-}w)^2} 
  + \frac{\tfrac{\partial}{\partial w}\Tc(w)}{z-w} ~\Big) ~+~ \Oc\big((z{-}w)^0\big) \ ,
\end{array}
\ee
where $A \in \Cb$ is a so far undetermined constant.
From this we see that $\Tc$ -- the only holomorphic field of weight $(2,0)$ in the space of fields $F$ -- does not behave as the stress tensor. For example, it has regular OPE with itself. However, a glance at \eqref{eq:M-modeactions} shows that the OPE of the field $\hat{t}= \frac{1}{A+1} t$ with $\Tc$ can be written as
\be
\hat{t}(z) \, \Tc(w) ~=~ \sum_{n=-1}^{2} \frac{(L_n \Tc)(w)}{(z-w)^{-n-2} }
~+~ \Oc\big((z-w)^0\big) \ .
\ee
So in this OPE, $\hat t$ behaves as the stress tensor (but it is not the stress tensor as it is not holomorphic).

Finally, we give two more OPEs for fields of generalised weight $(0,0)$:
\be \label{eq:gen(00)-OPEs}
\begin{array}{rcl}
  \omega(z) \, \omega(w) &=& B \cdot \Omega(w) + \dots \quad ,
  \\[.5em]
  \omega(z) \, \eta(w) &=& \big( 2(A{-}B) \ln|z{-}w|^2 + C \big) \cdot \Omega(w) ~+~ \big(1{-}B{+}2A\big) \cdot \omega(w) + \dots
   \ ,
\end{array}
\ee
where $B,C \in \Cb$ are new constants which remain to be determined. The dots stand for terms which vanish for $|z-w| \to 0$ and which have no component of generalised weight $(0,0)$.

\subsubsection*{Some correlators}

Recall the intertwiner $\pi : R(\Wc^*) \to R(\Wc(0)) \equiv \Cb$ from above. By our normalisation $\pi(\eta) = 1$ and by \eqref{eq:project-to-c=0-minmod} we have
\be
     \langle \eta(z_1) \cdots \eta(z_n) \rangle = 1 \ .
\ee
These are the correlators of the $c=0$ minimal model. If a state from the kernel of $\pi$  is inserted, the correlator vanishes.

To obtain non-trivial correlators we have to allow background states as in section \ref{sec:genout-ideal}. For example, the normalisation condition ${}^\eta \langle \Omega(0) \rangle = 1$ and the OPE \eqref{eq:Om-phi-OPE} imply the correlators
\be
  {}^\eta\langle \phi_1(z_1) \cdots \phi_n(z_n) \Omega(w) \rangle
  = \pi(\phi_1) \cdot \pi(\phi_2) \cdots \pi(\phi_n)
\ee
for all $\phi_i \in F$, independent of the insertion points $z_i$ and $w$. Another example is\footnote{
  The constant $-5$ found here is reminiscent of the $b$-value in the correlator of the stress tensor and its logarithmic partner (but recall that $\Tc$ is not a stress tensor). The value $b=-5$ has recently been observed in certain logarithmic bulk theories with $c=0$  \cite{Vasseur:2011ud}.
  }
\be
  {}^t \langle \Tc(0) \rangle =   {}^{\Tc} \langle t(0) \rangle = -5 \ ,
\ee
which follows immediately from \eqref{eq:M-weight1-IP} together with ${}^\eta \langle \Omega(0) \rangle = 1$ which fixes $N=1$.
Finally, from the OPEs \eqref{eq:gen(00)-OPEs} we can directly read off the two-point correlators
\be
\begin{array}{ll}
{}^\omega\langle \omega(z) \omega(w) \rangle = 0 ~~,~~~ &{}^\omega\langle \eta (z) \omega(w) \rangle = 1{-}B{+}2A \ ,
\\[.5em]
{}^\eta\langle \omega(z) \omega(w) \rangle = B 
~~,~~~~ &
{}^\eta\langle \eta (z) \omega(w) \rangle =  2(A{-}B) \ln|z{-}w|^2 + C \ .
\end{array}
\ee

In summary, we have seen that $R(\Wc^*)$ does not have an identity field or a stress tensor. Consequently, $R(\Wc^*)$ does not allow for an OPE-preserving embedding 
$\Wc \otimes_{\Cb} \Wc \to R(\Wc^*)$ as one might have expected from a 
$\Wc \otimes_{\Cb} \Wc$-symmetric theory. Nonetheless, by definition the $n$-point correlators are $\Vir \oplus \Vir$-coinvariants (and also $\Wc \otimes_{\Cb} \Wc$-coinvariants). The above problems are closely related to the fact that the boundary theory $\Wc^\ast$ from which this construction starts only defines a boundary theory with background states, see  remark \ref{rem:bnd-bgstate}\,(ii). If one were to consider instead a usual non-degenerate boundary theory $A$ with identity field as in \eqref{eq:small-bnd-object}, one would expect that the corresponding full center $Z(A)$ is better behaved. In particular, the unit condition in theorem \ref{thm:fullcentre-alg-assoc+comm} gives then a non-zero OPE-preserving map $\Wc \otimes_{\Cb} \Wc \to Z(A)$ which we expect to be an embedding, so that $Z(A)$ would have an identity field and a stress tensor.

\bigskip

\noindent
{\bf Acknowledgements:} This paper is based on talks given by IR at the conferences \textsl{Conformal field theories and tensor categories} (13--17 June 2011, Beijing) and
\textsl{Logarithmic CFT and representation theory} (3--7 October 2011, Paris).
IR thanks the Beijing International Center for Mathematical Research at Peking University and the Centre Emile Borel at the Institut Henri Poincare in Paris for hospitality during one month stays around these conferences. MRG and SW thank the Centre Emile Borel for hospitality during the second conference.
The authors thank
Alexei Davydov,
J\"urgen Fuchs,
Azat Gainutdinov,
J\'er\^ome Germoni,
Antun Milas,
Victor Ostrik,
Hubert Saleur,
Christoph Schweigert
and
Alexei Semikhatov
for
helpful discussions.
The research of MRG is supported in part by the Swiss National Science Foundation.
IR is partially supported by the Collaborative Research Centre 676 `Particles, Strings and the Early Universe -- the Structure of Matter and Space-Time'.
SW is supported by the SNSF scholarship for prospective researchers and by 
the World Premier International Research Center Initiative (WPI Initiative), 
MEXT, Japan.

\appendix

\section{Conditions B1--B5} \label{app:B1-B5}

In this appendix we write out conditions (B1)--(B5) referred to in definition \ref{def:CFT-uhp}. Let $(F,M,\Omega^*;B,m,\omega^*;b)$ be as in that definition.

Below, we always take  $(x_1,\dots,x_m) \in \Rb^m{\setminus}\text{diag}$, $(z_1,\dots,z_n) \in \Hb^n{\setminus}\text{diag}$ and $\psi_i \in B$, $\phi_j \in F$. The integers $m,n$ are to be chosen such that all $U_{m,n}$ in the statement are defined (there has to be at least one field insertion; this field insertion can be a boundary field or a bulk field, i.e.\ $m,n \in \Zb_{\ge 0}$, $m+n >0$).
\begin{itemize}
\item[(B1)] $U_{m,n}$ is smooth in each argument from $\Rb$ and $\Hb$, and linear in each argument from $B$ and $F$.
\item[(B2)] $U_{m,n}$ is invariant under joint permutation of $\Rb^m$ and $B^m$ and $\Hb^n$ and $F^n$. Namely, for all $\sigma \in S_m$ and $\tau \in S_n$,
\be
\begin{array}{l}
  U_{m,n}(x_1,\dots,x_m,z_1,\dots,z_n,\psi_1,\dots,\psi_m,\phi_1,\dots,\phi_n) 
  \\[.5em]
  \hspace{2em} = 
    U_{m,n}(x_{\sigma(1)},\dots,x_{\sigma(m)},z_{\tau(1)},\dots,z_{\tau(n)},\psi_{\sigma(1)},\dots,\psi_{\sigma(m)},\phi_{\tau(1)},\dots,\phi_{\tau(n)}) \ .
\end{array}
\ee
\end{itemize}

Because there are three maps describing a short distance expansion in the defining data, namely $M, m, b$, there are three ways to link the $U_{k,l}$ for different $k,l$. These are listed in the next three conditions. We denote the canonical projection $\overline F \to \bigoplus_{d \le \Delta} F^{(d)}$ by $P_\Delta$ and the canonical projection $\overline B \to \bigoplus_{d \le h} B^{(d)}$ by $P_h$.
\begin{itemize}
\item[(B3a)] (\textsl{bulk OPE})
Suppose that $n \ge 2$ and that $|z_1-z_2| < |z_i-z_2|$ for all $i>2$ and $|z_1-z_2| < |x_j-z_2|$ for all $j$. Then we can take the OPE of $\phi_1(z_1)$ and $\phi_2(z_2)$, reducing the number of bulk fields by one:
\be\label{eq:BB-OPE-condition}
\begin{array}{l}
\displaystyle
  U_{m,n}\big(\dots,z_1,z_2,\dots,\phi_1,\phi_2,\dots \big) 
  \\[.5em]
\displaystyle
  = \lim_{\Delta \to \infty}
  U_{m,n-1}\big(\dots,z_2,\dots,P_\Delta \circ M_{z_1-z_2}(\phi_1 \otimes \phi_2),\dots \big) 
\end{array}
\ee
\item[(B3b)] (\textsl{boundary OPE})
Suppose that $m \ge 2$ and that $x_1>x_2$, and $|x_1-x_2| < |x_i-x_2|$ for all $i>2$ and $|x_1-x_2| < |z_j-x_2|$ for all $j$. Then we can take the OPE of $\psi_1(x_1)$ and $\psi_2(x_2)$, reducing the number of boundary fields by one:
\be\label{eq:bb-OPE-condition}
\begin{array}{l}
\displaystyle
  U_{m,n}\big(x_1,x_2,\dots,\psi_1,\psi_2,\dots \big) 
  \\[.5em]
\displaystyle
  = \lim_{h \to \infty}
  U_{m-1,n}\big(x_2,\dots,P_h \circ m_{x_1-x_2}(\psi_1 \otimes \psi_2),\dots \big) 
\end{array}
\ee
\item[(B3c)] (\textsl{bulk-boundary map})
Suppose that $n \ge 1$. Write $z_1 = x + iy$. Suppose further that $|x_i-x|>y$ for all $i$ and $|z_j-x|>y$ for all $j>0$. Then we can expand $\phi_1(z_1)$ in terms of boundary fields at $x$, exchanging one bulk field for one boundary field:
\be\label{eq:Bb-OPE-condition}
\begin{array}{l}
\displaystyle
  U_{m,n}\big(x_1,\dots,x_m,z_1,\dots,z_n,\psi_1,\dots,\psi_m,\phi_1,\dots,\phi_n\big) 
  \\[.5em]
\displaystyle
  = \lim_{h \to \infty}
  U_{m+1,n-1}\big(x,x_1,\dots,x_m,z_2,\dots,z_n,P_h \circ b_{y}(\phi_1), \psi_1, \dots,\psi_m,\phi_2,\dots,\phi_n\big) 
\end{array}
\ee
\end{itemize}

The relation between derivatives and $L_{-1}$ is as before,
\begin{itemize}
\item[(B4)] The $U_{m,n}$ satisfy
\be \begin{array}{l}
\frac{d}{dz_1} U_{m,n}\big(\dots,z_1,\dots,\phi_1,\dots \big)
\,=\,   U_{m,n}\big(\dots,z_1,\dots,L_{-1} \phi_1,\dots \big) \ ,
\\[.5em]
\frac{d}{d\bar z_1} U_{m,n}\big(\dots,z_1,\dots,\phi_1,\dots \big)
\,=\,   U_{m,n}\big(\dots,z_1,\dots, \overline L_{-1} \phi_1,\dots \big) \ , 
\\[.5em]
\frac{d}{dx_1} U_{m,n}\big(x_1,\dots,\psi_1,\dots \big)
\,=\, U_{m,n}\big(x_1,\dots, L_{-1} \psi_1,\dots \big) \ ,
\end{array}
\ee
where $\frac{d}{dz_1}$ and $\frac{d}{d\bar z_1}$ are complex derivatives, and $\frac{d}{dx_1}$ is a real derivative.
\end{itemize}

Let $f$ be a rational function on $\Cb \cup \{\infty\}$
which has poles at most in the set $\{ x_1, \dots , x_m$, $z_1,\dots,z_n$, $\bar z_1,\dots, \bar z_n \}$ and $\infty$, and which satisfies the growth condition $\lim_{\zeta \to \infty} \zeta^{-3} f(\zeta) = 0$. The expansion around each of these points is
\be
  f(\zeta) 
  = 
  \sum_{p=-\infty}^\infty f^{k}_p \cdot (\zeta - x_k)^{p+1}
  = 
  \sum_{p=-\infty}^\infty g^{+,k}_p \cdot (\zeta - z_l)^{p+1}
  = 
  \sum_{p=-\infty}^\infty g^{-,k}_p \cdot (\zeta - \bar z_l)^{p+1} \ .
\ee

\begin{itemize} 
\item[(B5)] 
For all $f$ as above,
\be \label{eq:uhp-mode-moving}
\begin{array}{ll}
\displaystyle
  \sum_{p=-\infty}^\infty  \Big\{
  \sum_{k=1}^m f^{k}_p
  U_{m,n}(\dots,\psi_1,\dots,\,L_p \psi_k\,,\dots,\psi_m,\phi_1,\dots,\phi_n) 
\\[.5em]
\displaystyle
  \hspace{4em} +
  \sum_{l=1}^n
  U_{m,n}(\dots,\psi_1,\dots,\psi_m,\phi_1,\dots,\,(g^{+,l}_p L_p +  g^{-,l}_p  \overline L_p ) \phi_l\,,\dots,\phi_n) 
  \Big\} ~=~ 0 \ .
\end{array}
\ee
As in (C5), only a finite number of summands in the sum over $p$ are non-zero. There is a corresponding condition with $L_p$ and $\overline L_p$ exchanged in the sum over bulk insertions.
\end{itemize}

The complicated looking set of conditions (B5) is obtained following the original argument in \cite{Cardy:1984bb}: The fact that the boundary condition preserves conformal symmetry means that the correlator on the UHP satisfies the same conditions as the `holomorphic part' of a bulk correlator with insertions at $\{ x_1,\dots,x_m, z_1,\dots,z_n , \bar z_1,\dots, \bar z_n \}$. In other words, an insertion at $z$ in the upper half plane is duplicated to an insertion at $z$ and $\bar z$. This prescription arises again from contour integration, as noted in the following remark.

\begin{remark} \label{rem:bnd-stress}
As in remark \ref{rem:bulk-stress}, one can replace (B5) by the stronger requirement that there should exist a stress tensor, that is, a field $T^\mathrm{bnd} \in B^{(2)}$ such that $m_x(T \otimes \psi) = \sum_{m=-\infty}^\infty x^{-m-2} \, L_m \psi$. The CFT on the complex plane $(F,M,\Omega^*)$ is then equally required to be equipped with a stress tensor $T,\overline T \in F^{(2)}$. The statement `the boundary condition respects conformal symmetry' means that the two components of the stress tensor in the bulk agree with the stress tensor on the boundary in the sense that
\be
\lim_{y \to 0} \langle T(x{+}iy) \cdots \rangle = \langle T^{\mathrm{bnd}}(x) \cdots \rangle = \lim_{y \to 0} \langle \overline T(x{+}iy) \cdots \rangle \ .
\ee
holds in all correlators. Define the meromorphic function $u(\zeta)$ on the complex plane as follows:
\be
  u(\zeta) = \begin{cases}
  \big\langle T(\zeta) \psi_1(x_1) \cdots \phi_1(z_1) \cdots \big\rangle & ; ~\mathrm{Im}(\zeta) \ge 0
  \\
  \big\langle \overline T(\bar \zeta) \psi_1(x_1) \cdots \phi_1(z_1) \cdots  \big\rangle & ; ~\mathrm{Im}(\zeta) < 0
  \end{cases}
\ee
The conditions \eqref{eq:uhp-mode-moving} arise from the contour integral $\frac{1}{2 \pi i} \oint  f(\zeta) \, u(\zeta)\, d\zeta = 0$, where the contour is a big circle enclosing $\{ x_1,\dots,x_m, z_1,\dots,z_n , \bar z_1,\dots, \bar z_n \}$. Then the contour is deformed to a union of small circles around each of the $x_i$, $z_i$, $\bar z_i$ and the OPEs of the stress tensor are substituted.
\end{remark}

\section{Further details on the algebraic reformulation}

\subsection{Proof of theorem \ref{eq:PF-means-fd-algebra}} \label{app:more-deligne}

\begin{proof}[Proof of theorem \ref{eq:PF-means-fd-algebra}]
If $\Ac \cong \Rep_{f.d.}(A)$, condition (PF) follows by taking $P=A$, seen as a right module over itself. For the converse,
pick a projective generator $P$. We can choose $A := \Ac(P,P)$ and define the functor $H: \Ac \to \Rep_{f.g.}(A)$ on objects and morphisms by 
\be
  U \mapsto \Ac(P,U) \quad , \quad \big[ U \xrightarrow{f} V \big]  \mapsto \big[P \xrightarrow{(-)} U \xrightarrow{f} V\big] \ .
\ee
The right action of $a \in A$ is given by $f \mapsto f \circ a$. Note that $H(P)=A$. The functor $H$ is
\\[.3em]
\nxt faithful: there exists a surjection $P^{\oplus m} \to U$ for some $m$, and so, given $g : U \to V$, if $\big[P \xrightarrow{s} U \xrightarrow{g} V\big] = 0$ for all $s$, then also $g=0$. 
\\[.3em]
\nxt full: We need to show that every linear map $\varphi : \Ac(P,U) \to \Ac(P,V)$ such that $\varphi(f) \circ a = \varphi(f \circ a)$ for all $a \in A$, is of the form $\varphi(f) = \psi \circ f$ for some $\psi : U \to V$. Let $P^{\oplus k} \xrightarrow{K} P^{\oplus m} \xrightarrow{s} U$ the first two steps of a projective resolution (thus $s$ is surjective and the image of $K$ is the kernel of $s$). Let $s_1,\dots,s_m$ be the restriction of $s$ to each summand. The pullback property along surjections,
\be
\raisebox{2em}{\xymatrix{
& P \ar@{-->}[dl]_{\exists\,a} \ar[d]^f
\\
P^{\oplus m} \ar[r]^{s} & U
}}
\ee
shows that all $f$ can be written as $\sum_i s_i \circ a_i$ for some $a_i \in A$. Thus $s_1,\dots,s_m$ generates $\Ac(P,U)$ as an $A$-module. Next, consider the cokernel diagram
\be
\raisebox{2em}{\xymatrix{
P^{\oplus k}  \ar[r]^K 
& P^{\oplus m} \ar[r]^s \ar[d]_{\sum_i \varphi(s_i)}
& U \ar@{-->}[dl]^{\exists!\,\psi}
\\ 
& V
}}
\qquad .
\ee
The cokernel property can be applied because, denoting by $K_{ij} : P \to P$ the components of $K$,  $\sum_i \varphi(s_i) \circ K_{ij} = \sum_i \varphi(s_i \circ K_{ij}) = 0$ as $\sum_i s_i \circ K_{ij} = 0$ for all $j$. The diagram then shows that $\varphi(s_i) = \psi \circ s_i$ for some $\psi : U \to V$. Since the $s_i$ generate $\Ac(P,U)$, this fixes $\varphi$ uniquely. 
\\[.3em]
\nxt essentially surjective: Let $M$ be a finite-dimensional $A$-module. Let $A^{\oplus k} \to A^{\oplus m} \to M$ be the first two steps of a projective (in fact: free) resolution. In other words, $\Ac(P,P^{\oplus k}) \xrightarrow{\varphi} \Ac(P,P^{\oplus m}) \to M \to 0$ is exact for some $A$-module map $\varphi$. By fullness, there is a $\psi : P^{\oplus k} \to P^{\oplus m}$ such that $\varphi = \psi \circ (-)$. Since $P$ is projective, the functor $\Ac(P,-)$ is exact, and so the exact sequence $P^{\oplus k} \xrightarrow{\psi} P^{\oplus m} \xrightarrow{\cok(\psi)} U \to 0$ gets mapped to the exact sequence $\Ac(P,P^{\oplus k}) \xrightarrow{\psi \circ (-)} \Ac(P,P^{\oplus m}) \to \Ac(P,U) \to 0$. Thus $M \cong \Ac(P,U)$ for some $U$.
\end{proof}

\subsection{Idempotent absolutely simple objects are algebras}\label{app:idempot-alg}

\begin{lemma} \label{lem:SS=S-algebra}
Let $\Mc$ be a $k$-linear monoidal category and let $S \in \Mc$ be such that $S \otimes S \cong S$ and $\Cc(S,S) = k\cdot \id_S$. Pick an isomorphism $m : S \otimes S \to S$.
\\[.3em]
(i) The associator $\alpha_{S,S,S} : S\otimes(S\otimes S) \to (S\otimes S)\otimes S$ is $\alpha_{S,S,S} = (m^{-1} \otimes \id_S) \circ (\id_S \otimes m)$.
\\[.3em]
(ii) If $\Mc$ is in addition braided, the braiding on $S$ is $c_{S,S} = \id_{S \otimes S}$.
\\[.3em]
(iii) The pair $(S,m)$ is an associative algebra in $\Mc$. If $\Mc$ is braided, this algebra is commutative.
\\[.3em]
(iv) If $n : S \otimes S \to S$ is an isomorphism, then $(S,m)$ and $(S,n)$ are isomorphic as algebras.
\end{lemma}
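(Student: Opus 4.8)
\emph{Plan.} The structural fact underlying everything is that \emph{every} Hom-space between iterated tensor products of $S$ is one-dimensional. Since $m$ is an isomorphism, repeated application of $m$ and $\id_S$ shows that any bracketing of $S^{\otimes n}$ is isomorphic to $S$; hence for any two such objects $W,W'$ one has $\Mc(W,W')\cong\Mc(S,S)=k\cdot\id_S$. In particular any two isomorphisms between the same pair of such objects differ by a nonzero scalar, and in a $k$-linear category a scalar multiple of a nonzero morphism vanishes only for the zero scalar. I will use this repeatedly.

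For part (i) I would put $\beta:=(m^{-1}\otimes\id_S)\circ(\id_S\otimes m)$, which is an isomorphism $S\otimes(S\otimes S)\to(S\otimes S)\otimes S$ of the same type as $\alpha_{S,S,S}$, so that $\alpha_{S,S,S}=\lambda\,\beta$ for some $\lambda\in k^\times$. To force $\lambda=1$ I would substitute this into the pentagon axiom for the quadruple $(S,S,S,S)$, having first rewritten the associators that involve the object $S\otimes S$ — namely $\alpha_{S,S\otimes S,S}$, $\alpha_{S\otimes S,S,S}$ and $\alpha_{S,S,S\otimes S}$ — in terms of $\alpha_{S,S,S}$ and $m$, using naturality of $\alpha$ applied to $m:S\otimes S\to S$ in the appropriate tensor slot. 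After this rewriting the only genuine associator left is $\alpha_{S,S,S}=\lambda\,\beta$; counting its occurrences, one side of the pentagon carries three factors of $\lambda$ and the other two, so the identity takes the shape $\lambda^3 X=\lambda^2 Y$ where $X,Y$ are composites built solely from $m$, $m^{-1}$ and identities. A direct check (most transparent in string-diagram notation) gives $X=Y$; as these are nonzero and $\lambda\neq0$, this yields $\lambda=1$.

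Part (ii) is entirely analogous. Writing $c_{S,S}=\mu\,\id_{S\otimes S}$ with $\mu\in k^\times$, naturality of the braiding applied to $m$ gives $c_{S,S\otimes S}=\mu\,\beta$. Feeding this, together with part (i) (so that every associator reduces to an $m$-composite), into one of the hexagon axioms for $(S,S,S)$ produces an identity carrying one factor of $\mu$ on one side and two on the other; matching the remaining $m$-composites gives $\mu=\mu^2$, hence $\mu=1$. Parts (iii) and (iv) are then formal. Associativity follows from part (i), since $m\circ(m\otimes\id_S)\circ\alpha_{S,S,S}=m\circ(m\otimes\id_S)\circ(m^{-1}\otimes\id_S)\circ(\id_S\otimes m)=m\circ(\id_S\otimes m)$, which is precisely the associativity constraint; in the braided case commutativity is immediate from part (ii), $m\circ c_{S,S}=m\circ\id_{S\otimes S}=m$. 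For part (iv), one-dimensionality of $\Mc(S\otimes S,S)$ gives $n=c\,m$ with $c\in k^\times$; looking for an algebra isomorphism of the form $f=\nu\,\id_S$ (every endomorphism of $S$ is a scalar), the condition $f\circ m=n\circ(f\otimes f)$ becomes $\nu\,m=\nu^2 c\,m$, i.e.\ $\nu=c^{-1}$, so $f=c^{-1}\id_S$ is the desired algebra isomorphism.

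The main obstacle is the scalar bookkeeping in (i) and (ii): conceptually the pentagon (resp.\ hexagon) over-determines $\alpha_{S,S,S}$ (resp.\ $c_{S,S}$) on $S$, but extracting $\lambda=1$ (resp.\ $\mu=1$) rests on the slightly fiddly verification that the two purely $m$-built composites coincide, which is exactly where the coherence of $\Mc$ gets used.
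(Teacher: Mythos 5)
Your proof is correct and takes essentially the same route as the paper's (appendix B.2): you use one-dimensionality of the relevant Hom-spaces to write $\alpha_{S,S,S}=\lambda\,(m^{-1}\otimes\id_S)\circ(\id_S\otimes m)$ and $c_{S,S}=\mu\,\id_{S\otimes S}$, use naturality applied to the isomorphism $m$ to express the derived associators/braidings as $m$-composites, and extract $\lambda^2=\lambda^3$ from the pentagon at $(S,S,S,S)$ and $\mu=\mu^2$ from a hexagon, with (iii) and (iv) the same formal verifications. The only cosmetic differences are which mixed braiding you normalise ($c_{S,S\otimes S}$ rather than the paper's $c_{S\otimes S,S}$, i.e.\ the mirror hexagon) and the direction of the scalar isomorphism in (iv).
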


\begin{proof} We will omit `$\otimes$' between objects for better readability.
\\[.3em]
(i) The space $\Mc(S(S S),(S S) S)$ is isomorphic to $\Mc(S,S)$ and hence one-dimensional. Therefore, there has to exist a $\lambda \in k^\times$ such that
\be
  \alpha_{S,S,S} = \lambda \cdot 
  \Big[ S (S S) \xrightarrow{\id_S \otimes m} SS \xrightarrow{m^{-1} \otimes \id_S} (SS)S \Big]  \ .
\ee
Naturality of the associator implies
\be
  \Big[ U (V W) \xrightarrow{f \otimes (g \otimes h)} S (SS) \xrightarrow{\alpha_{S,S,S}} (SS)S \Big]
  =
  \Big[ U (V W) \xrightarrow{\alpha_{U,V,W}}  (U V)W \xrightarrow{(f \otimes g) \otimes h}  (SS)S \Big]
\ee
Applying this to $f = m$, $g=h=\id_S$, etc., allows one to solve for $\alpha$ with one entry being $S S$. The result is
\be
\begin{array}{l}
  \alpha_{SS,S,S} = \lambda \cdot (m^{-1} \otimes \id_S \otimes m) = \alpha_{S,S,SS} \ ,
  \\[.5em]
  \alpha_{S,SS,S} = \lambda \cdot \big\{ (\id_S \otimes m^{-1}) \circ m^{-1} \big\} \otimes \big\{ m \circ (m \otimes \id_S) \big\} \ .
\end{array}
\ee
The pentagon with all four objects set to $S$ reads
\be
\alpha_{SS,S,S} \circ \alpha_{S,S,SS}
=
(\alpha_{S,S,S} \otimes \id_S) \circ \alpha_{S,SS,S} \circ (\id_S \otimes \alpha_{S,S,S}) \ .
\ee
Substituting the expressions in terms of $\lambda$ and $m$ one quickly checks that the above identity simplifies to $\lambda^2 \cdot u = \lambda^3 \cdot u$, with $u = \big\{ (m^{-1} \otimes \id_S) \circ m^{-1} \big\} \otimes \big\{ m \circ (\id_S \otimes m) \big\} \neq 0$. Thus, $\lambda = 1$.
\\[.3em]
(ii) By assumption $\Mc(SS,SS)$ is one-dimensional, and hence there has to be an $\omega \in k^\times$ such that $c_{S,S} = \omega \cdot \id_{SS}$. By naturality,
\be
  \Big[ U V \xrightarrow{f \otimes g} S S \xrightarrow{c_{S,S}} S S \Big]
  =
  \Big[ U V \xrightarrow{c_{U,V}} V U \xrightarrow{g \otimes f} S S \Big] \ ,
\ee
and applying this to $f=m$ and $g=\id_S$ we can solve for $c_{SS,S}$. The result is $c_{SS,S} = \omega \cdot (m^{-1} \otimes \id_S) \circ (\id_S \otimes m)$. One of the two hexagons with all objects set to $S$ reads
\be
  \alpha_{S,S,S} \circ c_{SS,S} \circ \alpha_{S,S,S} = (c_{S,S} \otimes \id_S) \circ \alpha_{S,S,S} \circ (\id_S \otimes c_{S,S})
\ee
Substituting the expressions for $\alpha_{S,S,S}$ from (i) and $c_{S,S}$, $c_{SS,S}$ as above, this reduces to $\omega \cdot v = \omega^2 \cdot v$ with $v = (m^{-1} \otimes\id_S) \circ (\id_S \otimes m)$. Thus $\omega=1$.
\\[.3em]
(iii) Associativity is $m \circ (\id_S \otimes m) = m \circ (m \otimes \id_S) \circ \alpha_{S,S,S}$, which holds by (i), and commutativity is trivial as $c_{S,S} = \id_{SS}$ by (ii).
\\[.3em]
(iv) Since $\Mc(SS,S)$ is one-dimensional, we have $n = \lambda m$ for some $\lambda \in k^\times$. Take $f = \lambda \,\id_S$. Then $f \circ n = m \circ (f \otimes f)$.
\end{proof}

\subsection{Proof of theorem \ref{thm:fullcentre-alg-assoc+comm}} \label{app:fullcentre-alg-assoc+comm}

\begin{proof}[Proof of theorem \ref{thm:fullcentre-alg-assoc+comm}]
\textsl{Existence:}
Consider the composition (the left path in \eqref{eq:z-is-algebramap})
\be
  w := \Big[ T(Z \otimes_{\Cc^2} Z) \xrightarrow{T_{2;Z,Z}^{-1}} T(Z) \otimes_{\Cc} T(Z)
  \xrightarrow{z \otimes_{\Cc} z} A \otimes_{\Cc} A \xrightarrow{\mu_A} A \Big] \ .
\ee
We need to check that the pair $(Z \otimes_{\Cc^2} Z,w)$ satisfies condition \eqref{eq:full-cent-univprop}, i.e.\ that it is an object in $\Cc_{\text{full\,center}}(A)$. This amounts to commutativity of (brackets, associators and `$\otimes_{\Cc}$' are not written)
\be \label{eq:TZZ-in-Cent}
\raisebox{4em}{\xymatrix@C=3.5em{
T(Z \otimes_{\Cc^2} Z) \, A \ar[dd]^{\varphi_{Z \otimes Z,A}} \ar[r]^{T_2^{-1} \otimes \id_A}
& T(Z) \, T(Z) \, A \ar[r]^-{z \otimes z \otimes \id_A}  \ar[d]^{\id_{T(Z)} \otimes \varphi_{Z,A}}
& A \, A \, A \ar[r]^{\mu_A \otimes \id_A}
& A \, A \ar[dr]^{\mu_A}
\\
& T(Z) \, A \, T(Z) \ar[d]^{\varphi_{Z,A} \otimes  \id_{T(Z)} } &&& A
\\
A \, T(Z \otimes_{\Cc^2} Z) \ar[r]^{\id_A \otimes T_2^{-1}}
& A \,T(Z) \, T(Z) \ar[r]^-{\id_A \otimes z \otimes z} 
& A \, A \, A \ar[r]^{\id_A \otimes \mu_A}
& A \, A \ar[ur]_{\mu_A}
}}
\quad .
\ee
The left subdiagram is just \eqref{eq:phi-properties}, while the details for the right subdiagram are obtained by copying out the corresponding diagram in the proof of \cite[Prop.\,4.1]{Davydov:2009} in the present setting; we omit the details. 

By the universal property of $(Z,z)$, there exists a unique morphism $Z \otimes_{\Cc^2} Z \to Z$ such that \eqref{eq:full-cent-pairmap} commutes. We define this morphism to be $\mu_Z$. 
\\[.3em]
\textsl{Commutativity:} 
We will show below that $c_{Z,Z}$ is an arrow from $(Z \otimes_{\Cc^2} Z,w)$ to itself in $\Cc_{\text{full\,center}}(A)$. This provides us with two arrows from $(Z \otimes_{\Cc^2} Z,w)$ to $(Z,z)$ in $\Cc_{\text{full\,center}}(A)$, namely $\mu_Z$ and $\mu_Z \circ c_{Z,Z}$. By uniqueness, they have to be equal, establishing commutativity.
\\
That $c_{Z,Z}$ is an endomorphism of $(Z \otimes_{\Cc^2} Z,w)$ amounts to commutativity of the diagram
\be
\raisebox{2.4em}{\xymatrix@R=.7em@C=2.5em{
  T(Z \otimes_{\Cc^2} Z)  \ar[dd]_{T(c_{Z,Z})} \ar[r]^{T_2^{-1}}
  & T(Z) \otimes_{\Cc} T(Z) \ar[r]^-{\id_{T(Z)} \otimes z} \ar[dd]_{\hat\varphi_{Z,Z} = \varphi_{Z,T(Z)}}
  & T(Z) \otimes_{\Cc} A \ar[r]^-{z \otimes \id_A} \ar[dd]_{\varphi_{Z,A}}
  & A \otimes A \ar[dr]^-{\mu_A}
\\
&&&&A
\\
  T(Z \otimes_{\Cc^2} Z)  \ar[r]^{T_2^{-1}}
  & T(Z) \otimes_{\Cc} T(Z) \ar[r]^-{z \otimes \id_{T(Z)}}
  & A \otimes_{\Cc} T(Z) \ar[r]^-{\id_A \otimes z}
  & A \otimes A\ar[ur]_-{\mu_A}
}}
\quad .
\ee
Starting from the left, the first square commutes by definition \eqref{eq:varphi-hat-def} of $\hat \varphi_{Z,Z}$. By lemma \ref{lem:phi-hatphi-rel}, this is equal to $\varphi_{Z,T(Z)}$. The second square is then just naturality of $\varphi_{Z,T(Z)}$. The third square is property \eqref{eq:full-cent-univprop} for $z$.
\\[.3em]
\textsl{Associativity:} In the proof of associativity, we will not write out tensor product symbols and brackets between objects, and we omit all associators. We will show the equality of the two maps $a = \mu_Z \circ (\mu_Z \otimes \id_Z)$ and $b = \mu_Z \circ (\id_Z \otimes \mu_Z)$ from $ZZZ$ to $Z$ via the terminal object property. Define the map
\be
   y := \Big[ T(Z\,Z\,Z) \xrightarrow{~\sim~} T(Z)\,T(Z)\,T(Z) \xrightarrow{z \otimes z \otimes z}  A\,A\,A \xrightarrow{~\text{mult.}~} A \Big] \ ,
\ee
where the first isomorphism is constructed from $T_2$ and associators, and `mult.' stands for any order of multiplying the three factors via $\mu_A$. That $y \in \Cent(ZZZ,A)$ is checked by an analogous argument as that giving commutativity of \eqref{eq:TZZ-in-Cent}. We now need to verify that $a$ and $b$ are maps from $(ZZZ,y)$ to $(Z,z)$. This will imply $a=b$ and hence associativity of $\mu_Z$. That $T(a) : T(ZZZ) \to T(Z)$ makes \eqref{eq:full-cent-univprop} commute amounts to commutativity of 
\be
\raisebox{3.5em}{\xymatrix{
T(Z\,Z\,Z) \ar[rr]^{T(\mu_Z \otimes \id_Z)} \ar[d]^\sim
&& T(Z\,Z) \ar[r]^{T(\mu_Z)} \ar[d]^\sim
& T(Z) \ar@{=}[d]
\\
T(Z)\,T(Z)\,T(Z) \ar[rr]^-{\mu_{TZ} \otimes \id_{TZ}} \ar[d]^{z \otimes z \otimes z}
&& T(Z)\,T(Z) \ar[r]^-{\mu_{TZ}} \ar[d]^{z \otimes z}
& T(Z) \ar[d]^z
\\
A\,A\,A \ar[rr]^{\mu_A \otimes \id_A}
&& A\,A \ar[r]^{\mu_A}
& A
}}
\quad \raisebox{-3.5em}{.}
\ee
The top two squares commute by definition of $\mu_{TZ}$ in \eqref{eq:F(A)-product}, the bottom two squares commute because $z$ is an algebra map (since it satisfies \eqref{eq:z-is-algebramap}). The argument for $T(b)$ is similar.
\\[.3em]
\textsl{Unitality:} The construction of the unit for $Z$ rests on the observation that
\be \label{eq:varphi_1U}
  \varphi_{\one,U} = \Big[ T(\one) \, U \xrightarrow{T_0^{-1} \otimes \id_U} \one \, U \xrightarrow{\lambda_U} U \xrightarrow{\rho_U^{-1}} U \, \one \xrightarrow{\id_U \otimes T_0} U\,T(\one) \Big] \ ,
\ee
which can be checked directly from \eqref{eq:mixed-braid-tildephi}. Define the map
\be
  u := \Big[ T(\one) \xrightarrow{T_0^{-1}} \one \xrightarrow{\iota_A} A \Big] \ .
\ee
To see that $u \in \Cent(\one,A)$, we need to establish commutativity of
\be
\raisebox{3em}{\xymatrix@R=1.5em{
T(\one)\,A \ar[rr]^{T_0^{-1}\otimes\id_A} \ar[dd]_{\varphi_{\one,A}}
&& \one\,A \ar[r]^{\iota_A \otimes \id_A} \ar[dr]_{\lambda_A}
& A \, A \ar[d]^{\mu_A}
\\
&&& A 
\\
A\,T(\one) \ar[rr]_{\id_A \otimes T_0^{-1}}
&& A \, \one \ar[r]_{\id_A \otimes \iota_A} \ar[ur]^{\rho_A}
& A\,A \ar[u]_{\mu_A}
}}
\qquad .
\ee
The pentagon is \eqref{eq:varphi_1U} and the remaining triangles amount to the unit property of $\iota_A$. Thus there exists a unique $\iota_Z : \one \to Z$ such that $u = z \circ T(\iota_Z)$. The unit property of $\iota_Z$ follows by verifying that $\mu \circ (\id_Z \otimes \iota_Z) \circ \rho_Z^{-1}$, $\mu \circ (\iota_Z \otimes \id_Z) \circ \lambda_Z^{-1}$ and $\id_Z$ are morphisms $Z \to Z$ in the category $\Cc_{\text{full\,center}}(A)$ and hence are all equal. We refer to \cite[Prop.\,4.1]{Davydov:2009} for details.
\end{proof}

\subsection{Proofs for theorems \ref{thm:R1*-via-projectives} and \ref{thm:trivial-twist}} \label{app:more-on-R1}

The proof of theorem \ref{thm:R1*-via-projectives} requires three lemmas. The first one gives an alternative characterisation of a representing object.

\begin{lemma} \label{lem:repres+terminal}
Let $U \in \Cc$, $R' \in \CoC$ and $r' : T(R') \to U$. The following are equivalent:
\begin{itemize}
\item[(i)] The object $R'$ represents the functor $\Cc(T(-),U)$ such that the natural isomorphism $\Cc(T(-),U) \to \Cc^2(-,R')$ maps $r'$ to $\id_{R'}$.
\item[(ii)] The pair $(R',r')$ satisfies the following universal property: For all pairs $(X,x)$ with $X \in \CoC$ and $x : T(X) \to U$, there exists a unique morphism $\tilde x : X \to R'$ such that the diagram
\be \label{eq:R'-terminal-ob-prop}
\raisebox{2em}{\xymatrix@C=1em@R=1em{
T(X) \ar[rr]^{T(\tilde x)} \ar[dr]_{x} && T(R') \ar[dl]^{r'}
\\
& U
}}
\ee
commutes.
\end{itemize}
\end{lemma}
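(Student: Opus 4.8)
The plan is to recognise this as the standard Yoneda-type correspondence between a representing object equipped with its distinguished universal element and a universal property phrased through that element. Throughout I would write $\chi_X : \Cc(T(X),U) \to \Cc^2(X,R')$ for the components of the natural isomorphism in (i), and I would use two structural facts: first, that $T$ is $k$-linear, so $T(af+bg) = aT(f)+bT(g)$; second, that $\Cc(T(-),U)$ and $\Cc^2(-,R')$ are \emph{contravariant} functors into $\vect$, with a morphism $g : X \to Y$ acting by precomposition, $f \mapsto f \circ T(g)$ on the left and $h \mapsto h \circ g$ on the right. The whole argument is formal, so the plan is just to feed the right elements into the naturality squares.

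For (i)$\Rightarrow$(ii), given $x : T(X) \to U$ I would set $\tilde x := \chi_X(x)$ and verify that the triangle \eqref{eq:R'-terminal-ob-prop} commutes. The key step is to apply naturality of $\chi$ along the morphism $\tilde x : X \to R'$ to the element $r' \in \Cc(T(R'),U)$: this yields $\chi_X(r' \circ T(\tilde x)) = \chi_{R'}(r') \circ \tilde x = \id_{R'} \circ \tilde x = \tilde x = \chi_X(x)$, whence injectivity of $\chi_X$ gives $r' \circ T(\tilde x) = x$. Uniqueness follows by running the same identity in reverse: any $\tilde x'$ with $r' \circ T(\tilde x') = x$ must satisfy $\tilde x' = \chi_X(r' \circ T(\tilde x')) = \chi_X(x) = \tilde x$.

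For (ii)$\Rightarrow$(i), I would define $\chi_X(x) := \tilde x$, the unique morphism supplied by the universal property, and then check in turn that $\chi_X$ is linear, bijective, and natural, and finally that $\chi_{R'}(r') = \id_{R'}$. Linearity uses $k$-linearity of $T$ together with uniqueness: $a\tilde{x_1} + b\tilde{x_2}$ solves the defining equation for $\widetilde{ax_1+bx_2}$, so the two coincide. Injectivity is immediate from $x = r' \circ T(\tilde x)$; surjectivity follows by feeding $x := r' \circ T(h)$ into the universal property and using uniqueness to force $\tilde x = h$. Naturality is the computation $r' \circ T(\chi_Y(f) \circ g) = f \circ T(g)$ followed by uniqueness, giving $\chi_X(f \circ T(g)) = \chi_Y(f)\circ g$. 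The normalisation $\chi_{R'}(r') = \id_{R'}$ holds because $\id_{R'}$ already solves $r' \circ T(\id_{R'}) = r'$.

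I do not expect a genuine obstacle here, as the content is entirely formal. The only points that require a little care are respecting the contravariance, so that naturality reads $\chi_X(f \circ T(g)) = \chi_Y(f)\circ g$ and not the opposite, and verifying that the bijection constructed in (ii)$\Rightarrow$(i) is actually $k$-linear; this last point is the one place where the additivity of $T$ is genuinely used, and it is needed precisely because the functors here take values in $\vect$ rather than in $\mathrm{Set}$.
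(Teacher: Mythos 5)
Your proposal is correct and follows essentially the same route as the paper's proof: both directions are the standard Yoneda-type argument, with $\tilde x := \chi_X(x)$ and naturality of $\chi$ applied to the element $r'$ in one direction, and uniqueness used to establish well-definedness, bijectivity, naturality and the normalisation $\chi_{R'}(r')=\id_{R'}$ in the other. If anything, your version is marginally more careful than the paper's: you prove injectivity directly from $x = r'\circ T(\tilde x)$ and verify $k$-linearity of $\chi_X$ explicitly, whereas the paper's injectivity step ($\chi_X(x)=0 \Rightarrow x = r'\circ T(0) = 0$) tacitly presupposes that linearity.
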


\begin{proof}
(i)$\,\Rightarrow\,$(ii): Denote the natural isomorphism by $\chi_- : \Cc(T(-),U) \to \Cc^2(-,R')$. Naturality amounts to the following two equivalent identities, for all $f : X \to Y$, $y : T(Y) \to U$, and for $b = \chi_Y(y)$,
\be \label{eq:repres+terminal-aux1}
  \chi_X(y \circ T(f)) = \chi_Y(y) \circ f 
  \quad , \quad
  \chi_Y^{-1}(b) \circ T(f) = \chi^{-1}_X(b \circ f)  \ .
\ee
Suppose we are given $(X,x)$. We need to show existence and uniqueness of $\tilde x$.
\\[.3em]
\textsl{Existence:} Choose $\tilde x = \chi_X(x)$. Commutativity of \eqref{eq:R'-terminal-ob-prop} follows since $r' \circ T(\tilde x) = \chi^{-1}_{R'}(\id_{R'}) \circ T(\chi_X(x)) = \chi_{X}^{-1}(\id_{R'} \circ \chi_X(x)) = x$.
\\[.3em]
\textsl{Uniqueness:} Suppose \eqref{eq:R'-terminal-ob-prop} holds for some $a : X \to R'$ in place of $\tilde x$, i.e.\ $r' \circ T(a) = x$. 
By naturality, $r' \circ T(a) = \chi^{-1}_{R'}(\id_{R'}) \circ T(a) = \chi_X^{-1}(a)$. Thus $\chi_X^{-1}(a) = x$, which is equivalent to $a = \chi_X(x)$.
\\[.3em]
(ii)$\,\Rightarrow\,$(i): Given $x : T(X) \to U$, we define the map $\chi_X : \Cc(T(X),U) \to \Cc^2(X,R')$ to be $\chi_X(x) = \tilde x$. By uniqueness of $\tilde x$, this is well-defined. Since for $(X,x) = (R',r')$ we can choose $\tilde x = \id_{R'}$, the collection of maps $\chi_-$ satisfies $\chi_{R'}(r') = \id_{R'}$, as required. It remains to see that $\chi_X$ is a bijection for each $X$ and that it is natural in $X$.
\\[.3em]
\textsl{Naturality:} 
We will check the first identity in \eqref{eq:repres+terminal-aux1}. By uniqueness of $\tilde x$ in \eqref{eq:R'-terminal-ob-prop} it is enough to check that also $\chi_Y(y) \circ f$ provides an arrow from $(X,y \circ T(f))$ to $(R',r')$, i.e.\ that the diagram
\be
\raisebox{2.5em}{\xymatrix@R=1em@C=1.2em{
T(X) \ar[dr]_{T(f)} \ar[rr]^{T(f)} && T(Y) \ar[rr]^{T(\chi_Y(y))} \ar@{=}[dl] && T(R') \ar[ddll]^{r'} 
\\
& T(Y) \ar[dr]_{y}
\\
&& U
}}
\ee
commutes, which it does by definition of $\chi_Y$.
\\[.3em]
\textsl{Surjectivity:} Given $a : X \to R'$, by naturality and $\chi_{R'}(r') = \id_{R'}$ one has $\chi_X(r' \circ T(a)) = \chi_X(r') \circ a = a$.
\\[.3em]
\textsl{Injectivity:} Suppose $\chi_X(x) = 0$. Then by definition also $x = r' \circ T(0) = 0$.
\end{proof}

The second lemma allows one to rewrite any pairing in terms of the canonical non-degenerate pairings defined in \eqref{eq:def-beta-pairing}.

\begin{lemma} \label{lem:conjugate-prop}
Let $p : U \otimes V \to \one^*$. 
\\[.3em]
(i) There exist unique maps $f : U \to V^*$ and $g : V \to U^*$ such that $p = \beta_{U} \circ (\id_U \otimes g)$ and $p = \beta_{V^*} \circ (f \otimes \delta_V)$.
\\[.3em]
(ii) For all $h : U \to V$ we have $\beta_V \circ (h \otimes \id_{V^*}) = \beta_U \circ (\id_U \otimes h^*)$.
\end{lemma}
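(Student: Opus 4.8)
The plan is to reduce everything to the naturality of the family $\pi_{U,V}$ from Condition (C), together with the compatibility $(\delta_V)^* = (\delta_{V^*})^{-1}$. First I would record the two naturality identities for $\pi$. Since $(-)^*$ is contravariant, $\pi$ is natural in each slot in the contravariant sense: for $a : U \to U'$ and $\phi : U' \to V^*$ one has $\pi_{U,V}(\phi \circ a) = \pi_{U',V}(\phi)\circ (a \otimes \id_V)$, and for $g : V \to W$ and $\psi : U \to W^*$ one has $\pi_{U,V}(g^* \circ \psi) = \pi_{U,W}(\psi) \circ (\id_U \otimes g)$. Together with $\beta_U = \pi_{U,U^*}(\delta_U)$, these identities are the only tools needed.

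For part (i), the key computation is to rewrite each candidate expression as a value of $\pi$. Applying the second naturality identity with $\psi = \delta_U$ gives $\beta_U \circ (\id_U \otimes g) = \pi_{U,V}(g^* \circ \delta_U)$ for any $g : V \to U^*$. Since $\pi_{U,V}$ is a bijection, $p = \beta_U\circ(\id_U\otimes g)$ is equivalent to $g^*\circ\delta_U = \pi_{U,V}^{-1}(p)$; existence and uniqueness of $g$ then follow once I observe that $g \mapsto g^*\circ\delta_U$ is a bijection $\Cc(V,U^*)\to\Cc(U,V^*)$, because $(-)^*$ is an equivalence (being involutive, with inverse supplied by $\delta$) and $\delta_U$ is an isomorphism. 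For the assertion about $f$, I would combine both naturality identities: first in the first slot with $a=f$ and $\phi=\delta_{V^*}$ to get $\beta_{V^*}\circ(f\otimes\id_{V^{**}}) = \pi_{U,V^{**}}(\delta_{V^*}\circ f)$, then in the second slot with $g=\delta_V$ to get $\beta_{V^*}\circ(f\otimes\delta_V) = \pi_{U,V}(\delta_V^* \circ \delta_{V^*}\circ f)$. At this point the compatibility condition collapses $\delta_V^*\circ\delta_{V^*}$ to $\id_{V^*}$, leaving $\beta_{V^*}\circ(f\otimes\delta_V) = \pi_{U,V}(f)$; hence $f = \pi_{U,V}^{-1}(p)$ is the unique solution (and one notes the consistency relation $g^*\circ\delta_U = f$).

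For part (ii), the same rewriting turns both sides into values of $\pi_{U,V^*}$. The first naturality identity with $a = h$, $\phi = \delta_V$ gives $\beta_V\circ(h\otimes\id_{V^*}) = \pi_{U,V^*}(\delta_V\circ h)$, while the second with $g = h^*$, $\psi = \delta_U$ gives $\beta_U\circ(\id_U\otimes h^*) = \pi_{U,V^*}(h^{**}\circ\delta_U)$. Because $\pi_{U,V^*}$ is injective, the desired equality reduces to $\delta_V\circ h = h^{**}\circ\delta_U$, which is precisely naturality of $\delta$.

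I expect the only real difficulty to be bookkeeping: tracking the contravariant variance of $\pi$ in each slot and correctly identifying the double duals, such as $(V^{**})^* = V^{***}$, so that the instances of $\delta$ land in the right places. The one genuinely structural input, as opposed to formal manipulation, is the compatibility $(\delta_V)^*=(\delta_{V^*})^{-1}$: this is exactly what makes the $f$-formula in (i) come out clean, since without it one would be left with $\beta_{V^*}\circ(f\otimes\delta_V) = \pi_{U,V}(\delta_V^*\circ\delta_{V^*}\circ f)$ carrying an unwanted automorphism of $V^*$.
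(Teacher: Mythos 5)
Your proof is correct and follows essentially the same route as the paper's: both arguments reduce everything to the two naturality identities for $\pi$, naturality of $\delta$, and the compatibility $(\delta_V)^* = (\delta_{V^*})^{-1}$ from condition (C), with your computations for $f$ in (i) and for (ii) matching the paper's almost line by line. The only (harmless) difference is in the uniqueness step of (i): you derive it directly from bijectivity of $\pi_{U,V}$ together with full faithfulness of $(-)^*$ (so that $g \mapsto g^* \circ \delta_U$ is a bijection), whereas the paper invokes non-degeneracy of the pairings $\beta_U$ and $\beta_{V^*}$, which rests on a cited lemma -- so your version is, if anything, slightly more self-contained.
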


\begin{proof}
Both parts follow from naturality of $\delta$ and $\pi$ in condition (C). The latter amounts to the statement that for all $a :  X \to U$, $b : Y \to V$ and $q : U \to V^*$,
\be \label{eq:pi-natural-aux}
  \pi_{X,V}(q \circ a) = \pi_{U,V}(q) \circ (a \otimes \id_V)
  \quad , \quad
  \pi_{U,Y}(b^* \circ q) = \pi_{U,V}(q) \circ (\id_U \otimes b) \ .
\ee
For part (ii) we compute 
$\pi_{V,V^*}(\delta_V)\circ (h \otimes \id_{V^*})
= \pi_{U,V^*}(\delta_V \circ h)
= \pi_{U,V^*}(h^{**} \circ \delta_U)
= \pi_{U,U^*}(\delta_U) \circ (\id_U \otimes h^*)$. For part (i) set $f = \pi_{U,V}^{-1}(p)$ and $g = f^* \circ \delta_V$. Then
\be
\begin{array}{ll}
\beta_{U} \circ (\id_U \otimes g)
&= \pi_{U,U^*}(\delta_U) \circ (\id_U \otimes (f^* \circ \delta_V))
\overset{(1)}{=} \pi_{U,V}((f^* \circ \delta_V)^* \circ \delta_U) 
\\
&= \pi_{U,V}((\delta_V)^* \circ f^{**} \circ \delta_U) 
\overset{(2)}{=} \pi_{U,V}((\delta_V)^* \circ \delta_{V^*} \circ f) 
\\
&\overset{(3)}{=} \pi_{U,V}(f) = p \ ,
\end{array}
\ee
where (1) is naturality of $\pi$, (2) is naturality of $\delta$ and (3) is $(\delta_V)^* = (\delta_{V^*})^{-1} $, which is required by condition (C). The identity $\beta_{V^*} \circ (f \otimes \delta_V) = p$ is checked along the same lines (use naturality to move $f$ and $\delta$ inside $\pi$). Uniqueness of $f$ and $g$ is implied by non-degeneracy of $\beta_U$ and $\beta_{V^*}$, see the text below definition \ref{def:non-deg-pair}.
\end{proof}

For the third lemma, recall the space $N$, the basis $\{ u_1, \dots ,u_{|N|} \}$ of $N$ and the map $n = \sum_i u_i \circ \pi_i$ defined in section \ref{sec:computeR1*}. 

\begin{lemma} \label{lem:factor-through-n}
For any $f : (P \boxtimes P)^{\oplus m} \to P \boxtimes P^*$ such that $\beta_P \circ T(f) = 0$, there exists a (typically non-unique) $\varphi : (P \boxtimes P)^{\oplus m} \to (P \boxtimes P)^{\oplus |N|}$ such that
\be
\raisebox{2em}{\xymatrix{
(P \boxtimes P)^{\oplus m} \ar@{-->}[dr]_{\exists \varphi} \ar[rr]^f && P \boxtimes P^* 
\\
& (P \boxtimes P)^{\oplus |N|} \ar[ur]_n
}}
\ee
commutes.
\end{lemma}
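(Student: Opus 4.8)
The plan is to reduce the claim to a piece of linear algebra over $k$, exploiting that $\{u_1,\dots,u_{|N|}\}$ is a basis of $N$ and that $T$ is a functor. Write $\iota_j : P\boxtimes P \to (P\boxtimes P)^{\oplus m}$ and $\pi_j$ for the inclusion and projection of the $j$-th summand, so that $f=\sum_{j=1}^m (f\circ\iota_j)\circ\pi_j$. Set $f_j := f\circ\iota_j : P\boxtimes P\to P\boxtimes P^*$.

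The first --- and essentially only --- step that requires an observation is that every component $f_j$ already lies in $N$. By functoriality of $T$ one has $T(f_j)=T(f)\circ T(\iota_j)$, and precomposing the hypothesis $\beta_P\circ T(f)=0$ with $T(\iota_j)$ yields $\beta_P\circ T(f_j)=\beta_P\circ T(f)\circ T(\iota_j)=0$. Since $f_j$ is a morphism $P\boxtimes P\to P\boxtimes P^*$, this is exactly the defining condition \eqref{eq:N-def-cond} for membership in $N$. Thus each $f_j\in N$, and because the $u_i$ form a basis of $N$ we may write $f_j=\sum_{i=1}^{|N|}c_{ij}\,u_i$ for unique scalars $c_{ij}\in k$.

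Next I would build $\varphi$ from the coefficient matrix $(c_{ij})$. Let $\iota'_i$ denote the inclusions of $(P\boxtimes P)^{\oplus|N|}$ and define $\varphi:(P\boxtimes P)^{\oplus m}\to(P\boxtimes P)^{\oplus|N|}$ by $\varphi\circ\iota_j=\sum_{i=1}^{|N|} c_{ij}\,\iota'_i$; this makes sense because $\Cc^2(P\boxtimes P,P\boxtimes P)$ is a $k$-vector space, so the scalar morphisms $c_{ij}\cdot\id_{P\boxtimes P}$ exist. To conclude I would verify $n\circ\varphi=f$ summand by summand: using $n\circ\iota'_i=u_i$ (immediate from $n=\sum_k u_k\circ\pi_k$), one computes $(n\circ\varphi)\circ\iota_j = n\circ\big(\sum_i c_{ij}\,\iota'_i\big)=\sum_i c_{ij}\,u_i=f_j=f\circ\iota_j$ for every $j$, whence $n\circ\varphi=f$. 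There is no genuine obstacle in this argument; the only point meriting care is the descent of the degeneracy condition to each summand via functoriality of $T$ in the first step. The construction also explains the parenthetical non-uniqueness, since adding to $\varphi$ any morphism landing in $\ker(n)$ leaves $n\circ\varphi$ unchanged, and $\ker(n)$ is typically nonzero.
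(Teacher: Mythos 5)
Your proposal is correct and follows essentially the same route as the paper's proof: decompose $f$ into its components $f_j = f\circ\iota_j$, observe each $f_j$ lies in $N$, expand $f_j = \sum_i c_{ij}\,u_i$ in the chosen basis, and assemble $\varphi = \sum_{i,j} c_{ij}\,\iota'_i\circ\pi_j$ from the coefficient matrix, checking $n\circ\varphi=f$ componentwise. The only cosmetic difference is that you spell out, via functoriality of $T$, why $\beta_P\circ T(f_j)=0$ descends to each summand, a step the paper compresses into ``by assumption, $f_j\in N$''.
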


\begin{proof}
Denote by
\be
\raisebox{.35em}{\xymatrix{
(P \boxtimes P)^{\oplus m} \ar@<1ex>[r]^-{p_i} & P \boxtimes P \ar@<1ex>[l]^-{e_i}
}}
\quad , \quad
\raisebox{.35em}{\xymatrix{
(P \boxtimes P)^{\oplus |N|} \ar@<1ex>[r]^-{\pi_i} & P \boxtimes P \ar@<1ex>[l]^-{\iota_i}
}}
\ee
the embedding and projection maps of the two direct sums. Let $f_j = f \circ e_j$. By assumption, $f_j \in N$. Thus we can write $f_j = \sum_{i=1}^{|N|} A_{ij} \, u_i$ for some $A_{ij} \in k$. Define
\be
  \varphi = \sum_{i=1}^{|N|} \sum_{j=1}^m A_{ij} \cdot \iota_i \circ p_j ~:~ (P \boxtimes P)^{\oplus m} \to (P \boxtimes P)^{\oplus |N|}  \ .
\ee
Then indeed $n \circ \varphi = \sum_{i,j,k} A_{ij} \,u_k \circ \pi_k \circ \iota_i \circ p_j = \sum_{i,j} A_{ij} \,u_i \circ p_j = \sum_j f_j \circ p_j = f$.
\end{proof}

\begin{proof}[Proof of theorem \ref{thm:R1*-via-projectives}]
We will show that $R'$ satisfies condition (ii) in lemma \ref{lem:repres+terminal} (with $U=\one^*$). Namely, suppose we are given a pair $(X,x)$ with $X \in \CoC$ and $x : T(X) \to \one^*$. We need to show that there exists a unique $\tilde x : X \to R'$ such that $x = r' \circ T(\tilde x)$.

\medskip\noindent
\nxt \textsl{Existence:} Let
\be
  (P \boxtimes P)^{\oplus k} \xrightarrow{~K~} (P \boxtimes P)^{\oplus m} \xrightarrow{~\cok(K)~} X
\ee
be the first two steps of a projective resolution of $X$. That is, we have a surjection $s : (P \boxtimes P)^{\oplus m} \to X$ whose kernel is the image of $K : (P \boxtimes P)^{\oplus k} \to (P \boxtimes P)^{\oplus m}$ (and hence $s = \cok(K)$). Define
\be
  p = \Big[ T((P \boxtimes P)^{\oplus m}) \xrightarrow{~T(\cok(K))~} TX \xrightarrow{~x~} \one^* \Big] \ .
\ee
Let $\pi_i : (P \boxtimes P)^{\oplus m} \to P \boxtimes P$ be the projection to the $i$'th summand. Then $T(\pi_i) : T((P \boxtimes P)^{\oplus m})  \to P \otimes_{\Cc} P$ and if we can define $p_i$ via
\be
  p = \sum_i \Big[ T((P \boxtimes P)^{\oplus m}) \xrightarrow{~T(\pi_i)~} P \otimes_{\Cc} P \xrightarrow{~p_i~} \one^* \Big] \ .
\ee
By lemma \ref{lem:conjugate-prop}\,(i), there exists a $q_i : P \to P^*$ such that $p_i  = \big[ P \otimes_{\Cc} P \xrightarrow{ \id \otimes q_i } P \otimes_{\Cc} P^* \xrightarrow{\beta_P} \one^* \big]$. Define $\tilde p := \sum_i (\id_P \boxtimes q_i) \circ \pi_i$. Then
\be \label{eq:projectives-aux2}
\raisebox{2em}{\xymatrix{
T((P \boxtimes P)^{\oplus m}) \ar[r]^-{T(\tilde p)} \ar[d]_{T(\cok(K))} \ar[dr]^p
& P \boxtimes P^* \ar[d]^{\beta_P}
\\
TX \ar[r]^x & \one^*
}}
\ee
commutes by construction. It follows that $\beta_P \circ T(\tilde p \circ K) = x \circ T(\cok(K) \circ K) = 0$. From lemma \ref{lem:factor-through-n} we get a map $u$ such that subdiagram (1) in the following diagram commutes:
\be \label{eq:projectives-aux1}
\raisebox{2em}{\xymatrix{
(P \boxtimes P)^{\oplus k} \ar[r]^{K} \ar[d]_{u} \ar@{}[dr]|{(1)}
& (P \boxtimes P)^{\oplus m} \ar[d]^{\tilde p}  \ar[rr]^{\cok(K)} \ar@{}[drr]|{(2)}
&& X \ar@{-->}[d]^{\exists! \, \tilde x}
\\
(P \boxtimes P)^{\oplus |N|} \ar[r]^{n}  
& P \boxtimes P^* \ar[rr]^{\cok(n)}
&& R'
}}
\ee
The existence of $u$ implies that $\cok(n) \circ \tilde p \circ K = 0$, so that by the universal property of $\cok(K)$ there exists a unique $\tilde x : X \to R'$ such that subdiagram (2) commutes. This is the $\tilde x$ we are looking for. It remains to show that $x = r' \circ T(\tilde x)$. Since $\cok(K)$ is a surjection and since $T$ is right exact, also $T(\cok(K))$ is a surjection, and it is sufficient to verify $x \circ T(\cok(K)) = r' \circ T(\tilde x) \circ T(\cok(K))$, i.e.\ commutativity of
\be
\raisebox{2em}{\xymatrix{
T((P \boxtimes P)^{\oplus m}) \ar[r]^-{T(\cok(K))} \ar[dd]_{T(\cok(K))} \ar[dr]^{T(\tilde p)}
& T(X) \ar[r]^{T(\tilde x)}
& T(R') \ar[dd]^{r'}
\\
& T(P \boxtimes P^*) \ar[ur]_{T(\cok(n))} \ar[dr]^{\beta_P}
\\
T(X) \ar[rr]^{x}
&& \one^*
}}
\ee
Commutativity of the top square is $T$ applied to square (2) in \eqref{eq:projectives-aux1}; the right triangle is the definition of $r'$ in \eqref{eq:r'-def}; finally, the bottom left square is \eqref{eq:projectives-aux2}.

\medskip\noindent
\nxt \textsl{Uniqueness:}
We will show that if a map $f : X \to R'$ satisfies $r' \circ T(f) = 0$, then $f=0$. This implies that the $\tilde x$ constructed above is unique. Write $g = f \circ \cok(K)$. It is enough to show that $g=0$. Consider the diagram
\be \label{eq:R1*-proof-aux1}
\raisebox{2em}{\xymatrix{
(P \boxtimes P)^{\oplus m} \ar[r]^-{g} \ar@{-->}[d]_{\exists\,v} \ar@{-->}[dr]^{\exists\,h}
& R'
\\
(P \boxtimes P)^{\oplus |N|} \ar[r]^-{n} 
& P \boxtimes P^* \ar[u]_{\cok(n)}
}}
\ee
Since $(P \boxtimes P)^{\oplus m}$ is projective, we can pull back $g$ along the surjection $\cok(n)$, giving us the existence of $h$. By \eqref{eq:r'-def} we have $\beta_P = r' \circ T(\cok(n))$, so that  $\beta_P \circ T(h) = r' \circ T(\cok(n)) \circ T(h) = r' \circ T(g) = r' \circ T(f) \circ T(\cok(K)) = 0$ by assumption on $f$. Hence we can apply lemma \ref{lem:factor-through-n} to obtain the map $v$ in \eqref{eq:R1*-proof-aux1}. Altogether, $g = \cok(n) \circ n \circ v = 0$.
\end{proof}

\begin{remark}
Because of the finiteness assumption (PF), there is a finite number of isomorphism classes of simple objects in $\Cc$. Let $\{ U_i | i \in \Ic \}$ be a choice of representatives. Furthermore, each $U_i$ has a projective cover $P_i$. For the projective generator, we can choose $P = \bigoplus_{i \in \Ic} P_i$, so that $R'$ arises as a quotient of $P \boxtimes P^*$. In fact, one can choose a `smaller' starting point, namely 
\be
  Q := \bigoplus_{i \in \Ic} P_i \boxtimes P_i^* 
\ee
(but then the above proof would have involved more indices). To describe the map whose cokernel to take, define the subspace
\be
\textstyle
  M = \big\{  f : P \boxtimes P \to Q \,\big|\, \sum_{i\in\Ic} \beta_{P_i} \circ f = 0 \big\} 
  ~~ \subset ~~ \Cc^2(P \boxtimes P , Q)
  \ .
\ee
Denote by $\iota_i : P_i \to P$ and $\pi_i : P \to P_i$ the embedding and restriction map of the direct sum. Pick a basis $\{ v_j \}$ of $M$ and define $m : (P \boxtimes P)^{\oplus |M|} \to Q$ as $m = \sum_{l=1}^{|M|} v_l \circ p_l$, with $p_l$ the $l$'th projection $(P \boxtimes P)^{\oplus |M|} \to P \boxtimes P$. Set $R'' = \cok(m)$. Then in fact
\be
  R' \cong R'' \ ,
\ee
with $R'$ defined as in \eqref{eq:R'-def}. To see this, define
$\pi : P \boxtimes P^* \to Q$, $\pi = \bigoplus_{i \in \Ic} \pi_i \boxtimes \iota_i^*$
and
$\iota : Q \to P \boxtimes P^*$, $\iota = \bigoplus_{i \in \Ic} \iota_i \boxtimes \pi_i^*$. These maps make the two diagrams contained in
\be \label{eq:PiPi-aux1}
\raisebox{2em}{\xymatrix@R=1em@C=3em{
P \otimes_{\Cc} P^* \ar@<-1ex>[dd]_{T(\pi)} \ar[dr]^{\beta_P}
\\
 & \one^*
\\
\bigoplus_{i \in\Ic} P_i \otimes_{\Cc} P_i^*  \ar@<-1ex>[uu]_{T(\iota)} \ar[ru]_-{\sum_{i \in \Ic} \beta_{P_i} }
}}
\ee
commute. For example, $\beta_P 
= \sum_{i \in \Ic} \beta_P \circ (\iota_i \otimes_{\Cc} \id_{P^*}) \circ (\pi_i \otimes_{\Cc} \id_{P^*}) 
= \sum_{i \in \Ic} \beta_{P_i} \circ (\pi_i \otimes_{\Cc} \iota_i^*) 
= \sum_{i \in \Ic} \beta_{P_i} \circ T(\pi)$. We can now construct maps between the two cokernels using their universal properties. Consider the diagram
\be 
\raisebox{2em}{\xymatrix{
(P \boxtimes P)^{\oplus |N|}  \ar[r]^{n} 
& P \boxtimes P^* \ar[r]^{\cok(n)}  \ar@<-1ex>[d]_{\pi}
& R' \ar@<-1ex>@{-->}[d]_{\exists !}
\\
(P \boxtimes P)^{\oplus |M|}  \ar[r]^{m} 
& Q \ar[r]^{\cok(m)}  \ar@<-1ex>[u]_{\iota}
& R'' \ar@<-1ex>@{-->}[u]_{\exists !}
}}
\ee
The diagram \eqref{eq:PiPi-aux1} tells us that $(\sum_{i \in \Ic} \beta_{P_i}) \circ T(\pi \circ n) = \beta_P \circ T(n) = 0$. Thus the image of $\pi \circ n$ lies in the image of $m$ (by an argument analogous to the one in lemma \ref{lem:factor-through-n}), so that $\cok(m) \circ \pi \circ n = 0$. The universal property gives a unique map $R' \to R''$. Similarly one checks that $\cok(n) \circ \iota \circ m = 0$, giving the map $R'' \to R'$. By uniqueness, these are inverse to each other.
\end{remark}

The next lemma prepares the proof of theorem \ref{thm:trivial-twist}.

\begin{lemma} \label{lem:kernel-of-cokn}
For all $u \in \Cc(P,P)$ we have $\cok(n) \circ (u \boxtimes \id - \id \boxtimes u^*) = 0$.
\end{lemma}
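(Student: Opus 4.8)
The plan is to reduce the statement to an application of Lemma \ref{lem:factor-through-n} by precomposing with a suitable surjection whose domain is a direct sum of copies of $P \boxtimes P$. First I would note that the map in question, $u \boxtimes \id_{P^*} - \id_P \boxtimes u^*$, is an endomorphism of $P \boxtimes P^*$, and that $P \boxtimes P^*$ need not itself be projective (the involution $(-)^*$ is contravariant and need not preserve projectives), so Lemma \ref{lem:factor-through-n} cannot be applied to it directly. Since $P$ is a projective generator of $\Cc$, there is a surjection $s : P^{\oplus k} \to P^*$. Because $\boxtimes$ is additive and right exact in each argument (definition \ref{def:deligne-prod}), the morphism $\id_P \boxtimes s : (P \boxtimes P)^{\oplus k} \cong P \boxtimes P^{\oplus k} \to P \boxtimes P^*$ is an epimorphism. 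Hence it suffices to prove that $\cok(n) \circ (u \boxtimes \id_{P^*} - \id_P \boxtimes u^*) \circ (\id_P \boxtimes s) = 0$.

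Second, I would compute the composite $g := (u \boxtimes \id_{P^*} - \id_P \boxtimes u^*) \circ (\id_P \boxtimes s)$. Using functoriality of $\boxtimes$ (namely $(f_1 \boxtimes f_2)\circ(g_1 \boxtimes g_2) = (f_1 \circ g_1)\boxtimes(f_2 \circ g_2)$) one finds $g = u \boxtimes s - \id_P \boxtimes (u^* \circ s)$, a morphism $(P \boxtimes P)^{\oplus k} \to P \boxtimes P^*$. By Lemma \ref{lem:factor-through-n} it is then enough to check $\beta_P \circ T(g) = 0$: for then $g = n \circ \varphi$ for some $\varphi$, whence $\cok(n) \circ g = \cok(n)\circ n \circ \varphi = 0$ by the cokernel property in \eqref{eq:R'-def}.

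Third, I would verify the vanishing of $\beta_P \circ T(g)$. Since $T(A \boxtimes B) = A \otimes_\Cc B$ and $T$ is $k$-linear, $T(g) = u \otimes_\Cc s - \id_P \otimes_\Cc (u^* \circ s)$. Factoring $u \otimes_\Cc s = (u \otimes_\Cc \id_{P^*}) \circ (\id_P \otimes_\Cc s)$ and applying Lemma \ref{lem:conjugate-prop}(ii) with $h = u : P \to P$ — which gives $\beta_P \circ (u \otimes_\Cc \id_{P^*}) = \beta_P \circ (\id_P \otimes_\Cc u^*)$ — I obtain $\beta_P \circ (u \otimes_\Cc s) = \beta_P \circ (\id_P \otimes_\Cc u^*) \circ (\id_P \otimes_\Cc s) = \beta_P \circ (\id_P \otimes_\Cc (u^* \circ s))$, which exactly cancels the second term of $T(g)$. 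Thus $\beta_P \circ T(g) = 0$, completing the argument. The only genuinely load-bearing ingredient is the ``slide'' identity of Lemma \ref{lem:conjugate-prop}(ii), which encodes the defining compatibility of $\beta$ with conjugation in condition (C); everything else is the bookkeeping of replacing an endomorphism of the possibly non-projective object $P \boxtimes P^*$ by a map out of a projective, where the cokernel criterion of Lemma \ref{lem:factor-through-n} becomes available. I expect the main point to watch to be exactly this reduction, since a direct appeal to Lemma \ref{lem:factor-through-n} on $P \boxtimes P^*$ is not legitimate.
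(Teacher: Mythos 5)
Your proof is correct and takes essentially the same approach as the paper's: precompose $u \boxtimes \id - \id \boxtimes u^*$ with a surjection from a finite direct sum of copies of $P \boxtimes P$, verify that $\beta_P \circ T(\,\cdot\,) = 0$ via lemma \ref{lem:conjugate-prop}\,(ii), factor through $n$ by lemma \ref{lem:factor-through-n}, and conclude from $\cok(n) \circ n = 0$. The only (harmless) difference is that you construct the surjection explicitly as $\id_P \boxtimes s$ from a surjection $s : P^{\oplus k} \to P^*$ in $\Cc$ using right exactness of $\boxtimes$, whereas the paper simply invokes that $P \boxtimes P$ is a projective generator of $\CoC$ (corollary \ref{cor:proj-gen-in-prod}) to obtain an abstract surjection $(P \boxtimes P)^{\oplus m} \to P \boxtimes P^*$, after which lemma \ref{lem:conjugate-prop}\,(ii) already gives $\beta_P \circ (u \otimes \id - \id \otimes u^*) = 0$ without any computation involving $s$.
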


\begin{proof}
Pick an $m \in \Zb_{>0}$ such that there is a surjection $s : (P \boxtimes P)^{\oplus m} \to P \boxtimes P^*$. Let $f = (u \boxtimes \id - \id \boxtimes u^*) \circ s$. Then the statement $\cok(n) \circ (u \boxtimes \id - \id \boxtimes u^*) = 0$ is equivalent to $\cok(n) \circ f = 0$. We will show the latter. By lemma \ref{lem:conjugate-prop}\,(ii), we have
\be
  \Big[ T((P \boxtimes P)^{\oplus m}) \xrightarrow{~T(s)~} P \otimes P^* \xrightarrow{u \otimes \id - \id \otimes u^*} P \otimes P^* \xrightarrow{~\beta_P~} \one^* \Big] = 0 \ .
\ee
We can thus apply lemma \ref{lem:factor-through-n} and obtain a map $\tilde f : (P \boxtimes P)^{\oplus m} \to (P \boxtimes P)^{\oplus |N|}$ such that $f = n \circ \tilde f$. Hence $\cok(n) \circ f = \cok(n) \circ n \circ \tilde f = 0$.
\end{proof}

\begin{proof}[Proof of theorem \ref{thm:trivial-twist}]
Since $\nu \boxtimes \id - \id \boxtimes \tilde\nu$ is a natural transformation of the identity functor on $\CoC$, the diagram
\be \label{eq:proof-trivial-twist-aux1}
\raisebox{2em}{\xymatrix{
P \boxtimes P^* \ar[r]^{\cok(n)} \ar[d]_{(\nu \boxtimes \id - \id \boxtimes \tilde\nu)_{P \boxtimes P^*}}
& R' \ar[d]^{(\nu \boxtimes \id - \id \boxtimes \tilde\nu)_{R'}}
\\
P \boxtimes P^* \ar[r]^{\cok(n)} & R'
}}
\ee
commutes. Now note that for all $U\in\Cc$,
\be
  \tilde \nu_{U^*} 
  = (\delta_{U^*})^{-1} \circ (\nu_{U^{**}})^* \circ \delta_{U^*} 
  = (\delta_{U})^* \circ (\nu_{U^{**}})^* \circ (\delta_{U}^{-1})^*
  = (\nu_U)^* \ .
\ee
Thus $(\nu \boxtimes \id - \id \boxtimes \tilde\nu)_{P \boxtimes P^*} = \nu_P \boxtimes \id - \id \boxtimes (\nu_P)^*$. By lemma \ref{lem:kernel-of-cokn}, the lower path in the above diagram is zero. Since $\cok(n)$ is surjective, this implies $(\nu \boxtimes \id - \id \boxtimes \tilde\nu)_{R'}=0$.
\end{proof}

\subsection{Adjoint to the tensor product} \label{app:R-adj-T}

We first need to establish the compatibility of condition (C) and the Deligne product. We do this under the assumption that we are given two categories $\Cc$, $\Dc$ which satisfy condition (PF) from section \ref{sec:deligne-prod} (rather than (F) for we need to invoke \cite[Prop.\,5.5]{Deligne:2007}), which are monoidal with $k$-linear right exact tensor product, and both have conjugates according to condition (C).

Since $(-)^*$ is an equivalence, it is exact. Thus $\Cc^\mathrm{opp} \times \Dc^\mathrm{opp} \xrightarrow{(-)^* \times (-)^*} \Cc \times \Dc \xrightarrow{\boxtimes} \Cc \boxtimes \Dc$ factors through a functor $\Cc^\mathrm{opp} \boxtimes \Dc^\mathrm{opp} \to \Cc \boxtimes \Dc$. By \cite[Prop.\,5.5]{Deligne:2007}, we may take $\Cc^\mathrm{opp} \boxtimes \Dc^\mathrm{opp} = (\Cc \boxtimes \Dc)^\mathrm{opp}$. Altogether, we get a contragredient involutive functor on $\Cc \boxtimes \Dc$, which we also denote by $(-)^*$. By definition, on `factorised objects' it satisfies
\be
  (C \boxtimes D)^* ~=~ C^* \boxtimes D^* \ .
\ee

\begin{lemma} \label{lem:CxD-prop-C}
$(-)^* : \Cc \boxtimes \Dc \to \Cc \boxtimes \Dc$ satisfies property (C).
\end{lemma}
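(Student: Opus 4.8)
The plan is to verify the three pieces of data required by Condition (C) for the induced involution $(-)^*$ on $\Cc \boxtimes \Dc$, namely the natural isomorphism $\delta$, the identity $(\delta_X)^* = (\delta_{X^*})^{-1}$, and the natural family of isomorphisms $\pi_{X,Y}$ fitting into $\Cc\boxtimes\Dc(X,Y^*) \to \Cc\boxtimes\Dc(X\otimes Y,\one^*)$. Since by lemma \ref{lem:nat-xfer-unique-on-prod} natural transformations out of a Deligne product are determined by their values on factorised objects $X = C \boxtimes D$, I would build each datum first on factorised objects from the corresponding data of $\Cc$ and $\Dc$, and then invoke the universal property to extend.

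First I would construct $\delta$. On a factorised object set $\delta_{C \boxtimes D} := \delta^{\Cc}_C \boxtimes \delta^{\Dc}_D$, viewed as a map $C \boxtimes D \to C^{**} \boxtimes D^{**} = (C\boxtimes D)^{**}$. The two functors $\mathrm{Id}$ and $(-)^{**}$ on $\Cc\boxtimes\Dc$ are both $k$-linear and exact, so $\boxtimes \circ (\delta^{\Cc} \times \delta^{\Dc})$ is a natural transformation between the composites with $\boxtimes$; by the defining equivalence \eqref{eq:defining-equiv} this transports to a natural isomorphism $\delta : \mathrm{Id} \Rightarrow (-)^{**}$ on all of $\Cc \boxtimes \Dc$, which is an isomorphism because it is one on factorised objects and these generate. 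The compatibility $(\delta_X)^* = (\delta_{X^*})^{-1}$ then need only be checked on factorised objects (both sides are natural transformations, so lemma \ref{lem:nat-xfer-unique-on-prod} applies), where it reduces to $(\delta^{\Cc}_C)^* \boxtimes (\delta^{\Dc}_D)^* = ((\delta^{\Cc}_{C^*})^{-1}) \boxtimes ((\delta^{\Dc}_{D^*})^{-1})$, which holds factor-by-factor by Condition (C) in $\Cc$ and in $\Dc$.

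Next I would produce $\pi$. The key input is the Hom-factorisation \eqref{eq:hom-iso}: for factorised objects,
\be
  \Cc\boxtimes\Dc(C \boxtimes D, (C' \boxtimes D')^*) \cong \Cc(C,C'^*) \otimes_k \Dc(D,D'^*),
\ee
and similarly $\Cc\boxtimes\Dc((C\boxtimes D)\otimes(C'\boxtimes D'), \one^*_{\Cc\boxtimes\Dc})$ factorises using $\one^*_{\Cc\boxtimes\Dc} = \one_\Cc^* \boxtimes \one_\Dc^*$ together with \eqref{eq:CxC-tensor-def}. Tensoring the two isomorphisms $\pi^{\Cc}_{C,C'}$ and $\pi^{\Dc}_{D,D'}$ over $k$ then gives an isomorphism between these two Hom-spaces on factorised objects, natural in all four variables; a further application of the universal property of the Deligne product (this time for natural transformations of the bifunctors sending $X,Y$ to the relevant Hom-spaces) delivers the required natural family $\pi_{X,Y}$ for all $X,Y \in \Cc\boxtimes\Dc$.

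The main obstacle I anticipate is bookkeeping rather than conceptual: one must be careful that all the identifications ($\one^*_{\Cc\boxtimes\Dc} = \one_\Cc^*\boxtimes\one_\Dc^*$, the behaviour of $\otimes_{\Cc\boxtimes\Dc}$ on factorised objects via \eqref{eq:CxC-tensor-def}, and the isomorphism \eqref{eq:hom-iso}) are compatible and natural simultaneously in every slot, so that the extension from factorised objects via the equivalence of functor categories is legitimate. Once naturality on factorised objects is established, lemma \ref{lem:nat-xfer-unique-on-prod} and the defining equivalence \eqref{eq:defining-equiv} do all the extension work, and the three axioms of Condition (C) follow by reduction to the already-verified factor-wise statements in $\Cc$ and $\Dc$.
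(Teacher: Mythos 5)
Your treatment of $\delta$ is essentially the paper's own argument and is fine: both sides of $(\delta_X)^* = (\delta_{X^*})^{-1}$ are determined by their values on factorised objects, where the identity reduces to the factor-wise statements in $\Cc$ and $\Dc$. The gap is in the construction of $\pi$. The defining universal property \eqref{eq:defining-equiv} of the Deligne product is an equivalence between categories of $k$-linear \emph{right exact covariant} functors, but the two bifunctors you want to compare, $(X,Y) \mapsto \Cc\boxtimes\Dc(X,Y^*)$ and $(X,Y) \mapsto \Cc\boxtimes\Dc(X \otimes Y, \one^*\boxtimes\one^*)$, are contravariant in both slots and only \emph{left exact} there. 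So the step ``a further application of the universal property of the Deligne product \dots delivers the required natural family $\pi_{X,Y}$'' is not licensed by anything you have established: as stated, neither the extension of the factor-wise isomorphism $\pi^{\Cc}\otimes_k\pi^{\Dc}$ nor the uniqueness needed to identify the extension with the genuine Hom-functors falls under \eqref{eq:defining-equiv} or lemma \ref{lem:nat-xfer-unique-on-prod}. You flagged this as ``bookkeeping'', but it is the actual mathematical content of the lemma.

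The paper closes this gap with two inputs you would need to supply. First, by Deligne's result $\Cc^{\mathrm{opp}} \boxtimes \Dc^{\mathrm{opp}} = (\Cc \boxtimes \Dc)^{\mathrm{opp}}$ (\cite[Prop.\,5.5]{Deligne:2007}), the equivalence of functor categories for right exact functors out of opposite categories translates into the analogous equivalence for $k$-linear functors that are \emph{left exact} in each argument; this is the version of the universal property that actually applies to the Hom-type bifunctors above. Second, one must check that all four functors involved lie in this left exact class. For $(U,V)\mapsto\Cc(U,V^*)$ this is left exactness of Hom plus exactness of $(-)^*$; for the pairing functor $(U,V)\mapsto\Cc(U\otimes V,\one^*)$ it is not automatic from the data you cite, since $\otimes$ is only right exact -- the paper deduces it precisely by transporting left exactness through the natural isomorphism $\pi^{\Cc}$ itself (alternatively one can argue that the contravariant composite of a right exact functor with $\Cc(-,\one^*)$ is left exact). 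With these two points in place, \eqref{eq:hom-iso} together with \eqref{eq:CxC-tensor-def} identifies the values on factorised objects, and uniqueness of left exact extensions identifies the transported functors with the global Hom and pairing functors on $\Cc\boxtimes\Dc$, yielding $\pi_{X,Y}$. Without the $\mathrm{opp}$-compatibility of $\boxtimes$ and the left exactness verifications, your extension step has no justification.
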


\begin{proof}
The natural isomorphisms $\delta^{\Cc}$ and $\delta^{\Dc}$ between the exact functors $\boxtimes$ and $\boxtimes \circ \{ (-)^{**} \times (-)^{**} \}$ from $\Cc \times \Dc$ to $\Cc \boxtimes \Dc$ provide a natural isomorphism $\delta$ from $\Id$ to $(-)^{**}$ on $\Cc \boxtimes \Dc$ with the required property $(\delta_X)^* = (\delta_{X^*})^{-1}$. 

For the existence of $\pi$, we stress again \cite[Prop.\,5.5]{Deligne:2007}: $\Cc^\mathrm{opp} \boxtimes \Dc^\mathrm{opp} = (\Cc \boxtimes \Dc)^\mathrm{opp}$. Thus there is an equivalence of functor categories between $k$-linear right exact functors in each argument $\Cc^\mathrm{opp} \times \Dc^\mathrm{opp} \to \Ec^\text{opp}$ and right exact functors $(\Cc \boxtimes \Dc)^\mathrm{opp} \to \Ec^\text{opp}$. But this is the same as saying that there is an equivalence of functor categories between $k$-linear functors  $\Cc \times \Dc \to \Ec$, left exact in each argument, and left exact functors $\Cc \boxtimes \Dc \to \Ec$. 
Given our assumptions on $\Cc$ and $\Dc$,
by \cite[Cor.\,5.4]{Deligne:2007}, the functor $\boxtimes : \Cc \times \Dc \to \Cc \boxtimes \Dc$ itself is exact in each argument. 

Recall that an abelian category $\Ac$, the Hom-functor from $\Ac^\mathrm{opp} \times \Ac$ to abelian groups given by $(A,B) \mapsto \Ac(A,B)$ is left exact in each argument. Thus the functor $\Cc^\mathrm{opp} \times \Cc^\mathrm{opp} \to \vect$ given by $(U,V) \mapsto \Cc(U,V^*)$ is left exact in each argument. The maps $\pi_{U,V}$ provide a natural isomorphism from this functor to the functor $(U,V) \mapsto \Cc(U \otimes V, \one^*)$, which therefore is also left exact in both arguments (even though it involves the right exact tensor product). The same reasoning applies to $\Dc$. The combined functors
\be \label{eq:CxD-prop-C-aux1}
\begin{array}{ll}
  \Cc^\mathrm{opp} \times \Dc^\mathrm{opp} \times \Cc^\mathrm{opp} \times \Dc^\mathrm{opp} \to \vect
  ~ , ~~
  &(U,A,V,B) \mapsto \Cc(U,V^*) \otimes_k \Dc(A,B^*) ~~ \text{and}
  \\[.5em]
  &(U,A,V,B) \mapsto \Cc(U \otimes V,\one^*) \otimes_k \Dc(A \otimes B, \one^*) \ , 
\end{array}
\ee
are equally left exact in each argument, and
thus give two functors $(\Cc \boxtimes \Dc)^\mathrm{opp} \times (\Cc \boxtimes \Dc)^\mathrm{opp} \to (\Cc \boxtimes \Dc \boxtimes \Cc \boxtimes \Dc)^\mathrm{opp} \to \vect$ which are left exact in each argument. In view of \eqref{eq:hom-iso}, these functors are necessarily given by 
\be \label{eq:CxD-prop-C-aux2}
(X,Y) \mapsto \Cc {\boxtimes} \Dc\,\big(X,Y^*\big)
\quad \text{and} \quad
(X,Y) \mapsto \Cc {\boxtimes} \Dc\big(X \otimes Y , \one^* \boxtimes \one^* \big) \ .
\ee
The equivalence \eqref{eq:defining-equiv} of functor categories -- which as we saw above also holds for the corresponding categories of left exact functors -- now shows that the natural isomorphism $\pi^{\Cc}_{U,V} \otimes_k \pi^{\Dc}_{A,B}$ between the functors \eqref{eq:CxD-prop-C-aux1} provides a natural isomorphism $\pi_{X,Y}$ between the functors \eqref{eq:CxD-prop-C-aux2}.
\end{proof}

Recall the definition of the functor $R$ in terms of the conjugates on $\Cc$ and $\CoC$ given in \eqref{eq:R(U)-expression}.

\begin{proof}[Proof of theorem \ref{thm:R-adj-to-T}]
The natural isomorphisms
\be \label{eq:xi_XU-def}
  \xi_{X,U} : \Cc(T(X),U) \overset{\sim}\longrightarrow \Cc^2(X,R(U))
\ee
are provided by the composition
\be
\begin{array}{ll}
  \Cc\big(\,T(X)\,,\,W\,\big)
  &\xrightarrow[\pi\text{ and }\delta]{~\sim~}~
  \Cc\big(\,T(X) \otimes_\Cc W^*\,,\, \one^*\,\big)
  ~\xrightarrow{~\sim~}~
  \Cc\big(\,T(X) \otimes_\Cc T(W^* \boxtimes \one)\,,\, \one^*\,\big)
\\[.7em]
  &\xrightarrow[T_2]{~\sim~}~
  \Cc\big(\,T(X  \otimes_{\Cc^2} (W^* \boxtimes \one))\,,\, \one^*\,\big)
  ~\xrightarrow[\chi\text{ from \eqref{eq:R1*-chi}}]{~\sim~}~
  \Cc^2\big(\,X  \otimes_{\Cc^2} (W^* \boxtimes \one)\,,\, R_{\one^*}\,\big)
\\[.7em]
  &\xrightarrow[\text{Lem.\,\ref{lem:CxD-prop-C}}]{~\sim~}~
  \Cc^2\big(\,[X  \otimes_{\Cc^2} (W^* \boxtimes \one)] \otimes_{\Cc^2} (R_{\one^*})^*\,,\, \one^* \boxtimes \one^* \,\big)
\\[.7em]
  &\xrightarrow[\text{assoc}]{~\sim~}~
  \Cc^2\big(\,X  \otimes_{\Cc^2} [(W^* \boxtimes \one) \otimes_{\Cc^2} (R_{\one^*})^*]\,,\, \one^* \boxtimes \one^* \,\big)
\\[.7em]
  &\xrightarrow[\text{Lem.\,\ref{lem:CxD-prop-C}}]{~\sim~}~
  \Cc^2\big(\,X  \,,\, [(W^* \boxtimes \one) \otimes_{\Cc^2} (R_{\one^*})^*]^* \,\big)
  \equiv \Cc^2\big(\,X\,,\,R(W)\,\big) \ .
\end{array}
\ee
\end{proof}

The adjunction natural transformations are, in terms of the isomorphism \eqref{eq:xi_XU-def},
\be \label{eq:alpha-beta-def}
  \eta_X := \xi_{X,T(X)}(\id_{T(X)}) : X \to R(T(X)) 
  ~,~~
  \eps_U := \xi_{R(U),U}^{-1}(\id_{R(U)}) : T(R(U)) \to U \ ,
\ee
and the satisfy the adjunction properties, for $X \in \CoC$ and $U \in \Cc$,
\be\label{eq:adjunction-prop}
\begin{array}{ll}
\big[\, R(U) \xrightarrow{\eta_{R(U)}} R\,TR(U) \xrightarrow{R(\eps_U)} R(U) \,\big] &=~ \id_{R(U)} \ ,
\\[1em]
\big[\, T(X) \xrightarrow{T(\eta_X)} TR\,T(X) \xrightarrow{\eps_{T(X)}} T(X) \,\big] &=~ \id_{T(X)} \ .
\end{array}
\ee
The functor $R$ is lax monoidal (and colax monoidal), with $R_0$ and $R_2$ given by
\be \label{eq:R0-R2-def}
\begin{array}{rcl}
R_0 &=& \Big[ \one \boxtimes \one \xrightarrow{\eta_{\one \boxtimes \one}} R(T(\one \boxtimes \one))
\xrightarrow{R(T_0^{-1})} R(\one) \Big] \ ,
\\[1.5em]
R_{2;U,V} &=& \Big[RU \otimes_{\Cc^2} RV 
\xrightarrow{\eta_{RU \otimes RV}} R\big(T(RU \otimes_{\Cc^2} RV)\big)
\\[.5em]
&& \hspace{4em}
\xrightarrow{T_{2;RU,RV}^{-1}} R\big(TR(U) \otimes_{\Cc} TR(V)\big)
\xrightarrow{R(\eps_U \otimes \eps_V)} R(U \otimes_{\Cc} V) \Big] \ ,
\end{array}
\ee
see, e.g., \cite[Def.\,2.1\,\&\,Lem.\,2.7]{Kong:2008ci}.

\subsection{More details on centres} \label{app:full-centre-via-left-centre}

This appendix contains an auxiliary result which implies the existence of the left centre in abelian monoidal categories which have conjugates as in (C), and it contains the proof of theorem \ref{thm:full-centre-via-left-centre}.

\medskip

Let $\Ac$ be an abelian monoidal category with conjugates according to condition (C). Let $m : A \otimes B \to C$ be a morphism in $\Ac$. Consider the category $\Qc$ whose objects are pairs $(U,u)$ where $u : U \to A$ is such that $m \circ (u \otimes \id_B)=0$. Morphisms $f : (U,u) \to (V,v)$ in $\Qc$ are maps $f : U \to V$ in $\Ac$ such that $v \circ f = u$.

Using conjugates, from $m$ we obtain a morphism $\tilde m : A \to (B \otimes C^*)^*$ via
\be
  \Ac(A B,C) \xrightarrow{~\sim~} 
  \Ac(A B, C^{**}) \xrightarrow{~\sim~}
  \Ac((A B) C^*, \one^* ) \xrightarrow{~\sim~}
  \Ac(A (B C^*), \one^* ) \xrightarrow{~\sim~}
  \Ac(A  , (B C^*)^* )  \ .
\ee
When applied to $m$, this chain of natural isomorphisms yields 
\be
  \tilde m = \pi^{-1}_{A,BC^*}\big[\pi_{AB,C^*}(\delta_C \circ m) \circ \alpha_{A,B,C^*} \big] \ .
\ee  
Naturality in $A$ (cf.\,\eqref{eq:pi-natural-aux}) implies that for any $u : U \to A$ we have 
\be \label{eq:m-naturality-aux}
  \tilde m \circ u 
  = 
  \pi^{-1}_{U,BC^*}\big[\pi_{UB,C^*}(\delta_C \circ m \circ (u \otimes \id_B)) \circ \alpha_{U,B,C^*} \big] \ ,
\ee

\begin{lemma}
The following are equivalent.
\\[.3em]
(i) $(K,k)$ is terminal in $\Qc$.
\\[.3em]
(ii) $k : K \to A$ is the kernel of $\tilde m : A \to (B \otimes C^*)^*$.
\end{lemma}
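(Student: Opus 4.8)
The plan is to reduce both (i) and (ii) to the same factorisation property by means of a single observation: for any $u : U \to A$ one has $\tilde m \circ u = 0$ if and only if $m \circ (u \otimes \id_B) = 0$. This can be read off directly from the naturality identity \eqref{eq:m-naturality-aux}, in which $\tilde m \circ u$ is obtained from $m \circ (u \otimes \id_B)$ by post- and pre-composing with the maps $\delta_C$, $\alpha_{U,B,C^*}$, $\pi_{UB,C^*}$ and $\pi^{-1}_{U,BC^*}$. By condition (C) the families $\pi$ and $\delta$ consist of isomorphisms, and $\alpha$ is the associator, so each of these is invertible; in particular $\pi_{UB,C^*}$ is a linear bijection on the relevant Hom-space. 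Hence the right-hand side of \eqref{eq:m-naturality-aux} vanishes exactly when its innermost argument $\delta_C \circ m \circ (u \otimes \id_B)$ vanishes, i.e.\ exactly when $m \circ (u \otimes \id_B) = 0$ (using once more that $\delta_C$ is an isomorphism). I would spell this out as the first and only substantive step.

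Given this equivalence, the objects of $\Qc$ are precisely the pairs $(U,u)$ with $u : U \to A$ and $\tilde m \circ u = 0$, and a morphism $(U,u) \to (V,v)$ in $\Qc$ is just a map $f : U \to V$ with $v \circ f = u$. Thus $\Qc$ is exactly the category of morphisms into $A$ that are annihilated by $\tilde m$, and a terminal object of this category is by definition a kernel of $\tilde m$. Concretely, I would record the two implications as follows. For (i)$\Rightarrow$(ii): if $(K,k)$ is terminal in $\Qc$, then in particular $(K,k)$ is an object of $\Qc$, so $m \circ (k \otimes \id_B) = 0$ and hence $\tilde m \circ k = 0$ by the bridge equivalence; terminality then says that every $u : U \to A$ with $\tilde m \circ u = 0$ factors uniquely through $k$, which is precisely the universal property of $k = \ker(\tilde m)$. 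For (ii)$\Rightarrow$(i): if $k : K \to A$ is the kernel of $\tilde m$, then $\tilde m \circ k = 0$, so $m \circ (k \otimes \id_B) = 0$ and $(K,k)$ lies in $\Qc$; the kernel's universal property, restated through the bridge equivalence, is exactly terminality of $(K,k)$ in $\Qc$.

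There is no serious obstacle here: the entire content is in the first step, namely recognising that \eqref{eq:m-naturality-aux} turns the defining condition $m \circ (u \otimes \id_B)=0$ of $\Qc$ verbatim into the kernel condition $\tilde m \circ u = 0$, using only invertibility of the structural isomorphisms from condition (C). The one point that deserves an explicit line is that $\pi_{UB,C^*}$ is applied as an isomorphism of Hom-spaces to the morphism $\delta_C \circ m \circ (u \otimes \id_B)$, so that its image vanishes if and only if that morphism does; composing afterwards with the isomorphism $\alpha_{U,B,C^*}$ and then with $\pi^{-1}_{U,BC^*}$ preserves this equivalence, so no nondegeneracy input beyond (C) is needed.
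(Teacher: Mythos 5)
Your proposal is correct and follows essentially the same route as the paper: both proofs hinge on reading off from \eqref{eq:m-naturality-aux} that $\tilde m \circ u = 0$ if and only if $m \circ (u \otimes \id_B) = 0$ (using that $\pi$, $\delta$ and the associator are isomorphisms), and then observing that the terminal-object property of $(K,k)$ in $\Qc$ and the universal property of $\ker(\tilde m)$ become identical statements under this translation. Your version is if anything slightly more careful than the paper's, since you explicitly verify $\tilde m \circ k = 0$ for the kernel direction and spell out that $\pi_{UB,C^*}$ acts as a bijection of Hom-spaces.
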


\begin{proof}
(i)$\,\Rightarrow\,$(ii): Let $u : U \to A$ be a morphism such that $\tilde m \circ u = 0$. By \eqref{eq:m-naturality-aux}, then also $m \circ (u \otimes \id_B) = 0$. Thus $(U,u)$ is an object in $\Qc$. By terminality there is a unique arrow $f : U \to K$ such that $k \circ f = u$. This is the thought-for $f$ in the universal property of the kernel.
\\
(ii)$\,\Rightarrow\,$(i): Let $(U,u)$ be an object in $\Qc$. Then $m \circ (u \otimes \id_B) = 0$ and as above we see that $\tilde m \circ u = 0$. By the universal property of the kernel, there is a unique map $f : U \to K$ such that $k \circ f = u$. Thus there is a unique morphism $f : (U,u) \to (K,k)$.
\end{proof}

This lemma implies the existence of the left centre of an algebra $B$ in the category $\Ac$. Indeed, the universal property of the left centre from definition \ref{def:left-centre} amounts to the terminal object condition in the category $\Qc$ from above with the choice 
\be
  m = \mu_B \circ (\id_{B\otimes B} - c_{B,B}) : B \otimes B \to B \ .
\ee
As the kernel of $\tilde m$ exists in $\Ac$, so does the terminal object in $\Qc$ and hence the left centre.

\medskip

Next, we turn to the proof of  theorem \ref{thm:full-centre-via-left-centre}.

\begin{proof}[Proof of theorem \ref{thm:full-centre-via-left-centre}]
We need to check that $(Z,z) \equiv (\,C_l(R(A))\,,\,\eps_A \circ T(e)\,)$ satisfies the universal property in definition \ref{eq:full-centre-in-C2}. Let thus $(X,x)$ be a pair such that \eqref{eq:full-cent-univprop} commutes.

By lemma \ref{lem:repres+terminal} with $U=A$, $R'=R(A)$ and $r' = \eps_A$, there is a unique map $\tilde x : X \to R(A)$ such that \eqref{eq:R'-terminal-ob-prop} commutes, i.e.\ such that $\eps_A \circ T(\tilde x) = x$. This map is given by (use \eqref{eq:alpha-beta-def} and naturality of $\xi$)
\be \label{eq:tilde-x-via-adjunct}
  \tilde x ~=~
  \Big[ X \xrightarrow{~\xi_{X,A}(x)~} R(A) \Big]
  ~=~
  \Big[ X \xrightarrow{~\eta_X~} R(T(X)) \xrightarrow{~R(x)~} R(A) \Big] \ .
\ee
We will see below that the map $\tilde x$ satisfies the condition for the universal property of the left centre, that is, diagram \eqref{eq:left-centre-diagram} commutes for the pair $(X,\tilde x)$. Thus, the map $\tilde x$ factors as 
\be
\raisebox{2em}{\xymatrix{
& C_l(R(A)) \ar[dr]^e
\\
X \ar[ur]^{x'} \ar[rr]^{\tilde x} && R(A)
}}
\ee
Since $\tilde x$ is unique and $e$ is mono, also $x'$ is unique. That $x'$ makes \eqref{eq:full-cent-pairmap} commute follows from $x = \eps_A \circ T(\tilde x) = \eps_A \circ T(e) \circ T(x') = z \circ T(x')$. This shows that $(Z,z)$ is the full centre of $A$.

\begin{figure}[bt]
$$
\small\xymatrix{
X \, R(A) \ar@{}[rrrd]|{(1)} \ar[rrr]^{\eta_X\,1} \ar[dr]_{\eta_{X\,R(A)}} \ar[ddddddd]^{c_{X,RA}} \ar@{}[ddddddr]|{(5)}
&&& RT(X)\,R(A)  \ar[d]_{\eta_{RT(X)\,R(A)}} \ar@/^2em/[dddr]^{R_2}  \ar@{}[dddr]|{(4)}
\\
& RT\hspace{-1pt}\big(X \, R(A)\big) \ar@{}[rrd]|{(2)} \ar[rr]^{RT(\eta_X \,1)} \ar[dr]_{R(T_2^{-1})} \ar[ddddd]^{RT(c_{X,RA})} \ar@{}[ddddr]|{(6)}
&& RT\hspace{-1pt}\big(RT(X)\,R(A)\big) \ar[d]_{R(T_2^{-1})} 
\\
&& R\hspace{-1pt}\big(T(X) \, TR(A)\big) \ar@{=}[dr]^{(3)} \ar@/^.5em/[r]^-{R(T(\eta_X)\,1)} \ar[ddd]^{R(\hat \varphi_{X,RA}) \equiv R(\varphi_{X,TRA})}
& R\hspace{-1pt}\big(TRT(X)\,TR(A)\big) \ar[d]^{R(\eps_{T(X)} 1)}
\\
&&& R\hspace{-1pt}\big(T(X) \, TR(A)\big)  \ar[r]^-{R(1\,\eps_A)} \ar@{}[dr]|{(7)}
& R\hspace{-1pt}\big(T(X)\,A\big) \ar[d]^{R(\varphi_{X,A})} 
\\
&&& R\hspace{-1pt}\big(TR(A) \, T(X)\big)  \ar[r]_-{R(\eps_A\,1)}
& R\hspace{-1pt}\big(A\,T(X)\big)
\\
&& R\hspace{-1pt}\big(TR(A) \, T(X)\big) \ar@{=}[ur]_{(9)} \ar@/_.5em/[r]_-{R(1\,T(\eta_X))}
& R\hspace{-1pt}\big(TR(A)\,TRT(X)\big) \ar[u]_{R(1\,\eps_{T(X)})}
\\
& RT\hspace{-1pt}\big(R(A)\,X \big) \ar@{}[rru]|{(10)} \ar[rr]_{RT(1\,\eta_X )} \ar[ur]^{R(T_2^{-1})}
&& RT\hspace{-1pt}\big(R(A)\,RT(X)\big) \ar[u]^{R(T_2^{-1})} 
\\
R(A) \, X \ar@{}[rrru]|{(11)} \ar[rrr]_{1\,\eta_X} \ar[ur]_{\eta_{X\,R(A)}}
&&& R(A)\,RT(X)  \ar[u]^{\eta_{RT(X)\,R(A)}} \ar@/_2em/[uuur]_{R_2}  \ar@{}[uuur]|{(8)}
}
$$
\caption{Commutativity of subdiagram (1) in \eqref{eq:full-centre-via-left-centre-aux1}. 
All $\otimes$ have been omitted, instead of $\id$ the shorthand $1$ in used, and only a minimum of brackets is given. The commutativity of the individual cells is explained in the main text.}
\label{fig:subdiag-1}
\end{figure}
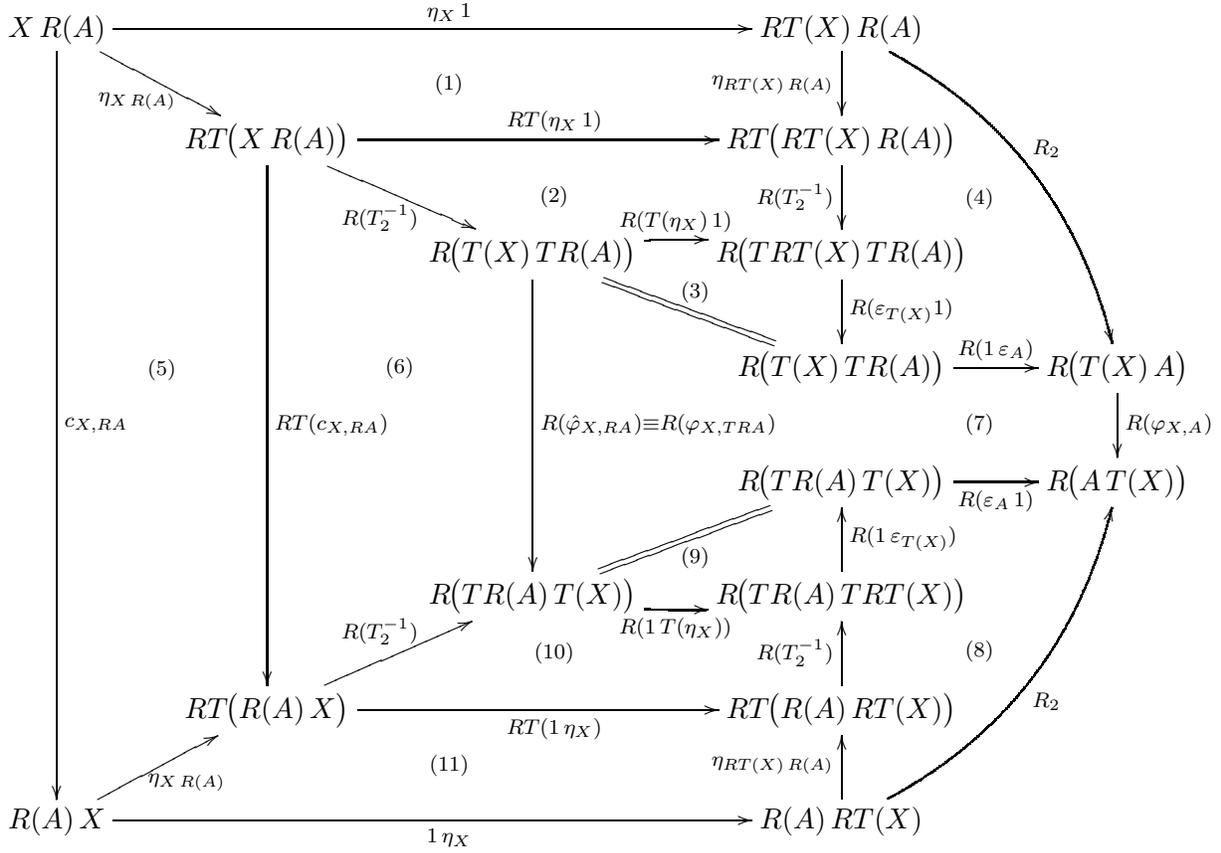

It remains to check that $\tilde x$ satisfies \eqref{eq:left-centre-diagram}. For convenience, we reproduce (minimally adapted to our setting) the proof given in \cite[Thm.\,5.4]{Davydov:2009}. Substituting the expression \eqref{eq:tilde-x-via-adjunct} for $\tilde x$, we need to show commutativity of
\be \label{eq:full-centre-via-left-centre-aux1}
\raisebox{2em}{\xymatrix@R=1em@C=3em{
X \otimes_{\Cc^2} R(A) \ar[r]^-{\eta_X \otimes \id} \ar[dddddd]_{c_{X,R(A)}} \ar@{}[ddddddr]|{(1)} 
& R(T(X)) \otimes_{\Cc^2} R(A) \ar[r]^-{R(x) \otimes \id} \ar[dd]_{R_2} \ar@{}[ddr]|{(2)}
& R(A) \otimes_{\Cc^2} R(A) \ar[dd]_{R_2} \ar[dddr]^{\mu_{R(A)}}
\\
\\
& R(T(X) \otimes_{\Cc} A) \ar[dd]_{R(\varphi_{X,A})} \ar[r]_{R(x \otimes \id)} \ar@{}[ddr]|{(3)}
& R(A \otimes_{\Cc} A) \ar[dr]_{R(\mu_A)}
\\
&&& R(A)
\\
& R(A \otimes_{\Cc} T(X))  \ar[r]^{R(\id \otimes x)} \ar@{}[ddr]|{(4)}
& R(A \otimes_{\Cc} A) \ar[ur]^{R(\mu_A)}
\\
\\
R(A) \otimes_{\Cc^2} X \ar[r]^-{\id \otimes \eta_X} 
& R(A) \otimes_{\Cc^2} R(T(X)) \ar[r]^-{\id \otimes R(x)} \ar[uu]^{R_2}
& R(A) \otimes_{\Cc^2} R(A) \ar[uu]^{R_2} \ar[uuur]_{\mu_{R(A)}}
}}
\ee
The rightmost triangles are the definition \eqref{eq:F(A)-product} of $\mu_{R(A)}$. Squares 2 and 4 are naturality of $R_2$. Subdiagram 3 is $R$ applied to the defining property \eqref{eq:full-cent-univprop} of $x$. Subdiagram 1 is somewhat tedious and is further analysed in figure \ref{fig:subdiag-1}. In explaining the commutativity of the various cells, let us start with the key step: the two ways of writing the arrow between cells 6 and 7, which amounts to $R$ applied to lemma \ref{lem:phi-hatphi-rel}. Using $R(\hat\varphi_{X,RA})$, cell 6 is $R$ applied to the definition of $\hat\varphi$ in \eqref{eq:varphi-hat-def}, and using $R(\varphi_{X,TRA})$, cell 7 is $R$ applied to naturality of $\varphi_{X,-}$. The remaining cells are as follows: cells 1, 5, 11 are naturality of $\eta$, cells 2, 10 are naturality of $T_2$, cells 3, 9 are the adjunction property \eqref{eq:adjunction-prop}, cells 4, 8 are the definition \eqref{eq:R0-R2-def} of $R_2$.
\end{proof}

\section{OPEs in $R(\Wc^*)$} \label{app:OPEs}

\subsection{OPEs involving $\Omega$}

We will demonstrate that for all fields $\phi \in R(\Wc^*)$ one has
\be \label{eq:Om-OPE-app}
  M_z(\Omega \otimes \phi) = M_z(\phi \otimes \Omega) = \pi(\phi) \cdot \Omega \ ,
\ee
where $M_z(\phi \otimes \psi)$ is the OPE as introduced in \eqref{eq:Mz-def}. We will also use the conventional notation $\phi(z)\psi(0)$ for the OPE. 
To establish \eqref{eq:Om-OPE-app} we will first show that $C_2(u|z,0,\Omega,L_m \phi) = 0$ for all $\phi \in F$, $u \in \overline F{}^*$, and $m \in \Zb$ (recall the notation from section \ref{sec:genout-ideal}). Suppose the contrary and let $M \in \Zb$ be the largest integer such that $C_2(u|z,0,\Omega,L_M \phi) \neq 0$. Choose $N>0$ such that $L_n u = 0$ for all $n \ge N$. Apply property (C5') for $f(\zeta) = (z-\zeta)^{-N-M}\zeta^{M+1}$. Since $\Omega$ is annihilated by all $L_n$, one checks that \eqref{eq:mode-moving-generalout} becomes $0 = z^{-N-M} C_2(u|z,0,\Omega,L_M \phi)$, in contradiction to our assumption. Similarly one checks that $C_2(u|z,0,\Omega,\overline L_m \phi) = 0$ for all $u,\phi,m$. Let $\Delta,N$ be such that $(L_0+\overline L_0-\Delta)^N \phi = 0$. Then
\be
  0 = C_2(u|z,0,\Omega,(L_0+\overline L_0-\Delta)^N \phi) = (-\Delta)^N C_2(u|z,0,\Omega,\phi) \ ,
\ee
and so the OPE $\Omega(z)\phi(0)$ can only be non-vanishing for $\phi \in F^{(0)}$. Since $\omega = L_0 \eta$ and $\Omega = L_0^2 \eta$, the OPE vanishes for $\phi = \omega,\Omega$. To confirm \eqref{eq:Om-OPE-app} it only remains to check $\Omega(z)\eta(0) = \Omega(0)$.
Using once more that $\Omega$ is annihilated by all Virasoro modes, we have $L_m M_z(\Omega\otimes\eta) = M_z(\Omega\otimes L_m \eta)$, which is zero by the above argument, and analogously $\overline L_m M_z(\Omega\otimes\eta) = 0$. This applies in particular to $m=-1$, and the intersection of the kernels of $L_{-1}$ and $\overline L_{-1}$ is $\Cb\,\Omega$. Thus $\Omega(z)\eta(0) = a \cdot \Omega(0)$ for some $a \in \Cb$. But then also $\eta(z)\Omega(0) = a \cdot \Omega(0)$ and $\eta(z)\eta(w)\Omega(0) = a^2 \cdot \Omega(0)$. On the other hand, from $\pi(\eta(z)\eta(0)) = 1$ we see that $\eta(z)\eta(0) = \eta(0) + (\text{other fields})$. Thus $a=1$.

\subsection{OPEs involving the holomorphic field $\Tc$}

The next-simplest set of OPEs are those of the form $\Tc(z)\phi(0)$. Since $\Tc$ is holomorphic, this OPE does not involve logarithmic singularities (or it would not be single-valued). For example, the most general ansatz for the OPE with $\omega$ is
\be \label{eq:T-om-OPE-aux1}
  \Tc(z) \omega(0) = z^{-2}(P\cdot \eta(0) + Q\cdot \omega(0) + R\cdot \Omega(0)) + z^{-1}( S\cdot L_{-1} \eta(0) + U \cdot L_{-1} \omega(0) ) + \Oc(z^0) 
\ee
for some constants $P,Q,R,S,U \in \Cb$. These constants are further constrained by the identity
\be\label{eq:Lm-T-comm}
\begin{array}{l}
  L_m M_z(\Tc\otimes\phi) 
  = \sum_{k=0}^{3} {{m+1}\choose k} z^{m+1-k} M_z(L_{k-1}\Tc\otimes\phi) + M_z(\Tc\otimes L_m \phi) 
\\[.5em]
  = z^m\big(2(m{+}1)+z\tfrac{\partial}{\partial z}\big) M_z(\Tc\otimes\phi) - \tfrac{5}{6}(m^3{-}m) \, z^{m-2} \pi(\phi) \cdot \Omega + M_z(\Tc\otimes L_m \phi)  \ ,
\end{array}  
\ee
where $m\in \Zb$ and $\phi \in F$ are arbitrary. The first equality follows from (C5') and the second uses \eqref{eq:M-modeactions} and \eqref{eq:Om-OPE-app}. If one applies this identity for $m=0$ and $m=1$ to \eqref{eq:T-om-OPE-aux1}, one quickly finds that $P=Q=S=0$ and $R=U$. Thus
\be \label{eq:T-om-OPE-aux2}
  \Tc(z) \omega(0) = R \cdot (z^{-2} \, \Omega(0) + z^{-1} L_{-1} \omega(0) ) + \Oc(z^0)  \ .
\ee
This also provides the two-point correlator 
\be \label{eq:T-om-2pt}
  {}^\eta\langle \Tc(z) \omega(0) \rangle = R \cdot z^{-2} \ .
\ee
Actually, $R$ is necessarily zero, though it will take us a little while to get there.
Since the states of generalised weight $(1,0)$,  $(2,0)$ and  $(3,0)$ are Virasoro descendents of $\eta$, the same method allows one to determine the $z^0$ and $z^1$ coefficient in this OPE. The calculations become more lengthy, but the answer is simply
\be \label{eq:T-om-OPE-aux3}
  \Tc(z) \omega(0) = R \cdot (z^{-2} \Omega(0) + z^{-1} L_{-1} \omega(0) + L_{-2} \omega(0) + z L_{-3} \omega(0) ) + \Oc(z^2)  \ .
\ee
Next we compute $\Tc(z)\Tc(0)$ by using \eqref{eq:Lm-T-comm} to move all Virasoro modes in $M_z(\Tc \otimes (L_{-2} - \tfrac32 L_{-1}L_{-1})\omega)$ to the left and by then inserting the OPE \eqref{eq:T-om-OPE-aux3}. A short calculation yields 
\be\label{eq:T-T-OPE-aux4}
  \Tc(z)\Tc(0) = R \cdot \big\{ -5 z^{-4} \Omega(0) + 2 z^{-2} \Tc(0) + z^{-1} (L_{-1} \Tc)(0) \big\} + \Oc(z^0) \ .
\ee
The OPEs \eqref{eq:T-om-OPE-aux3} and \eqref{eq:T-T-OPE-aux4} allow one to determine the three-point function ${}^\eta\langle \Tc(z) \Tc(w) \omega(0) \rangle$ by singularity subtraction. Thinking of the three-point function as a function of $z$, this function vanishes at infinity and has poles only at $w$ and $0$. Subtracting these poles we hence find a holomorphic function on $\Cb$ vanishing at infinity, i.e.\ a function which is identically zero:
\be
\begin{array}{ll}
  0 ~=~ {}^\eta\langle \Tc(z) \Tc(w) \omega(0) \rangle \!\!&- ~ R \cdot \Big(~ 
  \tfrac{2}{(z-w)^2}  {}^\eta\langle \Tc(w) \omega(0) \rangle +
  \tfrac{1}{(z-w)} \tfrac{\partial}{\partial w} \,  {}^\eta\langle \Tc(w) \omega(0) \rangle
\\[.5em]
&\hspace{10em}
  + \tfrac{1}{z} \big(- \tfrac{\partial}{\partial w} \big) \, {}^\eta\langle \Tc(w) \omega(0) \rangle
  ~\Big) \ .
\end{array}
\ee
Substituting \eqref{eq:T-om-2pt}, the result is
\be
   {}^\eta\langle \Tc(z) \Tc(w) \omega(0) \rangle = \frac{2 \cdot R^2}{z\,w\,(z-w)^2} \ .
\ee
Note that this function is invariant under the exchange of $z$ and $w$ as it has to be. 
Repeating the above steps to constrain the OPE $\Tc(z)\eta(0)$ leads to
\be \label{eq:T-eta-OPE-aux1}
\begin{array}{ll}
\Tc(z)\eta(0)
&=~ z^{-2} \cdot \big\{ R \cdot \omega(0) + A \cdot \Omega(0) \big\} ~+~ 
z^{-1} \cdot\big\{ R \cdot(L_{-1}\eta)(0) + A \cdot (L_{-1}\omega)(0) \big\} 
\\[.5em]
& \qquad +~
\big\{ R \cdot(L_{-2}\eta)(0) + (A{+}1) \cdot (L_{-2}\omega)(0) - \tfrac32 (L_{-1}L_{-1}\omega)(0) \big\}
\\[.5em]
& \qquad +~
z \cdot \big\{ R \cdot(L_{-3}\eta)(0) + (A{+}1) \cdot (L_{-3}\omega)(0) 
+ (L_{-2}L_{-1}\omega)(0) 
\\[.5em]
& \hspace{5em}
- \tfrac32 (L_{-1}L_{-1}L_{-1}\omega)(0) 
\big\} ~+~ \Oc(z^2) \ ,
\end{array}
\ee
where $A \in \Cb$ is a new constant. The corresponding two-point correlator is
\be
  {}^\eta\langle \Tc(z) \eta(0) \rangle = A \cdot z^{-2} \ .
\ee
As before, the above OPE can be used to determine the OPE of $\Tc$ and $t$ with the result
\be \label{eq:T-t-OPE-aux1}
\begin{array}{ll}
\Tc(z)t(0)
&=~ z^{-4} \cdot \big\{ -5R \cdot \omega(0) + (9R-5(A{+}1)) \cdot \Omega(0) \big\}
\\[.5em]
& \qquad +~
z^{-2} \cdot\big\{ 2R \cdot t(0) + (R+2(A{+}1))\cdot \Tc(0) \big\}
\\[.5em]
& \qquad +~
z^{-1} \cdot \big\{ R \cdot (L_{-1}t)(0) + (A{+}1) \cdot (L_{-1}\Tc)(0) \big\}
 ~+~ \Oc(z^0) \ .
\end{array}
\ee
In a slightly tedious exercise the OPEs determine the three-point correlator of $\langle \Tc\Tc\eta \rangle$ by singularity subtraction to be
\be \label{eq:TTom-corr}
{}^\eta\langle \Tc(z) \Tc(w) \eta(0) \rangle~=~
\frac{-5R}{(z-w)^4} + \frac{R^2}{z^2\,w^2} + \frac{2RA}{z\,w\,(z-w)^2} + \frac{2RA}{z\,w^3} \ .
\ee
Because of the last summand, this expression is only invariant under $z \leftrightarrow w$ if $R=0$ or $A=0$. To see that actually $R=0$ is required, one can compute (in another such tedious exercise)
\be
   {}^\eta\langle \Tc(z) \Tc(w) t(0) \rangle -
   {}^\eta\langle \Tc(w) \Tc(z) t(0) \rangle 
   ~=~ 20 R (A{+}1) \cdot \frac{(z-w)(z^3+ \tfrac32 z^2 w + \tfrac32 z w^2 + w^3 )}{z^5 w^5} \ .
\ee
With $R=0$ the correlators ${}^\eta\langle \Tc(z) \Tc(w) \phi(0) \rangle$ are zero for $\phi$ any of $\omega,\eta,\Tc,t$. The OPEs \eqref{eq:T-om-OPE-aux3}, \eqref{eq:T-T-OPE-aux4}, \eqref{eq:T-eta-OPE-aux1} and \eqref{eq:T-t-OPE-aux1} reproduce the formulas in \eqref{eq:RW*-OPEs-T}.

\subsection{OPEs of generalised weight zero fields}

Next we consider the non-holomorphic OPE $\omega(z)\omega(0)$. The identity corresponding to \eqref{eq:Lm-T-comm} reads in this case
\be \label{eq:Lm-past-om}
  L_m M_z(\omega\otimes\phi) 
  = z^{m+1} \tfrac{\partial}{\partial z} M_z(\omega \otimes\phi) + (m+1)z^m \pi(\phi) \cdot \Omega + M_z(\omega\otimes L_m \phi)  \ ,
\ee
for all $\phi \in F$ and $m \in \Zb$, and analogously for $\overline L_m$. For $m=0$ we find in particular that
\be
  \big(L_0 - z \tfrac{\partial}{\partial z} \big) M_z(\omega \otimes \omega) 
  = 0 =
  \big(\overline L_0 - \bar z \tfrac{\partial}{\partial \bar z} \big) M_z(\omega \otimes \omega) \ .
\ee
The general solution to this first order differential equation reads 
\be
  M_z(\omega \otimes \omega ) = \exp\{ \ln(z) L_0 + \ln(\bar z) \overline L_0  \} \Psi
  \qquad \text{for some} ~~\Psi \in \overline F \ .
\ee
This shows that the leading term in the OPE is (take the component of $\Psi$ in $F^{(0)}$ to be $X \cdot \eta + Y \cdot \omega + B \cdot \Omega$)
\be
  \omega(z) \omega(0) = X \cdot \eta(0) + \big\{ Y + X\, \ln(|z|^2) \big\} \omega(0) + \big\{B + Y \ln(|z|^2) + \tfrac12 X \cdot \big(\ln(|z|^2)\big)^2 \big\} \Omega(0) + \dots
\ee
This expression can be used to compute the leading term in the OPE $\omega(z)\Tc(0)$, which we already know from \eqref{eq:RW*-OPEs-T} to be of order $\Oc(z^0)$. Using \eqref{eq:Lm-past-om} to move the $L_m$ modes past $\omega(z)$ one quickly finds the requirement that $X=Y=0$. This reproduces \eqref{eq:gen(00)-OPEs}.
For $\eta(z)\omega(0)$ we use \eqref{eq:Lm-T-comm} in the form
\be \label{eq:Lm-past-eta}
  L_m M_z(\eta\otimes\omega) 
  = z^{m+1} \tfrac{\partial}{\partial z} M_z(\eta \otimes\omega) + (m+1)z^m M_z(\omega \otimes\omega) + M_z(\eta\otimes L_m \omega)  \ ,
\ee
and analogously for $\overline L_m$. We can again make a general ansatz for the leading term in the OPE $\eta(z)\omega(0)$ and use the knowledge of $\omega(z)\omega(0)$ and $\eta(z)\Tc(0)$ (from \eqref{eq:RW*-OPEs-T}) to constrain the coefficients. The result is as stated in \eqref{eq:gen(00)-OPEs}, we skip the details.

\newpage

\newcommand\arxiv[2]      {\href{http://arXiv.org/abs/#1}{#2}}
\newcommand\doi[2]        {\href{http://dx.doi.org/#1}{#2}}
\newcommand\httpurl[2]    {\href{http://#1}{#2}}

\end{document}